\newtheorem{theorem}{Theorem}[section]
\newtheorem{corollary}[theorem]{Corollary}
\newtheorem{proposition}[theorem]{Proposition}
\newtheorem{lemma}[theorem]{Lemma}
\theoremstyle{definition}
\newtheorem{definition}[theorem]{Definition}
\newtheorem{example}[theorem]{Example}
\theoremstyle{remark}
\newtheorem{remark}[theorem]{Remark}
\numberwithin{equation}{section}
\DeclareSymbolFontAlphabet{\amsmathbb}{AMSb}%
\tikzset{
    >=stealth',
    punkt/.style={
           rectangle,
           rounded corners,
           draw=black, thick,
           text width=5.5em,
           minimum height=2em,
           text centered},
    punktl/.style={
           rectangle,
           rounded corners,
           draw=black, thick,
           text width=7em,
           minimum height=2em,
           text centered},
    pil/.style={
           ->,
           shorten <=4pt,
       shorten >=4pt
    },
    pildotted/.style={
           ->,
           shorten <=4pt,
           shorten >=4pt,
  dotted,
  },
    punktf/.style={
           rectangle,
           text width=4.0em,
           minimum height=1.5em,
           text centered},
    punktfTop/.style={
           rectangle,
           text width=4.0em,
           minimum height=1.5em,
           text centered,
           append after command={
               [thick,shorten >=0.2bp, shorten <=0.2bp]
               (\tikzlastnode.north west)edge(\tikzlastnode.north east)
}
    },
    punktfBot/.style={
           rectangle,
           text width=4.0em,
           minimum height=1.5em,
           text centered,
           append after command={
               [thick,shorten >=0.2bp, shorten <=0.2bp]
               (\tikzlastnode.south west)edge(\tikzlastnode.south east)
            }
    }
}
\newcommand{\ti}[1]{\widetilde{#1}}
\theoremstyle{plain}
\begin{document}

\title[Mean-field approximations in insurance]{Mean-field approximations in insurance}


\author{Philipp C.\ Hornung}
\address{Department of Mathematical Sciences, University of Copenhagen, Universitetsparken 5, DK-2100 Copenhagen, Denmark}
\curraddr{}
\email{\href{mailto:pcho@math.ku.dk}{pcho@math.ku.dk}}


\date{}

\dedicatory{\today}

\begin{abstract}
The calculation of the insurance liabilities of a cohort of dependent individuals in general requires the solution of a high-dimensional system of coupled linear forward integro-differential equations, which is infeasible for a larger cohort. However, by using a mean-field model, the high dimensional system of linear forward equations can be replaced by a low-dimensional system of non-linear forward integro-differential equations. We show that, subject to certain regularity conditions, the insurance liability viewed as a (conditional) expectation of a functional of an underlying jump process converges to its mean-field counterpart, as the number of individuals in the cohort goes to infinity. Examples from both life- and non-life insurance illuminate the practical importance of mean-field approximations.
\vspace{3mm}
\\\noindent\textbf{Keywords:} Reserving; Non-linear forward equations; Total Variation Chaos; McKean--Vlasov Jump Process

\end{abstract}

\maketitle


\section{Introduction}
When modelling the insurance liabilities of a cohort, the individual liability can depend on the other individuals' liabilities, either because the insurance payments of one individual depend on the insurance payments of the other individuals, while the individuals themselves are independent, or because the individuals themselves are dependent. Because of this dependency, the exact calculation of the insurance liabilities requires the numerical solution of high-dimensional systems of equations which often is computationally infeasible. This necessitates the development of viable approximations.

A popular way of approximating the model for a cohort of $n$ homogeneous but dependent individuals is to use a mean-field model, where instead of modelling the entire cohort jointly, one models only one typical individual. Since this mean-field model reduces the dimension to a single individual, the calculation of the corresponding mean-field insurance liabilities typically becomes computationally feasible. In order for the mean-field model to be a viable approximation of a single individual in the $n$-individual model, the individuals of any fixed group of size $k$ shold be asymptotically independent and the insurance liability of a single individual in the $n$-individual model should converge to the mean-field insurance liability whenever $n$ approaches infinity.

In the case of independent individuals with dependent insurance payments it suffices to show convergence of the insurance liabilites as has been done in~\cite{Djehiche&Loefdahl2016} with applications to reserve dependent payments in life insurance. In this paper we will focus on the case of dependent individuals, where the asymptotic independence and convergence of insurance liabilities, to the best of our knowledge, has received little to no attention in the literature, even though mean-field models have been used in many instances due to their usefulness in the modelling of risks with contagion effects. In the case of epidemic health insurance, for example,~\cite{Feng&Garrdio2011} and~\cite{FengEtAl2022} apply mean-field models to calculate portfolio-wide premiums and reserves, while~\cite{Francis&Steffensen2024} emphasises the importance of calculating individual premiums and reserves and develops an approach to do so using a mean-field model. This is followed by~\cite{Tran2024} which in a model similar to~\cite{Francis&Steffensen2024} analyses the individual's time to infection and performs parameter estimation. In the case of cyber insurance~\cite{FahrenwaldtEtAl2018} and~\cite{HillairetEtAl2022} apply different mean-field models to study the impact of network structures on cyber insurance losses, while~\cite{Hillairet&Lopez2021} uses a mean-field model to study the role and impact of countermeasures. None of these contributions formally verify that their proposed mean-field models are viable approximations of an appropriate $n$-individual model. The purpose and contribution of this paper is to provide a unifying characterisation of a general class of mean-field models and their corresponding $n$-individual models and within this class to prove asymptotic independence for a fixed group of $k$ individuals and the convergence of the insurance liability in the $n$-individual model to the corresponding mean-field insurance liability. This not only provides a theoretical foundation for the use of mean-field models as approximations within epidemic health and cyber insurance, but it also enables novel applications such as the modelling of disability insurance using collective information about health insurance claims proposed in~\cite{Furrer&Hornung2025}.

We consider a cohort of $n$ homogeneous individuals jointly modelled by an $n$-dimensional Markov jump process, with each coordinate representing one individual. The compensating measure of the jump process is assumed to be absolutely continuous with respect to the Lebesgue measure and we allow the intensity kernel to depend on collective quantities, such as cohort averages or functions thereof. Thus the individuals are dependent. The insurance payments of each individual are given by a functional of the corresponding individual's jump process path and the insurance liability of interest is one of three types. The cohort-wide liability defined as the unconditional expectation of the individual insurance payments, the grouped insurance liability defined as the conditional expectation of individual insurance payments given the individual's specific group in a grouping of the state or covariate space and the individual insurance liability defined as the conditional expectation of individual insurance payments given the individual's specific state or covariates. When using the forward method, the calculation of the insurance liability of a single individual thus requires one to solve the system of linear forward integro-differential equations satisfied by the occupation or transition probabilities of the joint $n$-dimensional Markov jump process. This is computationally infeasible when $n$ is large.

The corresponding mean-field model is obtained by replacing all collective quantities by their expectations. In this case the forward integro-differential equations become non-linear, but the dimension of the system remains the same as for a single individual. These non-linear equations are solved by the occupation- or transition probabilites of a distribution dependent non-linear Markov jump process, which is distribution dependent in the sense that the intensity kernel depends on the distribution of the process itself and non-linear Markov in the sense of~\cite{Rehmeier&Roeckner2024}. Thus by switching to the mean-field model, one changes the probabilistic model from a high-dimensional Markov jump process modelling $n$ dependent individuals to a low-dimensional non-linear Markov jump process modelling one typical individual. The mean-field liability can therefore be obtained as the (conditional) expectation of a functional of a non-linear Markov jump process path.

We show that under suitable regularity conditions the mean-field model exists, is unique and that any fixed group of $k$ individuals in the $n$-individual model becomes asymptotically independent. Furthermore we show that if the individual insurance payments are measurable and uniformly integrable, then the corresponding insurance liability converges to the corresponding mean-field insurance liability, when $n$ approaches infinity, in both the cohort-wide and grouped case and under some additional assumptions in the individual case as well. Additionally we prove that the cohort average of insurance payments converges in $L^2$ to the cohort-wide mean-field insurance liability and that the average of insurance payments of a group with certain characteristics converges in probability to the grouped mean-field insurance liability when $n$ approaches infinity. This shows that the diversification effect of large cohorts persists both on the level of the cohort and on the level of groupings, even though the individuals are dependent.

The key to these results is to show that for large $n$, the joint distribution of the jump processes for a fixed group of $k$ individuals in a cohort of $n$ individuals converges to the joint distribution of $k$ independent individuals with a distribution-dependent non-linear Markov jump process. This type of convergence is known as chaos and was first introduced by~\cite{Kac1956}, while the concept of distribution dependent processes for diffusions stems from~\cite{McKean1966,McKean1969}. Ever since these concepts have been further developed in many directions and have found numerous applications outside of insurance (for a very comprehensive review, see~\cite{Chaintron&Diez2022I,Chaintron&Diez2022II}). The convergence of distributions required by chaos can be metrised using different metrics, see~\cite{Hauray&Mischler2014}. While the original notion of chaos uses weak convergence, we choose the stronger notion of chaos in total variation allowing us to obtain stronger results.

While the papers~\cite{Shiga&Tanaka1985} and~\cite{Djehiche&Kaj1995} provide different chaosticity results specifically for jump processes, the assumptions on the distribution dependence are too strict for many actuarial applications, as they do not allow for distribution dependent jump sizes. We therefore adapt methods from the jump-diffusion literature. In particular, we modify a coupling construction introduced by~\cite{Graham1992-2} combined with an approach used by~\cite{AndreisEtAl2018} to innovatively prove chaos in total variation for a class of Markov jump processes with potentially unbounded jump sizes. This is also sufficient to obtain convergence of insurance liabilities in the cohort-wide and grouped case and given that the state space is countable, it is also sufficient for the individual case. 

If the state space is uncountable, this result is insufficient for the individual case and more work is required. For any fixed $k$ we can condition on the initial state or covariates for the first $k$ individuals. Under the assumption that the joint conditional distribution of the initial state or covariates of the remaining $n-k$ individuals given the inital state or covariates of the first $k$ individuals is chaotic, we show that the joint conditional distribution of the first $k$ individuals, given their intial state or covariates, converges in total variation to the joint distribution of $k$ independent individuals, each following the conditional distribution of a distribution dependent non-linear Markov jump process given the respective initial value or covariates. While this result is not surprising, it has (to the best of our knowledge) not previously received attention in the literature.

Finally we note that the mean-field liabilities considered in this paper can naturally be calculated via the forward method by solving the non-linear forward integro-differential equations for the occupation or transition probabilities of the distribution dependent non-linear Markov jump process. Since only the initial distribution is known and the intensity kernel depends on the occupation probabilities themselves, a backwards approach appears cumbersome. If the individuals are independent but the insurance payments are dependent, then~\cite{Djehiche&Loefdahl2016} shows that a backwards approach is possible. In that case the liability can be calculated by solving a non-linear version of Thiele's backward differential equation, which has been generalised to the non-Markovian case in~\cite{Christiansen&Djehiche2020} and the as-if-Markov case in~\cite{Christiansen&Djehiche2025}.

The structure of the paper is now as follows: Section~\ref{sec:example_infections} illustrates the programme of the paper using examples from epidemic health insurance and cyber insurance. Section~\ref{sec:Jump_processes} introduces distribution dependent non-linear Markov jump processes, and characterises their conditional distributions as the distributions of a certain class of linearised Markov jump processes. Section~\ref{sec:MF_approximation} proves total variation chaos in the unconditional case, while Section~\ref{sec:MF_approximation_conditional} proves the conditional case. Section~\ref{sec:convergence_liabilities} shows the convergence of the three types of insurance liabilities, verifies the two laws of large numbers and provides a central limit theorem followed by a return to the examples from Section~\ref{sec:example_infections}. Finally Section~\ref{sec:discussion} concludes with a discussion of the impact different types of chaos have on our results.

\section{Spread of infections in a network}\label{sec:example_infections}
Before developing the general results, we use two examples from epidemic health insurance and cyber insurance to showcase how mean-field models are used in actuarial applications, to pinpoint where the theoretical challenges lie and to indicate how we will solve them. In both cases insurance claims are caused by the spread of an infection in a network of individuals (either people or computers) and since the infection spreads throughout the network by individuals infecting each other, the probability of infection for a particular individual depends not only on their location in the network, but also on the health status of the other individuals. Thus any realistic stochastic model requires all individuals to be dependent. We consider a variation of the network-based Susceptible-Infected-Susceptible (SIS) model of~\cite{FahrenwaldtEtAl2018} as a driver for both epidemic health insurance payments inspired by~\cite{Francis&Steffensen2024} and cyber insurance payments inspired by~\cite{FahrenwaldtEtAl2018}.

For this let $(\Omega,\mathcal{F},\amsmathbb{P})$ be a probability space and consider a graph with nodes $\{1,\ldots,J\}$. The edges between the nodes can be represented by the $J\times J$ adjacency matrix $A=(a_{ij})_{i,j=1,\ldots,J}$, where the entries $a_{ij}$ take the values zero or one and we assume that the diagonal entries $a_{ii}$ are always equal to one for $i\in\{1,\ldots,J\}$. We can think of the graph as a network, the nodes as locations and the edges as connections between locations where $a_{ij}=1$ means that individuals in location $i$ can be infected by individuals in location $j$. We now consider $n$ individuals whose location in the network at time zero is given by the $\{1,\ldots,J\}^n$-valued random vector $I^n_0=(I^{1,n}_0,\ldots,I^{n,n}_0)$ and for simplicity we assume that the individuals stay in the same location and cannot move as time passes. Additionally, each individual $\ell=1,\ldots,n$ has associated to it a jump process $Z^{\ell,n}$ which takes values in the state space $\{0,1\}$, where $0$ means susceptible and $1$ means infected. The collective model of the entire cohort of individuals can be seen as an $n$-dimensional jump process $X^n=((Z^{1,n},I^{1,n}),\ldots,(Z^{n,n},I^{n,n}))$ on state space $E^n$ with $E=\{0,1\}\times\{1,\ldots,J\}$. The processes $Z^{\ell,n}$ are governed by transition rates $(\mu_{01}(t,i,X_{t-}^n))_{i=1,\ldots,J}$ and $(\mu_{10}(t,i,X_{t-}^n))_{i=1,\ldots,J}$ given by 
\begin{align*}
       \mu_{01}(t,i,X_{t-}^n)=\beta_i\sum_{j=1}^J a_{ij}\bigg(\frac{1}{n}\sum_{\ell=1}^n\mathds{1}_{(Z_{t-}^{\ell,n}=1,I^{\ell,n}_{t-}=j)}\bigg)\quad\text{and}\quad\mu_{10}(t,i,X_{t-}^n)=\gamma_i,
\end{align*}
for positive constants $(\beta_i)_{i=1,\ldots,J}$ and $(\gamma_{i})_{i=1,\ldots,J}$. The recovery rates $\mu_{10}(t,i,X_{t-}^n)$ are location dependent, but constant, while the infection rates $\mu_{01}(t,i,X_{t-}^n)$ are a little more complicated. The term 
\begin{align*}
       \frac{1}{n}\sum_{\ell=1}^n\mathds{1}_{(Z_{t-}^{\ell,n}=1,I^{\ell,n}_{t-}=j)}
\end{align*}
is equal to the proportion of individuals in location $j$ that are infected. Thus the infection rate of an individual in location $i$ depends on the proportion of infected individuals in all connected locations $j$. Thus there is a contagion effect that depends on the collective: The higher the proportion of infected individuals in connected locations, the larger the rate of infection for a susceptible individual in location $i$.

If we want to describe the whole joint process $X^n$ precisely we can write
\begin{align}\label{eq:example:n-ind}
       X^n_t=Y_0^{n}+\int_{(0,t]\times E^n}(y^{1:n}-X^n_{s-})Q^n(\mathrm{d}t,\mathrm{d}y^{1:n})
\end{align}
where $Y_0^n$ has some exchangeable initial distribution $\zeta^n$ and $Q^n$ is a random counting measure with state space $E^n$ and compensating measure 
\begin{align*}
       L^n(\mathrm{d}t,\mathrm{d}y^{1:n})=\sum_{\ell=1}^n\bigg(\mu_t(X_{t-}^{\ell,n},X_{t-}^n,\mathrm{d}y_{\ell})\prod_{j=1,j\neq\ell}^n\delta_{\{X_{t-}^{j,n}\}}(\mathrm{d}y_{j})\bigg)\mathrm{d}t.
\end{align*}
Here $\mu_t(X_{t-}^{\ell,n},X_{t-}^n,\mathrm{d}(z,i))$ is the individual intensity kernel given by 
\begin{align*}
       \mu_t(X_{t-}^{\ell,n},X_{t-}^n,\mathrm{d}(z,i))=&\mathds{1}_{(Z_{t-}^{\ell,n}=0)}\mu_{01}(t,I_{t-}^{\ell,n},X_{t-}^n)\delta_{\{1,I_{t-}^{\ell,n}\}}(\mathrm{d}(z,i))\\
       &+\mathds{1}_{(Z_{t-}^{\ell,n}=0)}\gamma_{I_{t-}^{\ell,n}}\delta_{\{1,I_{t-}^{\ell,n}\}}(\mathrm{d}(z,i))
\end{align*}
It follows that while the process $X^n$ describing the entire collective is a Markov process the individual processes $X^{\ell,n}$ generally are not Markov by themselves.

\begin{example}[Epidemic health insurance]
       In the case of epidemic health insurance, the process $X^n$ describes the spread of an infectious disease amongst a population whose individuals are located at different geographic locations. Thus each node in the graph $\{1,\ldots,J\}$ can be seen as a geographic location, such as a country or city, while $a_{ij}=1$ means that location $i$ is susceptible to infection from location $j$. Each individual has contractual payments given by 
       \begin{align*}
              B^{\ell,n}(\mathrm{d}t)=-\mathds{1}_{(Z_{t-}^{\ell,n}=0)}\pi(I^{\ell,n}_{t-})\mathrm{d}t+\mathds{1}_{(Z_{t-}^{\ell,n}=1)}b(I^{\ell,n}_{t-})\mathrm{d}t,
       \end{align*}
       where $\pi:\{1,\ldots,J\}\rightarrow (0,\infty)$ is a positive, location dependent premium rate paid continuously while susceptible and $b:\{1,\ldots,J\}\rightarrow (0,\infty)$ is a positive, location dependent benefit rate received continuously while infected.

       Assume that the expiration of the insurance contract is at time $T>0$ and that $r\in\amsmathbb{R}$. Throughout the duration of the contract, the individual insurance liability at time $\tau\in [0,T]$, which the insurane company should be able to calculate for reserving purposes, is given by 
       \begin{align*}
              V^{\ell,n}_z(\tau,i)&=\amsmathbb{E}\bigg[\int_{\tau}^Te^{-r(t-\tau)}B^{\ell,n}(\mathrm{d}t)\bigg|Z_{\tau}^{\ell,n}=z,I^{\ell,n}_{\tau}=i\bigg]\\
              &=-\pi(i)\int_{\tau}^Te^{-r(t-\tau)}p_{z0}^i(\tau,t)\mathrm{d}t+b(i)\int_{\tau}^Te^{-r(t-\tau)}p_{z1}^i(\tau,t)\mathrm{d}t,
       \end{align*}
       where $p_{z0}^i(\tau,t)=\amsmathbb{E}[\mathds{1}_{(Z_{t}^{\ell,n}=0)}|Z_{\tau}^{\ell,n}=z,I^{\ell,n}_{\tau}=i]$. If the individuals were independent and $(Z^{\ell,n},I^{\ell,n})$ was a Markov process, then $p_{z0}^i(\tau,t)$ would correspond to the transition probabilities and could be calculated by solving the classical Kolmogorov forward differential equations. But the individuals are not independent and $(Z^{\ell,n},I^{\ell,n})$ is not Markov. Instead it is only the process $X^n$ describing the state of the entire cohort that is Markov. While this means that we in principle could obtain $p_{z0}^i(\tau,t)$ as a marginal of the transition probabilities of the process $X^n$, which can be calculated by solving the classical Kolmogorov forward differential equations, this system grows exponentially in dimension with the number of individuals, rendering this approach computationally infeasible.
\end{example}

\begin{example}[Cyber insurance]
       In the case of cyber insurance, the process $X^n$ describes the spread of malware amongst individual computers who are part of different companies. Thus each node in the graph $\{1,\ldots,J\}$ can be seen as a company, while $a_{ij}=1$ means that company $i$ is susceptible to malware from company $j$. Assume that the time of expiration of the insurance contract is $T>0$. Then the insured losses caused by a malware infection of computer $\ell$ throughout the duration of the contract are given by
       \begin{align*}
              B^{\ell,n}(T)=\sum_{m=1}^{N^{\ell,n}_{01}(T)}Y^{\ell,n}_m(I^{\ell,n}_0),
       \end{align*}
       where $N_{01}^{\ell,n}(T):=\#\{t\in (0,T]:Z^{\ell,n}_{t-}=0,Z^{\ell,n}_{t}=1\}$ is the number of times computer $\ell$ has been infected with malware throughout the period $(0,T]$. Thus every time computer $\ell$ is infected with malware, it causes a loss $Y^{\ell,n}_m(I^{\ell,n}_0)$. For each $i=1,\ldots,J$ and $\ell=1,\ldots,n$ the sequence $(Y^{\ell,n}_m(i))_{m\in\amsmathbb{N}}$ is an iid.\,sequence of $[0,\infty)$-valued random variables with distribution $r(i,\mathrm{d}y)=h_i(y)\mathrm{d}y$. The sequences are independent of each other and of $X^n$ and identically distributed across different individuals. Since all individuals and claim sizes are identically distributed, so are the losses. Thus when pricing this type of contract the insurance company should at least be able to calculate the cohort-wide expected loss of a computer in the network given by
       \begin{align*}
              V^n(0):=\amsmathbb{E}[B^{1,n}(T)].
       \end{align*}
       Since the different companies might have different levels of risk, charging the same premium $V^n(T)$ to all the companies might not be fair. Therefore the insurance company should ideally be able to calculate the individual expected loss of a computer in the network, given that it is in a specific company and that it is free of malware at contract inception:
       \begin{align*}
              V^{1,n}_0(0,i):=\amsmathbb{E}[B^{1,n}(T)|Z^{1,n}_0=0,I^{1,n}_0=i].
       \end{align*}
       With our assumptions, we can obtain the expressions:
       \begin{align*}
              V^{1,n}(0)&=\int_0^T\amsmathbb{E}\bigg[\mathds{1}_{(Z_{t-}^{\ell,n}=0)}\mu_{01}(t,i,X_{t-}^n)\int_{[0,\infty)}y\, r(I^{\ell,n}_0,\mathrm{d}y)\bigg]\mathrm{d}t\\
              V^{1,n}_0(0,i)&=\int_0^T\amsmathbb{E}\Big[\mathds{1}_{(Z_{t-}^{\ell,n}=0)}\mu_{01}(t,i,X_{t-}^n)|Z_0^{\ell,n}=z,I^{\ell,n}_0=i\Big]\int_{[0,\infty)}y \,r(i,\mathrm{d}y)\mathrm{d}t.
       \end{align*}
       As in the previous example, the calculation of the expectations in the integrals requires the calculation of the transition probabilities of $X^n$, yielding the same computational issues.
\end{example}

One way around this computational issue for large $n$ is to approximate the $n$-individual model~(\ref{eq:example:n-ind}) by a mean-field model. If one can argue that for large $n$ we have that
\begin{align}\label{eq:example_approx}
       \frac{1}{n}\sum_{\ell=1}^n\mathds{1}_{(Z^{\ell,n}_{t-}=1,I^{\ell,n}_{t-}=j)}\approx \amsmathbb{P}(Z^{\ell,n}_{t-}=1,I^{\ell,n}_{t-}=j),
\end{align}
then the former can be replaced by the latter, an approach which is often used without formal verification. Since the initial distribution $\zeta^n$ is exchangeable all individuals have the same marginal initial distribution $\zeta^{n,1}(\mathrm{d}x):=\zeta^n(\mathrm{d}x\times E^{n-1})$ and since the intensities are the same for all indviduals, they are all identically distributed. Thus we obtain an approximating model $\bar{X}:=(\bar{Z},\bar{I})$ for a typical individual, where $(\bar{Z}_0,\bar{I}_0)$ has distribution $\zeta$ (for now think $\zeta(\mathrm{d}x)=\zeta^n(\mathrm{d}x\times E^{n-1})$) and $\bar{Z}$ is governed by the intensities
\begin{align*}
       \mu_{01}(t,i,\mathbf{\bar{p}}_t^{\zeta})=\beta_i\sum_{j=1}^J a_{ij}\bar{p}_1^{\zeta}(j,t) \quad\text{and}\quad\mu_{10}(t,i,\mathbf{\bar{p}}_t^{\zeta})=\gamma_i,
\end{align*}
where $\mathbf{\bar{p}}_t^{\zeta}:=(\bar{p}_1^{\zeta}(1,t),\ldots,\bar{p}_1^{\zeta}(J,t))$ and $\bar{p}_1^{\zeta}(i,t):=\amsmathbb{P}(\bar{Z}_t=1,\bar{I}_t=i)$ are the occupation probabilities. The process $\bar{X}=(\bar{Z},\bar{I})$ with state space $E$ can thus be described as 
\begin{align}\label{eq:mf-sis-jp}
       \bar{X}_t=\bar{Y}_0+\int_{(0,t]\times E}(y-\bar{X}_{s-})\bar{Q}(\mathrm{d}t,\mathrm{d}y)
\end{align}
where $\bar{Y}_0=(\bar{Z}_0,\bar{I})$ has initial distribution $\zeta$ and $\bar{Q}$ is a random counting measure with state space $E$ and compensating measure 
\begin{align*}
       \bar{L}(\mathrm{d}t,\mathrm{d}x)=\mu_t(\bar{X}_{t-},\bar{\mathbf{p}}^{\zeta}_t,\mathrm{d}x)\mathrm{d}t.
\end{align*}
Here $\mu_t(\bar{X}_{t-},\bar{\mathbf{p}}_t^{\zeta},\mathrm{d}x)$ is the individual intensity kernel given by 
\begin{align}\label{eq:mf-sis-intensity}
       \begin{split}
       \mu_t(\bar{X}_{t-},\bar{\mathbf{p}}_t^{\zeta},\mathrm{d}(z,i))=&\mathds{1}_{(\bar{Z}_{t-}=0)}\mu_{01}(t,\bar{I}_{t-},\bar{\mathbf{p}}_t^{\zeta})\delta_{\{1,\bar{I}_{t-}\}}(\mathrm{d}(z,i))\\
       &+\mathds{1}_{(\bar{Z}_{t-}=1)}\gamma_{\bar{I}_{t-}}\delta_{\{1,\bar{I}_{t-}\}}(\mathrm{d}(z,i))
       \end{split}
\end{align}
Assuming that $\bar{X}$ is well-defined, we see that the intensities of $\bar{Z}$ no longer depend on the stochastic state the entire cohort, but instead they depend on the deterministic probabilities $\mathbf{\bar{p}}_t^{\zeta}$. As a consequence the process $\bar{X}$ only depends on its own state, thus achieving a dimension reduction, but also depends on its own distribution and is therefore no Markov process in the classical sense. Apart from the future only depending on the past throught the present state, the transition probabilities of a Markov process are independent of the initial distribution. The latter property breaks down in the case of $\bar{X}$ because the transition intensities at time $t>0$ depend on the occupation probabilities $\mathbf{\bar{p}}_t^{\zeta}$, which of course vary with the initial distribution $\zeta$. This becomes crystal clear from a description of how to calculate the occupation and transition probabilities. First one finds $\mathbf{\bar{p}}_t^{\zeta}$ by solving the following system of non-linear Kolmogorov forward differential equations
\begin{align}
       \begin{split}\label{eq:example:nke}
       \frac{\mathrm{d}}{\mathrm{d}t}\bar{p}_0^{\zeta}(i,t)&=\gamma_i\bar{p}_1^{\zeta}(i,t)-\mu_{01}(t,i,\mathbf{\bar{p}}^{\zeta}_t)\bar{p}_0^{\zeta}(i,t),\quad \bar{p}_0^{\zeta}(i,0)=\zeta(0,i)\\
       \frac{\mathrm{d}}{\mathrm{d}t}\bar{p}_1^{\zeta}(i,t)&=\mu_{01}(t,i,\mathbf{\bar{p}}_t^{\zeta})\bar{p}_0^{\zeta}(i,t)-\gamma_i\bar{p}^{\zeta}_1(i,t),\quad \bar{p}^{\zeta}_1(i,0)=\zeta(1,i)
       \end{split}
\end{align}
for $i=1,\ldots,J$ and $t\in[0,T]$. Note that this system is coupled across different values of $i$ through the vector $\mathbf{\bar{p}}_t^{\zeta}$. The system can be solved numerically using standard methods for non-linear ordinary differential equations. After this, one can then find the transition probabilities $\ti{p}^{\zeta}_{zy}(i,\tau,t):=\amsmathbb{P}(\bar{Z}_t=y|\bar{Z}_{\tau}=z,\bar{I}_{\tau}=i)$ by solving the system of linear Kolmogorov forward differential equations
\begin{align}
       \begin{split}\label{eq:example:lke}
       \frac{\mathrm{d}}{\mathrm{d}t}\ti{p}_{00}^{\zeta}(i,\tau,t)&=\gamma_i\ti{p}_{01}^{\zeta}(i,\tau,t)-\mu_{01}(t,i,\mathbf{\bar{p}}_t^{\zeta})\ti{p}_{00}^{\zeta}(i,\tau,t),\quad \ti{p}_{00}^{\zeta}(i,\tau,\tau)=1\\
       \frac{\mathrm{d}}{\mathrm{d}t}\ti{p}_{01}^{\zeta}(i,\tau,t)&=\mu_{01}(t,i,\mathbf{\bar{p}}_t^{\zeta})\ti{p}^{\zeta}_{00}(i,\tau,t)-\gamma_i\ti{p}_{01}^{\zeta}(i,\tau,t),\quad \ti{p}^{\zeta}_{01}(i,\tau,\tau)=0\\
       \frac{\mathrm{d}}{\mathrm{d}t}\ti{p}_{11}^{\zeta}(i,\tau,t)&=\mu_{01}(t,i,\mathbf{\bar{p}}_t^{\zeta})\ti{p}_{10}^{\zeta}(i,\tau,t)-\gamma_i\ti{p}_{11}^{\zeta}(i,\tau,t),\quad \ti{p}_{11}^{\zeta}(i,\tau,\tau)=1\\
       \frac{\mathrm{d}}{\mathrm{d}t}\ti{p}_{10}^{\zeta}(i,\tau,t)&=\gamma_i\ti{p}^{\zeta}_{11}(i,\tau,t)-\mu_{01}(t,i,\mathbf{\bar{p}}_t^{\zeta})\ti{p}_{10}^{\zeta}(i,\tau,t),\quad \ti{p}^{\zeta}_{10}(i,\tau,\tau)=0
       \end{split}
\end{align}
for $i=1,\ldots,J$ and $0\leq\tau\leq t\leq T$. Note that this system is not coupled across $i$, since $(\mathbf{\bar{p}}_t^{\zeta})_{t\in[0,T]}$ is known. Again this system can be solved using standard methods.

The intuition behind the dependence of the transition probabilites on the initial distribution can be explained as follows. The initial distribution determines the composition of the cohort for which the process $\bar{X}$ represents the typical individual. Changing the initial distribution thus corresponds to changing the composition of the cohort, which the typical individual is a representative of. This illustrates why the mean-field approximation is useful. It reduces the dimension of the problem to one typical individual without completely ignoring the characteristics of the cohort that this individual represents. 

\begin{example}[Epidemic health insurance]
       When using the mean-field model in the case of epidemic health insurance the contractual payments of the typical individual are given by 
       \begin{align*}
              \bar{B}(\mathrm{d}t)=-\mathds{1}_{(\bar{Z}_{t-}=0)}\pi(\bar{I}_{t-})\mathrm{d}t+\mathds{1}_{(\bar{Z}_{t-}=1)}b(\bar{I}_{t-})\mathrm{d}t
       \end{align*}
       while the insurance liability of the typical individual is given by 
       \begin{align*}
              \bar{V}_z(\tau,i)=-\pi(i)\int_{\tau}^T e^{-r(t-\tau)}\ti{p}_{z0}(i,\tau,t)\mathrm{d}t+b(i)\int_{\tau}^T e^{-r(t-\tau)}\ti{p}_{z1}(i,\tau,t)\mathrm{d}t.
       \end{align*}
       Thus to calculate the insurance liability we need to solve the two systems (\ref{eq:example:nke}) and (\ref{eq:example:lke}) and then approximate the integrals. As this is computationally feasible, the mean-field approximation $V^{1,n}_z(\tau,i)\approx \bar{V}_z(\tau,i)$ would be a useful and viable alternative.
\end{example}

\begin{example}[Cyber insurance]
       When using the mean-field model in the case of cyber insurance, the insured loss of a typical computer now takes the form 
       \begin{align*}
              \bar{B}(T)=\sum_{m=1}^{\bar{N}_{01}(T)}\bar{Y}_m(\bar{I}),
       \end{align*}
       where $(\bar{Y}_m(i))_{m\in\amsmathbb{N}}$ for each $i=1,\ldots,J$ is an iid.\,sequence independent of $\bar{X}$ of $[0,\infty)$-valued random variables with distribution $r(i,\mathrm{d}y)=h_i(y)\mathrm{d}y$. The cohort-wide and individual expected losses are now given by 
       \begin{align*}
              \bar{V}(0):=\amsmathbb{E}[\bar{B}(T)]\quad\text{and}\quad \bar{V}_0^n(0,i):=\amsmathbb{E}[\bar{B}(T)|\bar{Z}_0=0,\bar{I}=i],
       \end{align*}
       which are equal to
       \begin{align*}
              \bar{V}(0)&=\int_0^T\sum_{i=1}^J\bar{p}^{\zeta}_0(i,t)\mu_{01}(t,i,\bar{\mathbf{p}}^{\zeta}_t)\int_{[0,\infty)}y\,r(i,\mathrm{d}y)\mathrm{d}t\\
              \bar{V}_0(0,i)&=\int_0^T\ti{p}^{\zeta}_{00}(i,0,t)\mu_{01}(t,i,\bar{\mathbf{p}}^{\zeta}_t)\int_{[0,\infty)}y\,r(i,\mathrm{d}y)\mathrm{d}t.
       \end{align*}
       Again the calculation of these two quantities requires the solution of the two systems (\ref{eq:example:nke}) and (\ref{eq:example:lke}) and an approximation of the integrals. As this is computationally feasible, the mean-field approximations $V^{1,n}(0)\approx\bar{V}(0)$ and $V^{1,n}_0(0,i)\approx \bar{V}_0(0,i)$ would be useful and viable alternatives.
\end{example}

Thus it is clear that the mean-field approximation can be very useful in practice when dealing with situations where the insured individuals can no longer be assumed to be independent, but in order to actually apply them, we have to formally justify their use. This requires us to show three things.
\begin{enumerate}
       \item The distribution of $\bar{X}$ denoted by $\bar{\amsmathbb{Q}}_{0,\zeta}:=\bar{X}(\amsmathbb{P})$ exists and is unique.
       \item The approximation (\ref{eq:example_approx}) can be justified in an appropriate sense.
       \item It holds that $V_z^{1,n}(\tau,i)\rightarrow\bar{V}_z(\tau,i)$ and
       \begin{align*}
              V^{1,n}(0)\rightarrow\bar{V}(0)\quad\text{and}\quad V^{1,n}_0(0,i)\rightarrow\bar{V}_0(0,i).
       \end{align*}
       for $n\rightarrow\infty$.
\end{enumerate}
Point (1) can be achieved by an appropriate fixed point argument. Point (2) can be achieved by showing that the distribtion of $(X^{1,n},\ldots,X^{k,n})$ given by $\amsmathbb{Q}^{n,k}_{0,\zeta^n}:=(X^{1,n},\ldots,X^{n,k})(\amsmathbb{P})$ for any fixed $k\in\amsmathbb{N}$ converges to the product measure $\bar{\amsmathbb{Q}}^{\otimes k}_{0,\zeta}$, where $\bar{\amsmathbb{Q}}_{0,\zeta}:=\bar{X}(\amsmathbb{P})$, either weakly or in total variation. Point (3) then requires us to show convergence of both unconditional and conditional expectations. In point (2) we have in particular already shown $\amsmathbb{Q}^{n,1}_{0,\zeta^n}\rightarrow\bar{\amsmathbb{Q}}_{0,\zeta}$ weakly or in total variation, so in order to conclude convergence of unconditional expectations we have to show that the insurance payments are sufficiently regular to strengthen convergence of probability measures to convergence of moments. In order to conclude convergence of conditional moments, we have to show that 
\begin{align*}
       \amsmathbb{Q}^{n,1}_{0,\zeta^n}(\mathrm{d}f|X^{\circ}_{\tau}=x)&\rightarrow\bar{\amsmathbb{Q}}_{0,\zeta}(\mathrm{d}f|X^{\circ}_{\tau}=x)
\end{align*}
either weakly or in total variation for any $\tau\in [0,T]$ for some regular versions of these conditional probabilities and then similarly be able to strengthen that convergence to convergence of moments.

On a more fundamental level we are also interested in verifying that the diversification effect of large portfolios persists, even though the individuals are no longer independent and can infect each other. In particular we would like to obtain a result like 
\begin{align*}
       \frac{1}{n}\sum_{\ell=1}^n B^{\ell,n}(T)\rightarrow \bar{V}(0),
\end{align*}
where the convergence is either almost surely, in $L^p$ or in probability. This would confirm that the risks remain insurable and that the cohort-wide mean-field liability actually equals the limit of the average insurance payments.

This wish list now provides the outline for the rest of the paper. In Section~\ref{sec:Jump_processes} we prove an existence and uniqueness result for a class of distribution dependent non-linear Markov jump processes on a standard Borel state space, which can be interpreted as a class of mean-field models. In Section~\ref{sec:MF_approximation} we prove total variation convergence of the marginal distributions of $k$ individuals in an appropriate class of $n$-individual models to the $k$-fold product distribution of a distribution dependent non-linear Markov jump process, which enables us to prove the convergence of conditional distributions in Section~\ref{sec:MF_approximation_conditional}. In Section~\ref{sec:convergence_liabilities} we show the desired convergence of both cohort-wide, individual and grouped insurance liabilities and a law of large numbers for insurance payments of quite general form and specialise these to typical non-life and life insurance payments while revisiting the examples of this section. Finally in Section~\ref{sec:discussion} we discuss the implications for our results when switching to weak convergence instead.

\section{Jump processes}\label{sec:Jump_processes}
Let $(\Omega,\mathcal{F},\amsmathbb{P})$ be a probability space, let $T>0$ be a terminal time and let $(E,\mathcal{B}(E))$ be a standard Borel space with $\mathcal{P}(E)$ denoting the set of probability measures on $(E,\mathcal{B}(E))$. The set of non-explosive jump process paths on $[\tau,T]$ for $0\leq\tau<T$ with state space $E$ is given by: 
\begin{definition}
       Let $0\leq\tau <T$ and let $f:[\tau,T]\rightarrow E$. Then the function $f$ is a jump process path if and only if
       \begin{enumerate}
              \item[(i)] $t\mapsto f(t)$ is càdlàg and piecewise constant.
              \item[(ii)] The jump times $\tau_0:=\tau$ and $\tau_n:=\inf\{t>\tau_{n-1}:f(t)\neq f(\tau_{n-1})\}$ with the convention $\inf\emptyset :=\infty$ satisfy $\lim_{n\rightarrow\infty}\tau_n=\infty$.
       \end{enumerate}
       The set of non-explosive jump process paths is denoted by $\amsmathbb{H}([\tau,T],E)$.
\end{definition}
We equip the set $\amsmathbb{H}([\tau,T],E)$ with the sigma-algebra $\mathcal{B}(\amsmathbb{H}([\tau,T],E))$ generated by the coordinate projections $\pi_t(f)=f(t)$. Note that $\amsmathbb{H}([\tau,T],E)$ is a Borel subset of the Skorokhod space $\amsmathbb{D}([\tau,T],E)$ of càdlàg paths equipped with the Borel sigma-algebra generated by the $J_1$-topology and that $\mathcal{B}(\amsmathbb{H}([\tau,T],E))$ is equal to the Borel sigma-algebra generated by the corresponding $J_1$-subspace topology. The set $\mathcal{P}(\amsmathbb{H}([\tau,T],E))$ is the set of probability measures on $(\amsmathbb{H}([\tau,T],E),\mathcal{B}(\amsmathbb{H}([\tau,T],E)))$. The canonical jump process $X^{\circ}:\amsmathbb{H}([\tau,T],E)\rightarrow\amsmathbb{H}([\tau,T],E)$ is given by the identity mapping $X^{\circ}(f)=f$ (we will also use the notation $X^{\circ}_t:=\pi_t(X^{\circ})$) and any probability measure $\amsmathbb{Q}\in\mathcal{P}(\amsmathbb{H}([\tau,T],E))$ can be seen as a jump process distribution.

Any random variable $X:\Omega\rightarrow\amsmathbb{H}([\tau,T],E)$ can be viewed as a non-explosive jump process and for the purposes of this paper, we construct jump processes and their distributions  via stochastic differential equations driven by a random counting measure (and its associated marked point process).

\subsection{Markov jump processes}\label{subsec:linear_MJP}
For an initial time $\tau\in [0,T]$ and an initial distribution $\zeta\in\mathcal{P}(E)$ we consider the jump process $X^{\tau,\zeta}$ given by
\begin{align}\label{eq:SDE-JD}
       X_t^{\tau,\zeta}&=Y_0^{\tau,\zeta}+\int_{(\tau,t]\times A}(y-X^{\tau,\zeta}_{s-})Q(\mathrm{d}s,\mathrm{d}y),\quad t\in [\tau,T]
\end{align}
where $Y_0^{\tau,\zeta}$ has distribution $\zeta$ and $Q$ is a random counting measure with state space $E$ and compensating measure 
\begin{align*}
       L(\mathrm{d}t,\mathrm{d}y)=\mu_t(X^{\tau,\zeta}_{t-},\mathrm{d}y)\mathrm{d}t.
\end{align*}
Here $\mu:[0,T]\times E\times \mathcal{B}(E)\rightarrow [0,\infty)$ is the intensity kernel and it is assumed that $\mu$ satisfies the following:
\begin{enumerate}
       \item[(i)] $(t,x)\mapsto \mu_t(x,B)$ is measurable for all $B\in\mathcal{B}(E)$
       \item[(ii)] $B\mapsto \mu_t(x,B)$ is a positive and bounded measure for all $(t,x)\in [\tau,T]\times E$
       \item[(iii)] $\mu_t(x,\{x\})=0$
\end{enumerate}
The first two assumptions ensure that $\mu$ is a kernel, while the third assumption ensures that there are no trivial jumps. The intensity measure can be decomposed into the Borel-measurable jump intensity $\lambda_t(x):=\mu_t(x,E)$ and the probability kernel $r_t(x,\mathrm{d}y):=\frac{\mu_t(x,\mathrm{d}y)}{\lambda_t(x)}$, which is a regular version of the conditional distribution of the next jump destination given that there is a jump at time $t$ and that $X_{t-}^{\tau,\zeta}=x$.

The random counting measure $Q$ is given by
\begin{align*}
       Q(\mathrm{d}t,\mathrm{d}y)=\sum_{i=1}^{\infty}\delta_{\{(T_i^{\tau,\zeta},Y_i^{\tau,\zeta})\}}(\mathrm{d}t,\mathrm{d}y),
\end{align*}
for a marked point process $(T_i^{\tau,\zeta},Y_i^{\tau,\zeta})_{i\in\amsmathbb{N}}\subset ([\tau,T]\cup\{\infty\})\times(E\cup\{\nabla\})$. Since $Q$ has no trivial jumps, the sequence of jump times $(T_i^{\tau,\zeta})_{i\in\amsmathbb{N}}\subset [\tau,T]\cup\{\infty\}$ coincides with the jump times of $X^{\tau,\zeta}$, while the marks $(Y_i^{\tau,\zeta})_{i\in\amsmathbb{N}}\subset E\cup\{\nabla\}$ coincide with the value of $X_t^{\tau,\zeta}$ whenever $T_i^{\tau,\zeta}\leq t <T_{i+1}^{\tau,\zeta}$. We can thus write
\begin{equation}\label{eq:SDE:sum}
       X_t^{\tau,\zeta}=\sum_{i=0}^{\infty}Y_n^{\tau,\zeta}\mathds{1}_{(T_i^{\tau,\zeta}\leq t < T^{\tau,\zeta}_{i+1})},
\end{equation}
where $T_0^{\tau,\zeta}:=\tau$ and $Y_0^{\tau,\zeta}$ is the initial value. The mark $\nabla$ can be interpreted as the event that never happens and $Y_i^{\tau,\zeta}=\nabla$ if and only if $T_i^{\tau,\zeta}=\infty$. The jump process distribution of (\ref{eq:SDE-JD}) will be denoted by $\amsmathbb{Q}_{\tau,\zeta}:=X^{\tau,\zeta}(\amsmathbb{P})$ and it is an element of $\mathcal{P}(\amsmathbb{H}([\tau,T],E))$. We now have the following result:

\begin{theorem}\label{th:SDE-existence}
       Assume that there exists a constant $C_{\lambda}>0$ such that $\lambda_t(x)\leq C_{\lambda}$ for all $t\in [0,T]$ and $x\in E$. Then the jump process (\ref{eq:SDE-JD}) has a unique non-explosive jump process distribution $\amsmathbb{Q}_{\tau,\zeta}\in\mathcal{P}(\amsmathbb{H}([\tau,T],E))$.
\end{theorem}
\begin{proof}
       The random counting measure $Q$ can be directly identified with its associated marked point process $\big(T_i^{\tau,\zeta},Y_i^{\tau,\zeta}\big)_{i\in\amsmathbb{N}_0}$, see Section~2.2 of~\cite{Last&Brandt1995} and by Theorem~8.1.4 in~\cite{Last&Brandt1995} the distribution of $\big(T_i^{\tau,\zeta},Y_i^{\tau,\zeta}\big)_{i\in\amsmathbb{N}_0}$ is uniquely determined by the compensating measure $L$ of $Q$ and the initial distribution $\zeta\in\mathcal{P}(E)$. Theorem~8.2.2 of~\cite{Last&Brandt1995} yields existence of this distribution which by Lemma~\ref{lem:MPP-JP} is equivalent to existence and uniqueness of the jump process distribution of~(\ref{eq:SDE-JD}). The non-explosiveness follows, as
       \begin{align*}
              \amsmathbb{E}[Q((\tau,T]\times E)]=\amsmathbb{E}\bigg[\int_{\tau}^T\mu_t(X_{t-}^{\tau,\zeta},E)\mathrm{d}t\bigg]\leq C_{\lambda}(T-\tau)<\infty.
       \end{align*}
\end{proof}
\begin{remark}
       Note that the assumption $\mu_t(x,\{x\})=0$ is not required for Theorem~\ref{th:SDE-existence} to apply. In that case the same jump process distribution might be constructed using different random counting measures, but the same random counting measure will not yield different jump process distributions, see Lemma~\ref{lem:MPP-JP}.
\end{remark}

A special case of (\ref{eq:SDE-JD}) that is of particular interest is the jump process
\begin{align}\label{eq:SDEx}
       X_t^{\tau,x}&=x+\int_{(\tau,t]\times E}(y-X_{s-}^{\tau,x})\,Q(\mathrm{d}s,\mathrm{d}y),\quad t\in[\tau,T],
\end{align}
which has a deterministic initial value, corresponding to $\zeta=\delta_{\{x\}}$. Let $\amsmathbb{Q}_{\tau,x}:=X^{\tau,x}(\amsmathbb{P})$ denote the jump process distribution of (\ref{eq:lDDSDE}) and $(\mathcal{F}^{\tau,\zeta}_t)_{\tau\leq t\leq T}$ given by $\mathcal{F}^{\tau,\zeta}_t:=\sigma(X_s^{\tau,\zeta}:\tau\leq s\leq T)$ the natural filtration of $X^{\tau,\zeta}$. Then we have that:
\begin{theorem}\label{th:SDE-cond}
       Let $B\in\mathcal{B}(\amsmathbb{H}([s,T],E))$. For fixed $\tau\in[0,T]$ the family $(\amsmathbb{Q}_{\tau,x})_{x\in E}$ constitues a regular conditional probability of $\amsmathbb{P}((X^{\tau,\zeta}_t)_{t\in[\tau,T]}\in B|X_{\tau}^{\tau,\zeta}=x)$. Furthermore for any $0\leq \tau\leq s\leq t\leq T$ it holds that
       \begin{align*}
              \amsmathbb{P}((X^{\tau,\zeta}_t)_{t\in[s,T]}\in B|\mathcal{F}^{\tau,\zeta}_s)=\amsmathbb{Q}_{s,X^{\tau,\zeta}_s}(X^{\circ}\in B),\quad \amsmathbb{P}-\text{a.s.}
       \end{align*}
\end{theorem}
\begin{proof}
       For a proof see Appendix~\ref{sec:Char_JP_Dist}.
\end{proof}

Theorem~\ref{th:SDE-cond} shows that $X^{\tau,\zeta}$ is a Markov process. Note that the conditional distribution of $(X^{\tau,\zeta}_t)_{t\in [s,T]}$ given $X_s^{\tau,\zeta}=x$ for any $s\in[\tau,T]$ has a regular version which does not depend on the initial distribution $\zeta$. This means that $(\amsmathbb{Q}_{s,x})_{x\in E}$ is a regular conditional distribution of $(X_t^{\tau,\zeta})_{s\leq t\leq T}$ given $X_s^{\tau,\zeta}=x$ for \textit{any} $\zeta\in\mathcal{P}(E)$.

When it comes to practical calculations, we are mostly interested in the transition probabilities $p_t^{\tau,x}:=\pi_t(\amsmathbb{Q}_{\tau,x})$ and occupation probabilities $p_t^{\tau,\zeta}:=\pi_t(\amsmathbb{Q}_{\tau,\zeta})$. The former satisfy the well-known (see~\cite{Feller1940,FeinbergEtAl2014}) forward integro-differential equations given by:

\begin{proposition}\label{prop:SDEx_forward}
       Let $B\in\mathcal{B}(E)$. The transition probability $p_t^{\tau,x}(B)$ satsifies the forward integro-differential equation 
       \begin{align*}
              \frac{\mathrm{d}}{\mathrm{d}t}p_t^{\tau,x}(B)=&\int_{E\setminus B} \mu_t(y,B)p_t^{\tau,x}(\mathrm{d}y)-\int_{B} \mu_t(y,E\setminus B)p_t^{\tau,x}(\mathrm{d}y),
       \end{align*}
       with $p_{\tau}^{\tau,x}(B)=\delta_{\{x\}}(B)$ for $t\in[\tau,T]$, $x\in E$ and $\tau\in[0,T]$.
\end{proposition}

Since Theorem~\ref{th:SDE-cond} directly implies
\begin{align}\label{eq:marginals}
       p_{t}^{\tau,\zeta}(B)&=\int_Ep_t^{\tau,x}(B)\zeta(\mathrm{d}x).
\end{align}
and the transition probabilities $(p_t^{\tau,x})_{x\in E}$ do not depend on $\zeta$, one can easily calculate the occupation probabilities $p_t^{\tau,\zeta}$ for any $\zeta\in\mathcal{P}(E)$, once $(p_t^{\tau,x})_{x\in E}$ is obtained, but by using (\ref{eq:marginals}) we can also prove that $p_t^{\tau,\zeta}$ can be calculated by directly solving the following integro-differential equations:

\begin{proposition}\label{prop:SDE-forward}
       Let $B\in\mathcal{B}(E)$. The occupation probability $p_t^{\tau,\zeta}(B)$ satsifies the forward integro-differential equation 
       \begin{align*}
              \frac{\mathrm{d}}{\mathrm{d}t}p_t^{\tau,\zeta}(B)=&\int_{E\setminus B}\mu_t(x,B)p_t^{\tau,\zeta}(\mathrm{d}x)-\int_{B}\mu_t(x,E\setminus B)p_t^{\tau,\zeta}(\mathrm{d}x),
       \end{align*}
       with $p_{\tau}^{\tau,\zeta}(B)=\zeta(B)$ for $t\in [\tau,T]$ and $(\tau,\zeta)\in [0,T]\times \mathcal{P}(E)$.
\end{proposition}
\begin{proof}
       By (\ref{eq:marginals}) and Proposition~\ref{prop:SDEx_forward} we have that:
       \begin{align*}
       p_t^{\tau,\zeta}(B)=&\int_Ep_{\tau}^{\tau,x}(B)\zeta(\mathrm{d}x)+\int_{(\tau,t]}\int_{E\setminus B}\mu_s(y,B)\int_Ep_s^{\tau,x}(\mathrm{d}y)\zeta(\mathrm{d}x)\mathrm{d}s\\
              &-\int_{(\tau,t]}\int_{B}\mu_s(y,E\setminus B)\int_Ep_s^{\tau,x}(\mathrm{d}y)\zeta(\mathrm{d}x)\mathrm{d}s\\
              =&\zeta(B)+\int_{(\tau,t]}\int_{E\setminus B}\mu_s(y,B)p_s^{\tau,\zeta}(\mathrm{d}y)\mathrm{d}s\\
              &-\int_{(\tau,t]}\int_{B}\mu_s(y,E\setminus B)p_s^{\tau,\zeta}(\mathrm{d}y)\mathrm{d}s.
       \end{align*}
       Differentiating with respect to $t$ finishes the proof.
\end{proof}

\subsection{Distribution dependent non-linear Markov jump processes}\label{susec:nonlinear_MJP}
For an initial time $\tau\in [0,T]$ and an initial distribution $\zeta\in\mathcal{P}(E)$ we now consider the distribution dependent non-linear Markov jump process $\bar{X}^{\tau,\zeta}$ given by
\begin{align}\label{eq:DDSDE}
       \bar{X}_t^{\tau,\zeta}&=\bar{Y}_0^{\tau,\zeta}+\int_{(\tau,t]\times E}(y-\bar{X}_{s-}^{\tau,\zeta})\,\bar{Q}(\mathrm{d}s,\mathrm{d}y),\quad t\in [\tau,T],
\end{align}
where $\bar{Y}_0^{\tau,\zeta}$ has distribution $\zeta$ and $\bar{Q}$ is a random counting measure with compensating measure
\begin{align*}
       \bar{L}(\mathrm{d}t,\mathrm{d}y)=\mu_t(\bar{X}^{\tau,\zeta}_{t-},\bar{p}_t^{\tau,\zeta},\mathrm{d}y)\mathrm{d}t.
\end{align*}
The jump process distribution of (\ref{eq:DDSDE}) will be denoted by $\bar{\amsmathbb{Q}}_{\tau,\zeta}=\bar{X}^{\tau,\zeta}(\amsmathbb{P})$ and it is an element of $\mathcal{P}(\amsmathbb{H}([\tau,T],E))$. Contrary to (\ref{eq:SDE-JD}), the process $\bar{X}^{\tau,\zeta}$ is distribution dependent, because $\bar{p}_t^{\tau,\zeta}:=\pi_t(\bar{\amsmathbb{Q}}_{\tau,\zeta})=\bar{X}_t^{\tau,\zeta}(\amsmathbb{P})$ denotes the occupation probabilities of $\bar{X}^{\tau,\zeta}$ itself, which means that the process depends on its own distribution via the intensity measure $\mu$. As before $\mu:[0,T]\times E\times \mathcal{P}(E)\times\mathcal{B}(E)\rightarrow [0,\infty)$ is the intensity kernel satisfying 
\begin{enumerate}
       \item[(i)] $(t,x,\rho)\mapsto \mu_t(x,\rho,B)$ is measurable for all $B\in\mathcal{B}(E)$
       \item[(ii)] $B\mapsto \mu_t(x,\rho,B)$ is a positive and bounded measure for all $(t,x,\rho)\in [\tau,T]\times E\times\mathcal{P}(E)$
       \item[(iii)] $\mu_t(x,\rho,\{x\})=0$.
\end{enumerate}
It can still be decomposed into $\mu_t(\mathrm{d}y,x,\rho)=\lambda_t(x,\rho)r_t(x,\rho,\mathrm{d}y)$. The only difference to (\ref{eq:SDE-JD}) is that $\mu$ is measure-dependent. Similarly to~(\ref{eq:SDE:sum}), $\bar{Q}$ and $\bar{X}^{\tau,\zeta}$ can be written as
\begin{align*}
       \bar{Q}(\mathrm{d}t,\mathrm{d}y)=\sum_{i=1}^{\infty}\delta_{\{(\bar{T}_i^{\tau,\zeta},\bar{Y}_i^{\tau,\zeta})\}}(\mathrm{d}t,\mathrm{d}y)\quad\text{and}\quad \bar{X}_t^{\tau,\zeta}=\sum_{i=0}^{\infty}\bar{Y}_n^{\tau,\zeta}\mathds{1}_{(\bar{T}_i^{\tau,\zeta}\leq t < \bar{T}^{\tau,\zeta}_{i+1})},
\end{align*}
where $(\bar{T}_i^{\tau,\zeta},\bar{Y}_i^{\tau,\zeta})_{i\in\amsmathbb{N}_0}$ is the associated marked point process including the initial time $\bar{T}_0^{\tau,\zeta}:=\tau$ and initial value $\bar{Y}^{\tau,\zeta}_0$.

As the process now depends on its own distribution, Theorem~\ref{th:SDE-existence} cannot be used to establish existence and uniqueness of the jump process distribution $\bar{\amsmathbb{Q}}_{\tau,\zeta}$. Instead we have the following result:

\begin{theorem}\label{th:DDSDE-existence}
       Assume that there exists constants $C_{\lambda},C_{\mu}>0$ such that $\lambda_t(x,\rho)\leq C_{\lambda}$ for all $t\in [0,T]$, $x\in E$, $\rho\in\mathcal{P}(E)$ and such that 
       \begin{align*}
              d_{TV}(\mu_t(x_1,\rho_1,\mathrm{d}y),\mu_t(x_2,\rho_2,\mathrm{d}y))\leq C_{\mu}(\mathds{1}_{(x_1\neq x_2)}+d_{BL}(\rho_1,\rho_2))
       \end{align*}
       for all $x_1,x_2\in E$ and $\rho_1,\rho_2\in\mathcal{P}(E)$. Then the jump process $\bar{X}^{\tau,\zeta}$ given by (\ref{eq:DDSDE}) has a unique non-explosive jump process distribution $\bar{\amsmathbb{Q}}_{\tau,\zeta}\in\mathcal{P}(\amsmathbb{H}([\tau,T],E))$.
\end{theorem}
\begin{proof}
       For the proof, see Section~\ref{subsec:proof-DDSDE-existence}.
\end{proof}
\begin{remark}
       Here $d_{TV}$ denotes the total variation distance on the space of positive bounded measures $\mathcal{M}_b(E)$ while $d_{BL}$ denotes the bounded-Lipschitz distance on the space of probability measures $\mathcal{P}(E)$. Both distances are defined as follows:
       \begin{align*}
              d_{TV}(\mu,\nu)&:=\frac{1}{2}\sup_{f\in\text{BM}}\bigg\{\bigg|\int_S f(x)\mu(\mathrm{d}x)-\int_S f(x)\nu(\mathrm{d}x)\bigg|\bigg\}\\
              d_{BL}(\mu,\nu)&:=\frac{1}{2}\sup_{f\in\text{BL}}\bigg\{\bigg|\int_S f(x)\mu(\mathrm{d}x)-\int_S f(x)\nu(\mathrm{d}x)\bigg|\bigg\},
       \end{align*}
       where $\text{BM}$ is the class of measurable functions $f:E\rightarrow[-1,1]$ and $\text{BL}$ is the class of Lipschitz functions $f:E\rightarrow [-1,1]$ with a Lipschitz constant of at most 1. While the bounded-Lipschitz distance typically is defined without the factor of $\frac{1}{2}$, this normalisation is convenient. For additional information about both metrics, see Appendix~\ref{sec:A_distances}.
\end{remark}

The occupation probability $\bar{p}^{\tau,\zeta}_t$ satisfies a forward integro-differential equation, but contrary to Proposition~\ref{prop:SDE-forward} the equations are now non-linear.
\begin{proposition}\label{prop:DDSDE-forward}
       Let $B\in\mathcal{B}(E)$. The transition probability $\bar{p}_t^{\tau,\zeta}(B)$ satsifies the forward integro-differential equation 
       \begin{align*}
              \frac{\mathrm{d}}{\mathrm{d}t}\bar{p}_t^{\tau,\zeta}(B)=&\int_{E\setminus B}\mu_t(y,\bar{p}^{\tau,\zeta}_{t},B)\bar{p}_t^{\tau,\zeta}(\mathrm{d}y)-\int_{B} \mu_t(y,\bar{p}_{t}^{\tau,\zeta},E\setminus B)\bar{p}_t^{\tau,\zeta}(\mathrm{d}y)
       \end{align*}
       with $\bar{p}_{\tau}^{\tau,\zeta}(B)=\zeta(B)$ for $t\in [\tau,T]$ and $(\tau,\zeta)\in [0,T]\times\mathcal{P}(E)$.
\end{proposition}
\begin{proof}
       Let $B\in\mathcal{B}(E)$. By an argument made on p.150 in~\cite{Jacobsen2006} it holds that 
       \begin{align*}
              \mathds{1}_{B}(\bar{X}^{\tau,\zeta}_t)=\mathds{1}_B(\bar{X}^{\tau,\zeta}_{\tau})+\int_{(0,t]}\int_E(\mathds{1}_B(y)-\mathds{1}_B(X_{s-}^{\tau,\zeta}))\mu_s(X_{s-}^{\tau,\zeta},\bar{p}_s^{\tau,\zeta},\mathrm{d}y)ds+M_t^B,
       \end{align*}
       where $M_t^B$ is given by 
       \begin{align*}
              M_t^B=\int_{(\tau,t]\times E}(\mathds{1}_B(y)-\mathds{1}_B(X_{s-}^{\tau,\zeta}))(Q(\mathrm{d}t,\mathrm{d}y)-L(\mathrm{d}t,\mathrm{d}y)).
       \end{align*}
       Since $\mathds{1}_B$ is bounded and $\amsmathbb{E}[\bar{Q}((\tau,T]\times E)]<\infty$ we have that $M_t^B$ is a mean-zero martingale with respect to the natural filtration of $\bar{X}_t^{\tau,\zeta}$ and thus taking the expectation on both sides yields
       \begin{align*}
              \amsmathbb{E}[\phi(\bar{X}_t)]&=\amsmathbb{E}[\mathds{1}_B(\bar{X}^{\tau,\zeta}_{\tau})]+\amsmathbb{E}\bigg[\int_{(0,t]}\int_E(\mathds{1}_B(y)-\mathds{1}_B(X_{s-}^{\tau,\zeta}))\mu_s(X_{s-}^{\tau,\zeta},\bar{p}_s^{\tau,\zeta},\mathrm{d}y)ds\bigg].
       \end{align*}
       Applying Fubini, this can also be written as
       \begin{align*}
              \bar{p}^{\tau,\zeta}_t(B)=&\bar{p}^{\tau,\zeta}_{\tau}(B)+\int_{(\tau,t]}\int_{E} \mu_s(x,\bar{p}^{\tau,\zeta}_{s},B)\bar{p}^{\tau,\zeta}_s(\mathrm{d}x)-\int_{B} \mu_s(x,\bar{p}^{\tau,\zeta}_{s},E)\bar{p}^{\tau,\zeta}_s(\mathrm{d}x)\mathrm{d}s.
       \end{align*}
       Using that $\bar{p}^{\tau,\zeta}_{\tau}=\zeta$ and that each of the inner integrals can be decomposed into two terms, of which two cancel out with each other, we arrive at 
       \begin{align*}
              \bar{p}^{\tau,\zeta}_t(B)=&\zeta(B)+\int_{(\tau,t]}\int_{E\setminus B} \mu_s(x,\bar{p}^{\tau,\zeta}_{s},B)\bar{p}^{\tau,\zeta}_s(\mathrm{d}x)-\int_{B} \mu_s(x,\bar{p}^{\tau,\zeta}_{s},E\setminus B)\bar{p}^{\tau,\zeta}_s(\mathrm{d}x)\mathrm{d}s.
       \end{align*}
       The result follows by differentiating with respect to $t$.
\end{proof}

Similarly to (\ref{eq:SDEx}), we can for each $x\in E$ consider the linearised jump process
\begin{align}\label{eq:lDDSDE}
       \ti{X}_t^{\tau,x}&=x+\int_{(\tau,t]\times E}(y-\ti{X}^{\tau,x}_t)\,\ti{Q}(\mathrm{d}t,\mathrm{d}y),
\end{align}
where $\ti{Q}$ is a random counting measure with compensating measure
\begin{align*}
       \ti{L}(\mathrm{d}t,\mathrm{d}y)=\mu_t(\ti{X}^{\tau,x}_{t-},\bar{p}_t^{\tau,\zeta},\mathrm{d}y)\mathrm{d}t,
\end{align*}
and where $\bar{p}_{t}^{\tau,\zeta}=\pi_t(\bar{\amsmathbb{Q}}_{\tau,\zeta})$ is considered known and fixed. The process $\ti{X}^{\tau,x}$ thus is a Markov jump process, since it does not depend on its own distribution. We can therefore invoke Theorem~\ref{th:SDE-existence} to obtain existence and uniqueness of the jump process distribution $\ti{\amsmathbb{Q}}_{\tau,\zeta,x}:=\ti{X}^{\tau,x}(\amsmathbb{P})$ of~(\ref{eq:lDDSDE}) for all $x\in E$. We can now obtain the following analogue to Theorem~\ref{th:SDE-cond}, where $\bar{\mathcal{F}}_t^{\tau,\zeta}:=\sigma(\bar{X}_t^{\tau,\zeta}:\tau\leq t\leq T)$.

\begin{theorem}\label{th:DDSDE-cond}
       Let $B\in\mathcal{B}(\amsmathbb{H}([s,T],E))$. For fixed $\tau\in[0,T]$ the family $(\ti{\amsmathbb{Q}}_{\tau,\zeta,x})_{x\in E}$ is a regular conditional probability of $\amsmathbb{P}((\bar{X}_t^{\tau,\zeta})_{t\in[\tau,T]}\in B|\bar{X}_{\tau}^{\tau,\zeta}=x)$. Furthermore for any $0\leq \tau\leq s\leq t\leq T$ it holds that
       \begin{align*}
              \amsmathbb{P}((\bar{X}^{\tau,\zeta}_t)_{t\in[s,T]}\in B|\bar{\mathcal{F}}^{\tau,\zeta}_s)=\ti{\amsmathbb{Q}}_{s,\bar{p}^{\tau,\zeta}_s,\bar{X}^{\tau,\zeta}_s}(X^{\circ}\in B),\quad \amsmathbb{P}-\text{a.s.}
       \end{align*}
\end{theorem}
\begin{proof}
       See Appendix~\ref{sec:Char_JP_Dist}.
\end{proof}
Theorem~\ref{th:DDSDE-cond} proves two central facts. First it shows that the conditional distribution of $(\bar{X}_t^{\tau,\zeta})_{s\leq t\leq T}$ given $\bar{X}_s^{\tau,\zeta}=x$ for any $s\geq \tau$ has a regular version equal to $\ti{\amsmathbb{Q}}_{s,\bar{p}_s^{\tau,\zeta},x}$, which is the jump process distribution of the linearised jump process (\ref{eq:lDDSDE}) started at time $s$ and based on $\bar{\amsmathbb{Q}}_{s,\bar{p}_s^{\tau,\zeta}}$. Secondly it shows that $\bar{X}^{\tau,\zeta}$ is a non-linear Markov process in the sense of~\cite{Rehmeier&Roeckner2024}. Contrary to the classical Markov property of Theorem~\ref{th:SDE-cond} the conditional distributions of $\bar{X}^{\tau,\zeta}$ now depend on $\zeta$, which means that the future $(\bar{X}_t^{\tau,\zeta})_{s\leq t\leq T}$ not only depends on the past through the present state $\bar{X}^{\tau,\zeta}_s$, but also through the initial distribution $\zeta$. Thus changing the initial distribution at time $\tau$ changes all subsequent conditional jump process distributions.

As a consequence of the non-linear Markov property, the transition probabilities of $\bar{X}^{\tau,\zeta}$ are given by $\ti{p}_t^{\tau,\zeta}(x,\cdot):=\ti{X}_t^{\tau,x}(\amsmathbb{P})$. Since $\ti{X}^{\tau,x}$ is a Markov jump process we can invoke Proposition~\ref{prop:SDEx_forward} to conclude that, given $(\bar{p}_t^{\zeta,\tau})_{t\in[\tau,T]}$, the transition probabilities $\ti{p}_t^{\tau,\zeta}(x,B)$ satisfy the linear forward integro-differential equations:

\begin{proposition}\label{prop:lDDSDE_forward}
       Let $B\in\mathcal{B}(E)$. Given $(\bar{p}_t^{\tau,\zeta})_{t\in[\tau,T]}$ the transition probabilities $\ti{p}_x^{\tau,\zeta}(t,B)$ satisfy the forward integro-differential equation 
       \begin{align*}
              \frac{\mathrm{d}}{\mathrm{d}t}\ti{p}_{t}^{\tau,\zeta}(x,B)=&\int_{E\setminus B} \mu_t(y,\bar{p}_{t}^{\tau,\zeta},B)\ti{p}_{t}^{\tau,\zeta}(x,\mathrm{d}y)-\int_{B} \mu_t(y,\bar{p}_{t}^{\tau,\zeta},E\setminus B)\ti{p}_{t}^{\tau,\zeta}(x,\mathrm{d}y),
       \end{align*}
       with $\ti{p}_{\tau}^{\tau,\zeta}(x,\cdot)=\delta_{\{x\}}$ for $t\in[\tau,T]$, $x\in E$ and $\tau\in[0,T]$.
\end{proposition}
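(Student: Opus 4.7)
The plan is to reduce the claim to Proposition~\ref{prop:SDEx_forward} by freezing the measure flow. The key observation is that, once the family $(\bar{\eta}_t^{\tau,\zeta})_{t\in[\tau,T]}$ is regarded as known and fixed, the kernel
\begin{align*}
       \ti{\mu}_t(y,\mathrm{d}z):=\mu_t(y,\bar{\eta}_t^{\tau,\zeta},\mathrm{d}z)
\end{align*}
no longer depends on any random object and the linearised SDE~(\ref{eq:lDDSDE}) is therefore an instance of (\ref{eq:SDEx}) with deterministic starting value $x$, driven by the time-inhomogeneous kernel $\ti{\mu}$.

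First I would verify that $\ti{\mu}$ falls under the hypotheses of Proposition~\ref{prop:SDEx_forward}. Writing $\ti{\lambda}_t(y):=\ti{\mu}_t(y,A)=\lambda_t(y,\bar{\eta}_t^{\tau,\zeta})$ and $\ti{r}_t(y,\mathrm{d}z)=r_t(y,\bar{\eta}_t^{\tau,\zeta},\mathrm{d}z)$, Assumption~\ref{ass:regularity}(1) immediately yields the uniform bound $\ti{\lambda}_t(y)\leq C_{\lambda}$, and Jensen's inequality (valid since $q\geq 1$ and $\ti{r}_t(y,\cdot)$ is a probability measure) gives $\int_A\|z\|\,\ti{r}_t(y,\mathrm{d}z)\leq C_r^{1/q}$, so the uniform first-moment bound is in place. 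For Borel-measurability of $(t,y)\mapsto \ti{\mu}_t(y,B)$ on $[\tau,T]\times E$, I would combine the continuity in $(y,\rho)$ implied by Assumption~\ref{ass:regularity}(2) with the measurability in $t$ of the flow $t\mapsto \bar{\eta}_t^{\tau,\zeta}\in(\mathcal{P}^1(E),d_W)$, which follows from the càdlàg-in-$t$ evaluation of $\bar{X}^{\tau,\zeta}$ and the definition of the Wasserstein topology.

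With these hypotheses verified, Theorem~\ref{th:SDE-existence} applies to (\ref{eq:lDDSDE}) with kernel $\ti{\mu}$ and deterministic initial value $x\in E$, so $\ti{X}^{\tau,x}$ is well-defined and its time marginal at $t$ is exactly $\ti{\eta}_t^{\tau,\zeta}(x,\cdot)$. Invoking Proposition~\ref{prop:SDEx_forward} with $\mu$ replaced by $\ti{\mu}$ and the initial point $x\in E$, I obtain
\begin{align*}
       \frac{\mathrm{d}}{\mathrm{d}t}\ti{\eta}_{t}^{\tau,\zeta}(x,B)=&\int_{E\setminus B}\int_A \mathds{1}_B(y+z)\,\ti{\mu}_t(y,\mathrm{d}z)\,\ti{\eta}_{t}^{\tau,\zeta}(x,\mathrm{d}y)\\
       &-\int_{B}\int_A \mathds{1}_{E\setminus B}(y+z)\,\ti{\mu}_t(y,\mathrm{d}z)\,\ti{\eta}_{t}^{\tau,\zeta}(x,\mathrm{d}y),
\end{align*}
which, after substituting back $\ti{\mu}_t(y,\mathrm{d}z)=\mu_t(y,\bar{\eta}_t^{\tau,\zeta},\mathrm{d}z)$, is precisely the equation claimed. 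The initial condition $\ti{\eta}_\tau^{\tau,\zeta}(x,\cdot)=\delta_{\{x\}}$ follows from $\ti{X}_\tau^{\tau,x}=x$ a.s.

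I do not expect any serious obstacle here; the real work has already been done in Theorem~\ref{th:SDE-existence} and Proposition~\ref{prop:SDEx_forward}, and in the discussion preceding the statement, which explicitly notes the absence of distribution dependence once the flow is frozen. The only subtlety is the joint measurability of $(t,y)\mapsto \ti{\mu}_t(y,B)$, which one should check carefully so that the application of Proposition~\ref{prop:SDEx_forward} to the frozen kernel is legitimate.
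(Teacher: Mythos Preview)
Your proposal is correct and follows exactly the approach the paper takes: the text immediately preceding the proposition already notes that once $(\bar{\eta}_t^{\tau,\zeta})_{t\in[\tau,T]}$ is frozen, (\ref{eq:lDDSDE}) has no distribution dependence and one can invoke Proposition~\ref{prop:SDEx_forward} directly. You are in fact more careful than the paper in spelling out why the frozen kernel $\ti{\mu}$ inherits the boundedness and measurability required by Proposition~\ref{prop:SDEx_forward}, which the paper leaves implicit.
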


If one wishes to solve the equations of Proposition~\ref{prop:lDDSDE_forward} numerically, then there are two possibilities. The first is to solve the non-linear forward equations of Proposition~\ref{prop:DDSDE-forward} first, in order to obtain $(\bar{p}_t^{\tau,\zeta})_{t\in[\tau,T]}$, and then to solve the linear forward equations. Both can often be done by employing the same numerical schemes which one would use for the linear versions of Proposition~\ref{prop:SDEx_forward} and Proposition~\ref{prop:SDE-forward}. Alternatively, we can utilise Theorem~\ref{th:DDSDE-cond} to obtain the following analogue of (\ref{eq:marginals}):
\begin{align}\label{eq:DD_marginals}
       \bar{p}_{t}^{\tau,\zeta}(B)&=\int_E\ti{p}_t^{\tau,\zeta}(x,B)\zeta(\mathrm{d}x).
\end{align}
We can thus subsitute (\ref{eq:DD_marginals}) into the equations of Proposition~\ref{prop:lDDSDE_forward} to obtain a new non-linear equation which could be solved directly. 

\begin{remark}
       Note that Propositions~\ref{prop:DDSDE-forward} and~\ref{prop:lDDSDE_forward} make no statement about uniqueness of the non-linear equations. Thus numerical solutions should always be treated with care.
\end{remark}

\begin{remark}
       The forward equations of Propositions~\ref{prop:DDSDE-forward} and~\ref{prop:lDDSDE_forward} are the pure jump equivalent of the non-linear and linearised Fokker-Planck-Kolmogorov equations known from McKean-Vlasov diffusion processes, see~\cite{Rehmeier&Roeckner2024}. 
\end{remark}

In practical applications the distribution dependence of the intensity kernel often comes through quantities such as 
\begin{align*}
       \int_E h(t,x,y)\rho(\mathrm{d}y)
\end{align*}
for some function $h:[\tau,T]\times E^2\rightarrow \amsmathbb{R}^d$, while the intensity kernel also has a density with respect to some measure $\nu$ on $E$. In particular we are interested in intensity kernels of the form
\begin{align*}
       \mu_t(x,\rho,\mathrm{d}y)=q_t^y\bigg(x,\int_E h(t,x,z)\rho(\mathrm{d}z)\bigg)\nu(\mathrm{d}y),
\end{align*}
where $q_t^y:E\times\amsmathbb{R}^d\rightarrow[0,\infty)$ for each $(t,y)\in[\tau,T]\times E$  and when they satisfy the assumptions of Theorem~\ref{th:DDSDE-existence}. The following result provides sufficient conditions in terms of $q_t^y$ and $h$.
\begin{proposition}\label{prop:regularity}
       Let $d_E$ be a metric on $E$ whose induced topology generates the Borel sigma-algebra $\mathcal{B}(E)$ and let $\|u\|=\sum_{i=1}^d|u_i|$ for $u\in\amsmathbb{R}^d$. Assume that
       \begin{enumerate}
              \item[(i)] There exists $C_1$ such that
              \begin{align*}
                     \|h(t,x_1,y_1)-h(t,x_2,y_2)\|\leq C_1(\mathds{1}_{(x_1\neq x_2)}+\min(1,d_E(y_1,y_2)))
              \end{align*}
              for all $(x_1,y_1),(x_2,y_2)\in E^2$.
              \item[(ii)] There exists $C_2:E\rightarrow [0,\infty)$ with $\int_EC_2(y)\nu(\mathrm{d}y)<\infty$ such that 
              \begin{align*}
                     |q_t^y(x_1,u_1)-q_t^y(x_2,u_2)|\leq C_2(y)(\mathds{1}_{(x_1\neq x_2)}+\|u_1-u_2\|)
              \end{align*}
              for all $x_1,x_2\in E$ and $u_1,u_2\in\amsmathbb{R}^d$.
              \item[(iii)] There exists $C_{\lambda}>0$ such that 
              \begin{align*}
                     \int_E q_t^y(x,u)\kappa(\mathrm{d}y)\leq C_{\lambda}
              \end{align*}
              for all $(x,u)\in E\times\amsmathbb{R}^d$.
       \end{enumerate}
       Then the intensity kernel 
       \begin{align*}
              \mu_t(x,\rho,\mathrm{d}y)=q_t^y\bigg(x,\int_E h(t,x,z)\rho(\mathrm{d}z)\bigg)\nu(\mathrm{d}y),
       \end{align*}
       satisfies the assumptions of Theorem~\ref{th:DDSDE-existence}.
\end{proposition}
\begin{proof}
       We start with the Lipschitz assumption of Theorem~\ref{th:DDSDE-existence}. Let $f:E\rightarrow[-1,1]$ be measurable. Then by (ii) we obtain
       \begin{align*}
              \bigg|\int_E f(y)q_t^y(x_1,u_1)\nu(\mathrm{d}y)-&\int_E f(y) q_t^y(x_2,u_2)\nu(\mathrm{d}y)\bigg|\\
              &\leq \int_E |q_t^y(x_1,u_1)-q_t^y(x_2,u_2)|\nu(\mathrm{d}y)\\
              &\leq\bigg(\int_E C_2(y)\nu(\mathrm{d}y)\bigg)(\mathds{1}_{(x_1\neq x_2)}+\|u_1-u_2\|).
       \end{align*}
       As this inequality holds for all such $f$ the inequality also holds when taking the supremum and multiplying with $\frac{1}{2}$. Now define $F:[\tau,T]\times E\times\mathcal{P}(E)\rightarrow\amsmathbb{R}^d$ by $F(t,x,\rho):=\int_Eh(t,x,y)\rho(\mathrm{d}y)$ and let $\rho_1,\rho_2\in\mathcal{P}(E)$ be arbitrary. Then we have that
       \begin{align*}
              \|F(t,x_1,\rho_1)-F(t,x_2,\rho_2)\|\leq&\sum_{j=1}^d\bigg|\int_E h_j(t,x_2,y)\rho_1(\mathrm{d}y)-\int_E h_j(t,x_2,y)\rho_2(\mathrm{d}y)\bigg|\\
              &+\int_E\|h(t,x_1,y)-h(t,x_2,y)\|\rho_1(\mathrm{d}y)
       \end{align*}
       For the first term we note that (i) implies that $y\mapsto |h_j(t,x,y)|$ is bounded by some $K>0$ for all $j=1,\ldots,d$ and $(t,x)\in[\tau,T]\times E$. Thus we have that $y\mapsto\frac{h_j(t,x_2,y)}{\max(K,C_1)}$ is in the class $\text{BL}$ for all $j=1,\ldots,d$ and $(t,x)\in[\tau,T]\times E$. By using the definition of $d_{BL}$ we see that the sum is bounded by $2d\max(K,C_1)d_{BL}(\rho_1,\rho_2)$. The second term is by (i) bounded by $C_1\mathds{1}_{(x_1\neq x_2)}$ and we thus arrive at
       \begin{align*}
              \|F(x_1,\rho_1)-F(x_2,\rho_2)\|\leq 2d\max(K,C_1)(\mathds{1}_{(x_1\neq x_2)}+d_{BL}(\rho_1,\rho_2)).
       \end{align*}
       Combining this with the inequality for $q_t^y$ yields the desired Lipschitz property of Theorem~\ref{th:DDSDE-existence}. The boundedness assumption of Theorem~\ref{th:DDSDE-existence} follows directly from (iii).
\end{proof}

\begin{example}[SIS-model]\label{ex:SIS-regularity}
       Returning to the SIS-model on a network from Section~\ref{sec:example_infections}, we can see that the intensity measure $\mu$ given by (\ref{eq:mf-sis-intensity}) is absolutely continuous with respect to the counting measure $\nu$ on $E=\{0,1\}\times\{1,\ldots,J\}$ with density 
       \begin{align*}
              q_t^{z,i}(\bar{z},\bar{i},\rho)=\mathds{1}_{\{1,\bar{i}\}}(z,i)\mathds{1}_{(\bar{z}=0)}\beta_{\bar{i}}\sum_{j=1}^J a_{\bar{i}j}\rho(\{1,j\})+\mathds{1}_{\{0,\bar{i}\}}(z,i)\mathds{1}_{(\bar{z}=1)}\gamma_{\bar{i}}.
       \end{align*}
       This can be seen to be a composition of $q_t^{z,i}:E\times\amsmathbb{R}\rightarrow [0,\infty)$ given by 
       \begin{align*}
              q_t^{z,i}(\bar{z},\bar{i},y)=\mathds{1}_{\{1,\bar{i}\}}(z,i)\mathds{1}_{(\bar{z}=0)}\beta_{\bar{i}}y+\mathds{1}_{\{0,\bar{i}\}}(z,i)\mathds{1}_{(\bar{z}=1)}\gamma_{\bar{i}}.
       \end{align*}
       and of the mapping 
       \begin{align*}
              \rho\mapsto\int_E h(\bar{i},z,i)\rho(\mathrm{d}(z,i)),\quad\text{where}\quad h(\bar{i},z,i)=\sum_{j=1}^J a_{\bar{i}j}\mathds{1}_{\{1,j\}}(z,i).
       \end{align*}
       Using the discrete distance $d_E(x,y)=\mathds{1}_{(x\neq y)}$ on $E$ and the fact that the $a_{ij}$ are equal to 0 or 1, we get 
       \begin{align*}
              |h(\bar{i}_1,z_1,i_1)-h(\bar{i}_2,z_2,i_2)|\leq \mathds{1}_{(\bar{i}_1\neq\bar{i}_2)}+ \min\big(1,\mathds{1}_{((z_1,i_1)\neq (z_2,i_2))}\big).
       \end{align*}
       Similarly we obtain 
       \begin{align*}
              |q_t^{z,i}(\bar{z}_1,\bar{i}_1,y_1)-q_t^{z,i}(\bar{z}_2,\bar{i}_2,y_2)|\leq K(\mathds{1}_{((\bar{z}_1,\bar{i}_1)\neq(\bar{z}_2,\bar{i}_2))}+|y_1-y_2|),
       \end{align*}
       where $K:=\max_{i}\beta_i +\max_{i}\gamma_i$. Finally since for any $\rho\in\mathcal{P}(E)$ we have
       \begin{align*}
            \sum_{j=1}^J a_{ij}\rho(\{1,j\})\leq\rho(\{1\}\times\{1,\ldots,J\})\leq 1,
       \end{align*}
       we arrive at 
       \begin{align*}
              \int_E q_t^{z,i}(\bar{z},\bar{i},\rho)\nu(\mathrm{d}(z,i))\leq K,\quad \forall (\bar{z},\bar{i})\in E,\,\rho\in\mathcal{P}(E).
       \end{align*}
       Thus Proposition~\ref{prop:regularity} in conjuction with Theorem~\ref{th:DDSDE-existence} yields existence and uniqueness of the mean-field model (\ref{eq:mf-sis-jp}). Furthermore by Proposition~(\ref{prop:DDSDE-forward}) the occupation probabilities satisfy the non-linear forward differential equations given by (\ref{eq:example:nke}) and by Proposition~\ref{prop:lDDSDE_forward} the transition probabilities satisfy the linear forward equations given by (\ref{eq:example:lke}).
\end{example}

\subsection{Coupling of jump process distributions}\label{subsec:coupling}
For all major proofs in this paper it is essential to obtain sensible bounds of the total variation distance between two jump process distributions. Let $\bar{\amsmathbb{Q}}_{\tau,\zeta_1}$ and $\bar{\amsmathbb{Q}}_{\tau,\zeta_2}$ be two (potentially distribution dependent) jump process distributions with the same intensity kernel $\mu$ but different initial distributions $\zeta_1,\zeta_2\in\mathcal{P}(E)$. Then by (\ref{eq:TV_coupling_inequality})
\begin{align*}
       d_{TV}(\bar{\amsmathbb{Q}}_{\tau,\zeta_1},\bar{\amsmathbb{Q}}_{\tau,\zeta_2})\leq \amsmathbb{P}(\bar{X}^{\tau,\zeta_1}\neq\bar{X}^{\tau,\zeta_2})=\amsmathbb{P}\Bigg(\bigcup_{t\in[\tau,T]}(\bar{X}_t^{\tau,\zeta_1}\neq\bar{X}_t^{\tau,\zeta_2})\Bigg)
\end{align*}
for any joint jump process $(\bar{X}^{\tau,\zeta_1},\bar{X}^{\tau,\zeta_2})$ on the state space $E^2$ such that $\bar{X}^{\tau,\zeta_1}(\amsmathbb{P})=\bar{\amsmathbb{Q}}_{\tau,\zeta_1}$ and $\bar{X}^{\tau,\zeta_2}(\amsmathbb{P})=\bar{\amsmathbb{Q}}_{\tau,\zeta_2}$. Thus the joint jump process distribution $(\bar{X}^{\tau,\zeta_1},\bar{X}^{\tau,\zeta_2})(\amsmathbb{P})$ consitutes a coupling of $\bar{\amsmathbb{Q}}_{\tau,\zeta_1}$ and $\bar{\amsmathbb{Q}}_{\tau,\zeta_2}$. The goal is to find a coupling with small $\amsmathbb{P}(\bar{X}^{\tau,\zeta_1}\neq\bar{X}^{\tau,\zeta_2})$ or equivalently large $\amsmathbb{P}(\bar{X}^{\tau,\zeta_1}=\bar{X}^{\tau,\zeta_2})$.

The coupling that we intend to use can be constructed by marking and thinning a Poisson process using the following procedure inspired by~\cite{Graham1992-2}:
\begin{enumerate}
       \item[1.] Obtain the potential jump times $(T_i)_{i\in\amsmathbb{N}}$ from a homogeneous Poisson process with intensity $C_{\lambda}$.
       \item[2.] Obtain the initial values $\bar{Y}_0^{\zeta_1}$ and $\bar{Y}_0^{\zeta_2}$ from a coupling of $\zeta_1$ and $\zeta_2$.
       \item[3.] Starting at zero and iterating upwards, for each $i$ choose the jump destinations from a maximal coupling of respective jump destination distributions.
\end{enumerate}
This way we ensure that the paths of $\bar{X}^{\tau,\zeta_1}$ and $\bar{X}^{\tau,\zeta_2}$ are close by fixing the potential common jump times $(T_i)_{i\in\amsmathbb{N}}$ and then choosing the jump destinations in a way that maximises the probability of both processes jumping to the same destination. Mathematically we formulate this as 
\begin{equation}\label{eq:coupling}
       \begin{pmatrix}
              \bar{X}_t^{\tau,\zeta_1}\\
              \bar{X}_t^{\tau,\zeta_2}
       \end{pmatrix}=
       \begin{pmatrix}
              Y_0^{\tau,\zeta_1}\\
              Y_0^{\tau,\zeta_2}
       \end{pmatrix}+\int_{(\tau,t]\times E^2}
       \begin{pmatrix}
              y_1-\bar{X}_{s-}^{\tau,\zeta_1}\\
              y_2-\bar{X}_{s-}^{\tau,\zeta_2}
       \end{pmatrix}
       \mathcal{N}(\mathrm{d}t,\mathrm{d}(y_1,y_2)),\quad t\in[\tau,T],
\end{equation}
where the joint distribution of $Y_0^{\tau,\zeta_1}$ and $Y_0^{\tau,\zeta_2}$ is a coupling of $\zeta_1$ and $\zeta_2$, while $\mathcal{N}$ is a random counting measure with compensating measure
\begin{align*}
       L(\mathrm{d}t,\mathrm{d}(y_1,y_2))=C_{\lambda}\gamma_t(X_{t-}^{\tau,\zeta_1},X_{t-}^{\tau,\zeta_1},\bar{p}_t^{\tau,\zeta_1},\bar{p}_t^{\tau,\zeta_2},\mathrm{d}(y_1,y_2))\mathrm{d}t.
\end{align*}
Here $\gamma:[\tau,T]\times E^2\times\mathcal{P}(E)^2\times\mathcal{B}(E^2)\rightarrow [0,1]$ is a probability kernel, which for each fixed $t,(x_1,x_2),(\rho_1,\rho_2)$ is a maximal coupling of $\kappa_t(x_1,\rho_1,\mathrm{d}y)$ and $\kappa_t(x_2,\rho_2,\mathrm{d}y)$, where $\kappa:[\tau,T]\times E\times\mathcal{P}(E)\times\mathcal{B}(E)\rightarrow [0,1]$ is a probability kernel given by 
\begin{align}\label{eq:kappa}
       \kappa_t(x,\rho,\mathrm{d}y)=\frac{1}{C_{\lambda}}\mu_t(x,\rho,\mathrm{d}y)+\bigg(1-\frac{\lambda_t(x,\rho)}{C_{\lambda}}\bigg)\delta_{\{x\}}(\mathrm{d}y).
\end{align}
The next result shows that (\ref{eq:coupling}) indeed constitutes a coupling of $\bar{\amsmathbb{Q}}_{\tau,\zeta_1}$ and $\bar{\amsmathbb{Q}}_{\tau,\zeta_2}$.
\begin{proposition}\label{prop:coupling_JP}
       Assume that $\bar{\amsmathbb{Q}}_{\tau,\zeta_1}$ and $\bar{\amsmathbb{Q}}_{\tau,\zeta_2}$ exist and are unique. The jump process distribution of~(\ref{eq:coupling}) exists, is unique and has marginals $X^{\tau,\zeta_1}(\amsmathbb{P})=\bar{\amsmathbb{Q}}_{\tau,\zeta_1}$ and $X^{\tau,\zeta_2}(\amsmathbb{P})=\bar{\amsmathbb{Q}}_{\tau,\zeta_2}$. Furthermore the process $N_t:=\mathcal{N}_t((\tau,t]\times E^2)$ is a Poisson process with intensity $C_{\lambda}$.
\end{proposition}
\begin{proof}
       Note that since $\bar{\amsmathbb{Q}}_{\tau,\zeta_1}$ and $\bar{\amsmathbb{Q}}_{\tau,\zeta_2}$ exist and are unique, we can treat $\bar{p}^{\tau,\zeta_1}$ and $\bar{p}^{\tau,\zeta_2}$ as given and fixed, which makes~(\ref{eq:coupling}) a linear jump process. Lemma~\ref{lem:coupling_kernel} ensures that $\gamma$ exists and since $C_{\lambda}<\infty$ Theorem~\ref{th:SDE-existence} yields existence and uniqueness of the jump process distribution of~(\ref{eq:coupling}). We can now set 
       \begin{align*}
             Q^1(\mathrm{d}t,\mathrm{d}y_1):=\mathds{1}_{(y_1\neq \bar{X}_{t-}^{\tau,\zeta_1})}\mathcal{N}(\mathrm{d}t,\mathrm{d}y_1\times E) 
       \end{align*}
       and note that we can write
       \begin{align*}
             \bar{X}_t^{\tau,\zeta_1}= Y_0^{\tau,\zeta_1}+\int_{(\tau,t]\times E}(y_1-\bar{X}_{s-}^{\tau,\zeta_1})Q^1(\mathrm{d}s,\mathrm{d}y_1).
       \end{align*}
       Since $\gamma$ is a coupling of $\kappa_t(x_1,\rho_1,\mathrm{d}y)$ and $\kappa_t(x_2,\rho_2,\mathrm{d}y)$ the compensating measure of $Q^1$ is given by
       \begin{align*}
              L^1(\mathrm{d}t,\mathrm{d}y_1)&=\mathds{1}_{(y_1\neq \bar{X}_{t-}^{\tau,\zeta_1})}L(\mathrm{d}t,\mathrm{d}y_1\times E)=
              \mathds{1}_{(y_1\neq \bar{X}_{t-}^{\tau,\zeta_1})}C_{\lambda}\kappa_t(\bar{X}_{t-}^{\tau,\zeta_1},\bar{p}_t^{\tau,\zeta_2},\mathrm{d}y_1)\mathrm{d}t\\
              &=\mu_t(\bar{X}_{t-}^{\tau,\zeta_1},\bar{p}_t^{\tau,\zeta_1})\mathrm{d}t,
       \end{align*}
       where the last equality follows from (\ref{eq:kappa}). Thus we can conclude that $\bar{X}^{\tau,\zeta_1}(\amsmathbb{P})=\bar{\amsmathbb{Q}}_{\tau,\zeta_1}$. Similar calculations yield $\bar{X}^{\tau,\zeta_2}(\amsmathbb{P})=\bar{\amsmathbb{Q}}_{\tau,\zeta_2}$. Finally the compensating measure for $N_t=\mathcal{N}((\tau,t]\times E^2)$ is given by $C_{\lambda}\mathrm{d}t$ proving that $N_t$ is a Poisson process with intensity $C_{\lambda}$.
\end{proof}

The Lipschitz assumption of $\mu$ from Theorem~\ref{th:DDSDE-existence} carries over to $\kappa_t(x,\rho,\mathrm{d}y)$.
\begin{lemma}\label{lem:coupling_lipschitz}
       Assume that $\mu$ satisfies the assumptions of Theorem~\ref{th:DDSDE-existence}. Then it holds that 
       \begin{align*}
              d_{TV}(\kappa_t(\mathrm{d}y,x_1,\rho_1),\kappa_t(\mathrm{d}y,x_2,\rho_2))\leq \bigg(1+\frac{C_{\mu}}{2C_{\lambda}}\bigg)(\mathds{1}_{(x_1\neq x_2)}+d_{BL}(\rho_1,\rho_2)).
       \end{align*}
\end{lemma}
\begin{proof}
       Let $f:E\rightarrow [-1,1]$ be given. Then 
       \begin{align*}
              \bigg|\int_E f(y)\kappa_t(\mathrm{d}y,x_1,\rho_1)-&\int_Ef(y)\kappa_t(\mathrm{d}y,x_2,\rho_2)\bigg|\\
              \leq& \frac{1}{C_{\lambda}}\bigg|\int_E f(y)\mu_t(\mathrm{d}y,x_1,\rho_1)-\int_E f(y)\mu_t(\mathrm{d}y,x_2,\rho_2)\bigg|\\
              &+\bigg|\bigg(1-\frac{\lambda_t(x_1,\rho_1)}{C_{\lambda}}\bigg)f(x_1)-\bigg(1-\frac{\lambda_t(x_2,\rho_2)}{C_{\lambda}}\bigg)f(x_2)\bigg|
       \end{align*}
       The term in the third line is bounded by $2\mathds{1}_{(x_1\neq x_2)}$, since $f(x_i)\in [-1,1]$ and $1-\frac{\lambda_t(x_i,\rho_i)}{C_{\lambda}}\in [0,1]$ for $i=1,2$. Using that bound and taking the supremum over measurable $f:E\rightarrow [-1,1]$ and multiplying with $\frac{1}{2}$ yields
       \begin{align*}
              d_{TV}(\kappa_t(\mathrm{d}y,x_1,\rho_1),\kappa_t(\mathrm{d}y,x_2,\rho_2))\leq \frac{1}{2C_{\lambda}}d_{TV}(\mu_t(\mathrm{d}y,x_1,\rho_1),\mu_t(\mathrm{d}y,x_2,\rho_2))+\mathds{1}_{(x_1\neq x_2)}.
       \end{align*}
       Inserting the Lipschitz assumption from Theorem~\ref{th:DDSDE-existence} yields the desired result.
\end{proof}

Finally we use this coupling construction to prove a bound of $\amsmathbb{P}(\bar{X}^{\tau,\zeta_1}\neq\bar{X}^{\tau,\zeta_2})$ which is essential for proving the main results of this paper. For each $t\in [\tau,T]$ let
\begin{align*}
       A_t:=\bigcup_{s\in [\tau,t]}\big(\bar{X}^{\tau,\zeta_1}_s\neq \bar{X}^{\tau,\zeta_2}_s\big)
\end{align*}
be the set where $\bar{X}^{\tau,\zeta_1}_s$ differs from $\bar{X}^{\tau,\zeta_1}_s$ for some $s\in [\tau,t]$. We then obtain the following bound:
\begin{lemma}\label{lem:coupling_PA_bound}
       Assume that $\bar{\amsmathbb{Q}}_{\tau,\zeta_1}$ and $\bar{\amsmathbb{Q}}_{\tau,\zeta_2}$ exist and are unique and let $\bar{X}^{\tau,\zeta_1}$ and $\bar{X}^{\tau,\zeta_2}$ be given by~(\ref{eq:coupling}). Then it holds that
       \begin{align*}
              \amsmathbb{P}(A_t) \leq\,\amsmathbb{P}(\bar{Y}_0^{\tau,\zeta_1}\neq\bar{Y}_0^{\tau,\zeta_2})+C_1\int_{\tau}^t\amsmathbb{P}(A_s)+\amsmathbb{E}[d_{BL}(\bar{p}^{\tau,\zeta_1}_s,\bar{p}^{\tau,\zeta_2}_s)]\mathrm{d}s,
       \end{align*}
       for all $t\in [\tau,T]$, where $C_1:=C_{\lambda}+\frac{C_{\mu}}{2}$.
\end{lemma}
\begin{proof}
       By construction we have that 
       \begin{align*}
              \bar{X}^{\tau,\zeta_1}_t=\sum_{i=0}^{\infty}\bar{Y}^{\tau,\zeta_1}_i\mathds{1}_{(T_i\leq t\leq T_{i+1})} \quad \text{and}\quad \bar{X}^{\tau,\zeta_2}_t=\sum_{i=0}^{\infty}\bar{Y}^{\tau,\zeta_2}_i\mathds{1}_{(T_i\leq t\leq T_{i+1})},
       \end{align*}
       for the same jump times $T_0:=\tau$ and $(T_i)_{i\in\amsmathbb{N}}$. Since the jump times align, we have for any $t\in [\tau,T]$ that $\bar{X}^{\tau,\zeta_1}_s\neq \bar{X}^{\tau,\zeta_2}_s$ for some $s\in [\tau,t]$ if and only if there exists $i\in\{0,\ldots,N_t\}$, such that $\bar{Y}^{\tau,\zeta_1}_i\neq \bar{Y}^{\tau,\zeta_2}_i$. We can thus conclude that  
       \begin{align*}
              A_t:=\bigcup_{s\in [\tau,t]}\big(\bar{X}^{\tau,\zeta_1}_s\neq \bar{X}^{\tau,\zeta_2}_s\big)&=\bigcup_{m=0}^{\infty}\Bigg((N_t=m)\cap\bigcup_{i=0}^{m}\big(\bar{Y}^{\tau,\zeta_1}_i\neq \bar{Y}^{\tau,\zeta_2}_i\big)\Bigg),
       \end{align*}
       is measurable with respect to the natural filtration of $(\bar{Y}_0^{\tau,\zeta_1},\bar{Y}_0^{\tau,\zeta_2},\mathcal{N})$. Furthermore $A_s\subseteq A_t$ for $s\leq t$, which means that $t\mapsto \amsmathbb{P}(A_t)$ is increasing and therefore measurable.
       
       Next we note that 
       \begin{align*}
              \mathds{1}_{A_t}\leq\mathds{1}_{(\bar{Y}_0^{\tau,\zeta_1}\neq\bar{Y}_0^{\tau,\zeta_1})}+\mathcal{N}((\tau,t]\times E^2_{x\neq y}),
       \end{align*}
       where $E^2_{x\neq y}:=\{(x,y)\in E^2: x\neq y\}$. While the left hand side simply indicates whether $\bar{X}^{\tau,\zeta_1}_s\neq \bar{X}^{\tau,\zeta_2}_s$ for some $s\in [\tau,t]$, the right hand side counts how many marks are different. Taking expectation on both sides yields
       \begin{align*}
              \amsmathbb{P}(A_t)\leq \amsmathbb{P}\big(\bar{Y}_0^{\tau,\zeta_1}\neq\bar{Y}_0^{\tau,\zeta_1}\big) + \int_{\tau}^t C_{\lambda}\amsmathbb{E}\big[\gamma_s(\bar{X}_{s-}^{\tau,\zeta_1},\bar{X}_{s-}^{\tau,\zeta_2},\bar{p}_s^{\tau,\zeta_1},\bar{p}_s^{\tau,\zeta_1}, E^2_{x\neq y})\big]\mathrm{d}s.
       \end{align*}
       Since $\gamma$ is a maximal coupling, we get by Lemma~\ref{lem:coupling_lipschitz} that 
       \begin{align*}
              \amsmathbb{P}(A_t)\leq \amsmathbb{P}\big(\bar{Y}_0^{\tau,\zeta_1}\neq\bar{Y}_0^{\tau,\zeta_1}\big)+C_1\int_{\tau}^t\amsmathbb{P}\big(\bar{X}_{s-}^{\tau,\zeta_1}\neq\bar{X}_{s-}^{\tau,\zeta_2}\big)+\amsmathbb{E}[d_{BL}(\bar{p}_s^{\tau,\zeta_1},\bar{p}_s^{\tau,\zeta_2})]\mathrm{d}s.
       \end{align*}
       Using that $\big(\bar{X}_{s-}^{\tau,\zeta_1}\neq\bar{X}_{s-}^{\tau,\zeta_2}\big)\subset A_s$ for any $s\leq t$ we get the desired result.
\end{proof}

\subsection{Proof of Theorem~\ref{th:DDSDE-existence}}\label{subsec:proof-DDSDE-existence}
The proof is essentially a fixed point argument in the complete metric space $(\mathcal{P}(\amsmathbb{H}([\tau,T],E)),d_{TV})$. We fix $(\tau,\zeta)\in [0,T]\times\mathcal{P}(E)$, let $\amsmathbb{Q}\in \mathcal{P}(\amsmathbb{H}([\tau,T],E))$ and define 
\begin{align*}
       Z^{\amsmathbb{Q}}_t=Y_0+\int_{(\tau,t]\times E}(y-Z^{\amsmathbb{Q}}_{s-})Q(\mathrm{d}t,\mathrm{d}y),\quad t\in[\tau,T],
\end{align*}
where $Y_0$ has distribution $\zeta$ and $Q$ has compensating measure
\begin{align*}
       L(\mathrm{d}t,\mathrm{d}y)=\mu_t(Z^{\amsmathbb{Q}}_{t-},\pi_t(\amsmathbb{Q}),\mathrm{d}y)\mathrm{d}t.
\end{align*}
As this is a Markov jump process, Theorem~\ref{th:SDE-existence} yields existence and uniqueness of the jump process distribution of $Z^{\amsmathbb{Q}}$. Thus $F_{\zeta}(\amsmathbb{Q})=Z^{\amsmathbb{Q}}(\amsmathbb{Q})$ defines a mapping from $\mathcal{P}(\amsmathbb{H}([\tau,T],E))$ to itself. We now let $F^n_{\zeta}$ denote the $n$-fold composition of $F_{\zeta}$ with itself.

\begin{lemma}
       Let $\amsmathbb{Q}_1,\amsmathbb{Q}_2\in\mathcal{P}(\amsmathbb{H}([\tau,T],E))$. It holds that  
       \begin{align*}
              d_{TV}(F^n_{\zeta}(\amsmathbb{Q}_1),F^n_{\zeta}(\amsmathbb{Q}_2))\leq K^{n}\frac{(T-\tau)^n}{n!}d_{TV}(\amsmathbb{Q}_1,\amsmathbb{Q}_2),\quad \forall n\in\amsmathbb{N},
       \end{align*}
       where $K=C_{1}e^{C_{1}(T-\tau)}$.
\end{lemma}
\begin{proof}
       We prove by induction that the statement not only holds for $T$ but any $t\in [\tau,T]$. That is 
       \begin{align*}
              d_{TV}(F^n_{\zeta}(\amsmathbb{Q}_1)^t,F^n_{\zeta}(\amsmathbb{Q}_2)^t)\leq K^{n}\frac{(t-\tau)^n}{n!}d_{TV}(\amsmathbb{Q}_1^t,\amsmathbb{Q}_2^t),\quad \forall n\in\amsmathbb{N},
       \end{align*}
       where the notation $\amsmathbb{Q}^t$ denotes the restriction of the measure $\amsmathbb{Q}\in\mathcal{P}(\amsmathbb{H}([\tau,T],E))$ to $\amsmathbb{H}([\tau,t],E)$.

       For $n=1$ we utilise the coupling construction of Subsection~\ref{subsec:coupling} and apply Lemma~\ref{lem:coupling_PA_bound} in conjunction with Grönwall's inequality to get 
       \begin{align*}
              \amsmathbb{P}(A_t)\leq C_1e^{C_1(t-\tau)}\int_{\tau}^t d_{BL}(\pi_s(\amsmathbb{Q}_1),\pi_s(\amsmathbb{Q}_2))\mathrm{d}s,
       \end{align*}
       where $A_t:=\bigcup_{s\in [\tau,t]}(Z_s^{\amsmathbb{Q}_1}\neq Z_s^{\amsmathbb{Q}_1})$. By~(\ref{eq:TV_sub-sig}) we have
       \begin{align*}
              d_{BL}(\pi_s(\amsmathbb{Q}_1),\pi_s(\amsmathbb{Q}_2))\leq d_{TV}(\pi_s(\amsmathbb{Q}_1),\pi_s(\amsmathbb{Q}_2))\leq d_{TV}(\amsmathbb{Q}_1^{t},\amsmathbb{Q}_2^{t}),
       \end{align*}
       for $s\in [\tau,t]$ and by (\ref{eq:TV_coupling_inequality}) we obtain $d_{TV}(F_{\zeta}(\amsmathbb{Q}_1)^t,F_{\zeta}(\amsmathbb{Q}_2)^t)\leq \amsmathbb{P}(A_t)$. Thus we arrive at 
       \begin{align*}
              d_{TV}(F_{\zeta}(\amsmathbb{Q}_1)^t,F_{\zeta}(\amsmathbb{Q}_2)^t)\leq K(t-\tau)d_{TV}(\amsmathbb{Q}_1^{t},\amsmathbb{Q}_2^{t}).
       \end{align*}
       Assume now that the claim holds for some $n>1$. Then by the coupling construction of Subsection~\ref{subsec:coupling}, Lemma~\ref{lem:coupling_PA_bound} and Grönwall's inequality we obtain
       \begin{align*}
              d_{TV}(F_{\zeta}^{n+1}(\amsmathbb{Q}_1)^t,F_{\zeta}^{n+1}(\amsmathbb{Q}_2)^t)\leq K\int_{\tau}^t d_{TV}(F_{\zeta}^{n}(\amsmathbb{Q}_1)^s,F_{\zeta}^{n}(\amsmathbb{Q}_2)^s)\mathrm{d}s.
       \end{align*}
       Inserting the induction hypothesis for $t=s$ and applying (\ref{eq:TV_sub-sig}) yields
       \begin{align*}
              d_{TV}(F_{\zeta}^{n+1}(\amsmathbb{Q}_1)^t,F_{\zeta}^{n+1}(\amsmathbb{Q}_2)^t)\leq K^{n+1}d_{TV}(\amsmathbb{Q}_1^t,\amsmathbb{Q}_2^t)\int_{\tau}^t \frac{(s-\tau)^n}{n!}\mathrm{d}s. 
       \end{align*}
       Calculating the integral and setting $t=T$ yields the desired result.
\end{proof}

As $\mathcal{P}(\amsmathbb{H}([\tau,T],E))$ endowed with $d_{TV}$ is a complete metric space and since 
\begin{align*}
       \sum_{n=0}^{\infty}\frac{(K(T-\tau))^n}{n!}=e^{K(T-\tau)}<\infty,
\end{align*}
Weissinger's Fixed Point Theorem yields the existence of a unique fixed point $\bar{\amsmathbb{Q}}_{\tau,\zeta}\in\mathcal{P}(\amsmathbb{H}([\tau,T],E))$ such that $F_{\zeta}(\bar{\amsmathbb{Q}}_{\tau,\zeta})=\bar{\amsmathbb{Q}}_{\tau,\zeta}$.

\section{Mean-field approximation}\label{sec:MF_approximation}
We now specify a class $n$-individual models and subsequently prove that the sequence of associated jump process distributions is chaotic in total variation with respect to the distribution of a distribution dependent non-linear Markov jump process. Note that we from now on and for the rest of the paper in the name of readability will refrain from decorating the jump processes with their initial conditions $(\tau,\zeta)$, while we for the sake of mathematical accuracy keep the decorations in place for jump process distributions and occupation and transition probabilities.

Througout the section fix $\tau\in [0,T]$ and a sequence $(\zeta^n)_{n\in\amsmathbb{N}}$ of probability measures $\zeta^n\in\mathcal{P}(E^n)$ such that each $\zeta^n$ is exchangeable. For any $n\in\amsmathbb{N}$ we now model a cohort of $n$ individuals by a jump process $X^n=(X^{1,n},\ldots,X^{n,n})$ with state space $E^n$, given by 
\begin{align}\label{eq:n-ind}
       X_t^n=Y^n_0+\int_{(\tau,t]\times E^n}(y^{1:n}-X_{s-}^n)Q^n(\mathrm{d}t,\mathrm{d}y^{1:n}),\quad t\in[\tau,T],
\end{align}
where the vector $Y^{n}_0=(Y^{1,n}_0,\ldots,Y^{n,n}_0)$ has distribution $\zeta^n\in\mathcal{P}(E^n)$ and the random counting measure $Q^{n}$ has compensating measure 
\begin{align*}
       L^n(\mathrm{d}t,\mathrm{d}y^{1:n})=\sum_{\ell=1}^n\bigg(\mu_t(X_{t-}^{\ell,n},\varepsilon_{t-}^n,\mathrm{d}y_{\ell})\prod_{j=1,j\neq\ell}^n\delta_{\{X_{t-}^{j,n}\}}(\mathrm{d}y_j)\bigg)\mathrm{d}t.
\end{align*}
Here $\varepsilon_t^n:=\frac{1}{n}\sum_{\ell=1}^{n}\delta_{\{X_t^{\ell,n}\}}$ is the empirical measure of $(X^{1,n}_t,\ldots,X^{n,n}_t)$ and $\mu:[\tau,T]\times E\times\mathcal{P}(E)\times\mathcal{B}(E)\rightarrow [0,1]$ is an intensity kernel satisfying the assumptions of Theorem~\ref{th:DDSDE-existence}. Since the empirical measure is a function of $(X^{1,n},\ldots,X^{n,n})$, the process $X^n$ is a Markov jump process and thus by Theorem~\ref{th:SDE-existence} there exists a unique jump process distribution $X^n(\amsmathbb{P}):=\amsmathbb{Q}^n_{\tau,\zeta^n}\in\mathcal{P}(\amsmathbb{H}([\tau,T],E^n))$. 

The jump process of individual $\ell$ is given by
\begin{align*}
       X_t^{\ell,n}=Y^{\ell,n}+\int_{(\tau,t]\times E}(y-X^{\ell,n}_{s-})\,Q^{\ell,n}(\mathrm{d}s,\mathrm{d}y),\quad t\in[\tau,T],
\end{align*}
where $Y^{\ell,n}$ has distribution $\zeta^{n,1}(\mathrm{d}y_1):=\zeta^n(\mathrm{d}y_1\times E^{n-1})$ and the random counting measure $Q^{\ell,n}(\mathrm{d}t,\mathrm{d}y_{\ell}):=Q^n(\mathrm{dt},E^{\ell-1}\times\mathrm{d}y_{\ell}\times E^{n-\ell})$ has compensating measure
\begin{align*}
       L^{\ell,n}(\mathrm{d}t,\mathrm{d}y)=\mu_t(X_{t-}^{\ell,n},\varepsilon_{t-}^n,\mathrm{d}y)\mathrm{d}t.
\end{align*}
The definition of $L^n$ implies that the individuals have no common jumps. We can now clearly see, that each single individual $X^{\ell,n}$ only depends on the other individuals through their collective empirical distribution $\varepsilon_{t}^n$. This can be interpreted as the dependence of the individual on collective quantities, such as cohort averages or frequencies. Furthermore, since $\mu$ is the same for any individual and because the single individual depends on the collective in a symmetric manner, all individuals are identically distributed albeit not independent.

Since $X^n$ is a Markov jump process, all results in Section~\ref{subsec:linear_MJP} are valid. In particular we could write down and solve the linear forward integro-differential equations of Propositions~\ref{prop:SDEx_forward} and~\ref{prop:SDE-forward}, but as already discussed in Section~\ref{sec:example_infections}, dependence of the individual on the entire cohort renders this computationally infeasible in most practical applications.

Instead we can hope to use the corresponding mean-field model instead, which is obtained by replacing the empirical distribution $\varepsilon_{t-}^n$ of the collective with the distribution of the individual processes themselves. This mean-field model is given by the distribution dependent non-linear Markov jump process (\ref{eq:DDSDE}):
\begin{align*}
       \bar{X}_t=\bar{Y}_0+\int_{(\tau,t]\times E}(y-\bar{X}_{s-})\bar{Q}(\mathrm{d}s,\mathrm{d}y),
\end{align*}
where $\bar{Y}_0$ has distribution $\zeta\in\mathcal{P}(E)$ and $\bar{Q}$ has compensating measure 
\begin{align*}
       \bar{L}(\mathrm{d}t,\mathrm{d}y)=\mu_t(\bar{X}_{t-},\bar{p}_t^{\tau,\zeta},\mathrm{d}y)\mathrm{d}t.
\end{align*}
Thus $\bar{X}^n(\amsmathbb{P})=\bar{\amsmathbb{Q}}_{\tau,\zeta}$ and existence and uniqueness follows directly from Theorem~\ref{th:DDSDE-existence}. Using this mean-field model, The entire collective of $n$ individuals can then be modelled by $n$ independent and identically distributed individuals, each with distribution $\bar{\amsmathbb{Q}}_{\tau,\zeta}$. Mathematically this is written as
\begin{align}\label{eq:n-mf}
       \bar{X}_t^n=\bar{Y}^n_0+\int_{(\tau,t]\times E^n}(y^{1:n}-\bar{X}_{s-}^n)\bar{Q}^n(\mathrm{d}t,\mathrm{d}y^{1:n}),\quad t\in[\tau,T],
\end{align}
where the vector $\bar{Y}^{n}=(\bar{Y}^{1,n},\ldots,\bar{Y}^{n,n})$ is a collection of independent and identically distributed random variables with distribution $\zeta$ while the random counting measure $\bar{Q}^{n}$ has compensating measure 
\begin{align*}
       \bar{L}^n(\mathrm{d}t,\mathrm{d}y^{1:n})=\sum_{\ell=1}^n\bigg(\mu_t(\bar{X}_{t-}^{\ell,n},\bar{p}_t^{\tau,\zeta},\mathrm{d}y_{\ell})\prod_{j=1,j\neq\ell}^n\delta_{\{\bar{X}_{t-}^{j,n}\}}(\mathrm{d}y_j)\bigg)\mathrm{d}t.
\end{align*}
The jump process distribution of (\ref{eq:n-mf}) is denoted by $\bar{\amsmathbb{Q}}_{\tau,\zeta}^{\otimes n}=\bar{X}(\amsmathbb{P})$.

In order to use the mean-field approximation we have to show that\\ 
$\amsmathbb{Q}^{n,k}_{\tau,\zeta}:=(X^{1,n},\ldots,X^{k,n})(\amsmathbb{P})$ converges to $\bar{\amsmathbb{Q}}_{\tau,\zeta}^{\otimes k}$ in an appropriate sense. This is made precise by the notion of chaosticity.

\begin{definition}\label{def:chaos}
       Let $(S,d_S)$ be a separable metric space, $\amsmathbb{Q}$ a probability measure on $S$ and let $(\amsmathbb{Q}^n)_{n\in\amsmathbb{N}}$ be a sequence of exchangeable probability measures, each defined on the product space $S^n$. Then the sequence $(\amsmathbb{Q}^n)_{n\in\amsmathbb{N}}$ is $\amsmathbb{Q}$-chaotic if and only if for any fixed $k\in\amsmathbb{N}$ it holds that 
       \begin{align*}
              \amsmathbb{Q}^{n,k}\stackrel{wk.}{\rightarrow}\amsmathbb{Q}^{\otimes k}\quad \text{for } n\rightarrow\infty,
       \end{align*}
       where $\amsmathbb{Q}^{n,k}$ denotes the joint marginal distribution of the first $k$ individuals.
\end{definition}

\begin{remark}\label{rem:exchangeability}
       Assuming that each $\amsmathbb{Q}^n$ is the distribution of the random variables $(X^{1,n},\ldots,X^{n,n})$, we have that $\amsmathbb{Q}^n$ is exchangeable if 
       \begin{align*}
           (X^{1,n},\ldots,X^{n,n})\stackrel{d}{=} (X^{\sigma(1),n},\ldots,X^{\sigma(n),n})
       \end{align*}
       for each permutation $\sigma:\{1,\ldots,n\}\rightarrow\{1,\ldots,n\}$. This means that the joint distribution of the individuals does not change when reordering them, implying that all individuals have the same marginal distribution. A sufficient, but not necessary condition for this to hold is that all individuals are independent and identically distributed.
\end{remark}

The intuitive interpretation of chaosticity is that any fixed group of $k$ individuals becomes independent and identically distributed with distribution $\amsmathbb{Q}$, when the size of the cohort tends to infinity.

Definition~\ref{def:chaos} goes back to~\cite{Kac1956}, but as chaosticity is equivalent to weak convergence of the marginals $\amsmathbb{Q}^{n,k}$ to the product measure $\amsmathbb{Q}^{\otimes k}$, it is possible to relate the notion of chaos to convergence in a metric space (see~\cite{Hauray&Mischler2014,Chaintron&Diez2022I} for chaosticity in terms of different metrics). Definition~\ref{def:chaos} for example can be stated in terms of the bounded-Lipschitz distance, which metrizes weak convergence. Other choices leading to slightly different notions of chaos are the Wasserstein(1) distance or total variation distance. For a discussion of the different notions and the consequences for the results of this paper, see Section~\ref{sec:discussion}. We will use the stronger notion of chaos in total variation, which always implies Definition~\ref{def:chaos}. 

\begin{definition}\label{def:W-chaos}
       Let $(S,\mathcal{S})$ be a measurable space, $\amsmathbb{Q}$ a probability measure on $(S,\mathcal{S})$ and let $(\amsmathbb{Q}^n)_{n\in\amsmathbb{N}}$ be a sequence of exchangeable probability measures, each defined on $(S^n,\mathcal{S}^n)$. Then the sequence $(\amsmathbb{Q}^n)_{n\in\amsmathbb{N}}$ is $\amsmathbb{Q}$-chaotic in total variation if and only if for any fixed $k\in\amsmathbb{N}$ it holds that 
       \begin{align*}
           \lim_{n\rightarrow\infty}d_{TV}(\amsmathbb{Q}^{n,k},\amsmathbb{Q}^{\otimes k})=0.
       \end{align*}
\end{definition}

In our case the measurable space is $\amsmathbb{H}([\tau,T],E)$ equipped with $\mathcal{B}(\amsmathbb{H}([\tau,T],E))$, the sequence of probability measures for which we want to prove chaosticity are the jump process distributions $(\amsmathbb{Q}_{\tau,\zeta^n}^n)_{n\in\amsmathbb{N}}$ of (\ref{eq:n-ind}) and the measure for which we would like the sequence to be chaotic for is the jump process distribution $\bar{\amsmathbb{Q}}_{\tau,\zeta}$ of (\ref{eq:DDSDE}). Note that each $\amsmathbb{Q}_{\tau,\zeta^n}^n$ should be exchangeable in the sense of Remark~\ref{rem:exchangeability}, which is the case, since the initial distribution $\zeta^n$ is exchangeable, the intensity kernels of the random counting measures $Q^{\ell,n}$ are the same for all $1\leq \ell\leq n$ and the individuals depend on each other in a symmetric way through $\varepsilon_{t-}^n$ only. We now have the following result.

\begin{theorem}\label{th:SDE_Chaos}
      Assume that the assumptions of Theorem~\ref{th:DDSDE-existence} are satsified and that the sequence $(\zeta^n)_{n\in\amsmathbb{N}}$ is $\zeta$-chaotic in total variation. Then for any fixed $k\in\amsmathbb{N}$, it holds that 
       \begin{align*}
              \lim_{n\rightarrow\infty}d_{TV}(\amsmathbb{Q}_{\tau,\zeta^n}^{n,k},\bar{\amsmathbb{Q}}_{\tau,\zeta}^{\otimes k})=0
       \end{align*}
\end{theorem}

\subsection{Proof of Theorem~\ref{th:SDE_Chaos}}
The proof of Theorem~\ref{th:SDE_Chaos} requires a bound for the distance $d_{TV}(\amsmathbb{Q}_{\tau,\zeta^n}^{n,k},\bar{\amsmathbb{Q}}_{\tau,\zeta}^{\otimes k})$ for any fixed $k\in\amsmathbb{N}$. Constructing a direct coupling between $\amsmathbb{Q}_{\tau,\zeta}^{n,k}$ and $\bar{\amsmathbb{Q}}_{\tau,\zeta}^{\otimes k}$ is impossible, since this would require a coupling $\nu^n$ of $\zeta^n$ and $\zeta^{\otimes n}$ on $(E^{2n},\mathcal{B}(E^{2n}))$ such that each marginal 
\begin{align*}
       \nu^n((E^{\ell-1}\times\mathrm{d}x_{\ell}\times E^{n-\ell})\times (E^{\ell-1}\times\mathrm{d}y_{\ell}\times E^{n-\ell})),\quad \ell=1,\ldots,n
\end{align*}
is equal to the same maximal coupling $\nu_1^n$ of $\zeta^{n,1}$ and $\zeta$. This would only be possible if $\zeta^n$ were a product measure. Instead by Lemma~\ref{lem:coupling_chaos} we can construct a measure $\nu^n$ on $(E^{2n},\mathcal{B}(E^{2n}))$ with the following properties:
\begin{enumerate}
       \item[(i)] The marginal $\nu^n(\mathrm{d}x^{1:n}\times E^n)$ is equal to $\zeta^n(\mathrm{d}x^{1:n})$.
       \item[(ii)] Let $\xi^n(\mathrm{d}y^{1:n}):=\nu^n(E^n\times\mathrm{d}y^{1:n})$. Then $\xi^n$ has identical 1-dimensional marginals $\xi^{n,1}=\zeta$.
       \item[(iii)] The marginal $\nu^n((E^{\ell-1}\times\mathrm{d}x_{\ell}\times E^{n-\ell})\times (E^{\ell-1}\times\mathrm{d}y_{\ell}\times E^{n-\ell}))$ equals the same maximal coupling $\nu_1^n$ of $\zeta^{n,1}$ and $\zeta$ for all $\ell\in\{1,\ldots,n\}$.
       \item[(iv)] The sequence $(\xi^n)_{n\in\amsmathbb{N}}$ is $\zeta$-chaotic in total variation.
\end{enumerate}
Using this we can construct the intermediate jump process $\widehat{X}^n=(\widehat{X}^{1,n},\ldots,\widehat{X}^{n,n})$ given by
\begin{align}\label{eq:proof_xhat}
       \widehat{X}_t^{n}=\widehat{Y}_0^{n}+\int_{(\tau,t]\times E^n}(y^{1:n}-\widehat{X}_{s-}^{n})\widehat{Q}^{n}(\mathrm{d}s,\mathrm{d}y^{1:n}),\quad t\in[\tau,T],
\end{align}
where $(\widehat{Y}_0^{1,n},\ldots,\widehat{Y}_0^{n,n})$ has distribution $\xi^n$ and $\widehat{Q}^{n}$ has compensating measure 
\begin{align*}
       \widehat{L}^{n}(\mathrm{d}t,\mathrm{d}y^{1:n})=\sum_{\ell=1}^n\bigg(\mu_t(\widehat{X}_{t-}^{\ell,n},\bar{p}_t^{\tau,\zeta},\mathrm{d}y_{\ell})\prod_{j=1,j\neq \ell}^n\delta_{\{\widehat{X}_{t-}^{j,n}\}}(\mathrm{d}y_j)\bigg)\mathrm{d}t,
\end{align*}
where $\bar{p}_t^{\tau,\zeta}=\pi_t(\bar{\amsmathbb{Q}}_{\tau,\zeta})$. Thus $\widehat{X}^n$ is a Markov jump process with state space $E^n$ and existence and uniqueness of $\widehat{\amsmathbb{Q}}_{\tau,\xi^n}^n:=\widehat{X}^n(\amsmathbb{P})$ follows from Theorem~\ref{th:SDE-existence}.

The triangle inequality yields
\begin{align*}
       d_{TV}(\amsmathbb{Q}_{\tau,\zeta^n}^{n,k},\bar{\amsmathbb{Q}}^{\otimes k}_{\tau,\zeta})\leq d_{TV}(\amsmathbb{Q}_{\tau,\zeta^n}^{n,k},\widehat{\amsmathbb{Q}}^{n,k}_{\tau,\xi^n})+d_{TV}(\widehat{\amsmathbb{Q}}^{n,k}_{\tau,\xi^n},\bar{\amsmathbb{Q}}_{\tau,\zeta}^{\otimes k}).
\end{align*}
Thus the two steps of the proof are to show that each of the two distances on the right hand side will vanish for $n\rightarrow\infty$. This is done by employing the coupling construction from Section~\ref{subsec:coupling} coordinate-wise to create a coupling between $\amsmathbb{Q}_{\tau,\zeta^n}^{n}$ and $\widehat{\amsmathbb{Q}}^{n}_{\tau,\xi^n}$ and between $\widehat{\amsmathbb{Q}}^{n}_{\tau,\xi^n}$ and $\bar{\amsmathbb{Q}}_{\tau,\zeta}^{\otimes n}$. We start with the latter pair.

For each $n\in\amsmathbb{N}$ we set 
\begin{equation}\label{eq:proof_coupling_hat_bar}
       \begin{pmatrix}
              \widehat{X}_t^{n}\\
              \bar{X}_t^{n}
       \end{pmatrix}=
       \begin{pmatrix}
              \widehat{Y}_0^{n}\\
              \bar{Y}_0^{n}
       \end{pmatrix}+\int_{(\tau,t]\times E^{2n}}
       \begin{pmatrix}
              \widehat{y}^{1:n}-\widehat{X}_{s-}^{\ell,n}\\
              \bar{y}^{1:n}-\bar{X}_{s-}^{\ell,n}
       \end{pmatrix}
       \mathcal{N}^{n}(\mathrm{d}s,\mathrm{d}(\widehat{y}^{1:n},\bar{y}^{1:n})),\,\, t\in[\tau,T],
\end{equation}
where the joint distribution of $\widehat{Y}_0^n=(\widehat{Y}_0^{1,n},\ldots,\widehat{Y}_0^{n,n})$ and $\bar{Y}_0^{n}=(\bar{Y}_0^{1,n},\ldots,\bar{Y}_0^{n,n})$ is a coupling of $\xi^n$ and $\zeta^{\otimes n}$ which is maximal for the marginals $\xi^{n,k}$ and $\zeta^{\otimes k}$ (see Lemma~\ref{lem:coupling_marginal} for existence), while $\mathcal{N}^{n}$ is a random counting measure with compensating measure
\begin{align*}
       L^{n}(&\mathrm{d}t,\mathrm{d}(\widehat{y}^{1:n},\bar{y}^{1:n}))\\
       &=C_{\lambda}\sum_{\ell=1}^n\bigg(\gamma_t(\widehat{X}_{t-}^{\ell,n},\bar{X}_{t-}^{\ell,n},\bar{p}_t^{\tau,\zeta},\bar{p}_t^{\tau,\zeta},\mathrm{d}(\widehat{y}_{\ell},\bar{y}_{\ell}))\prod_{j=1,j\neq\ell}^n\delta_{\{\widehat{X}_{t-}^{j,n},\bar{X}_{t-}^{j,n}\}}(\mathrm{d}(\widehat{y}_{j},\bar{y}_{j}))\bigg)\mathrm{d}t.
\end{align*}
Here $\gamma:[\tau,T]\times E^2\times\mathcal{P}(E)^2\times\mathcal{B}(E^2)\rightarrow [0,1]$ is the probability kernel from Subsection~\ref{subsec:coupling}. By construction there occur common jumps within each pair $(\widehat{X}^{\ell,n},\bar{X}^{\ell,n})$, but there are no common jumps across different pairs. Thus each pair $(\widehat{X}^{\ell,n},\bar{X}^{\ell,n})$ corresponds to the coupling construction of Section~\ref{subsec:coupling} and by an argument similar to the proof of Proposition~\ref{prop:coupling_JP}, we can conclude that the distribution of $(\widehat{X}^{1,n},\ldots,\widehat{X}^{k,n},\bar{X}^{1,n},\ldots,\bar{X}^{k,n})$ is a coupling of $\widehat{\amsmathbb{Q}}^{n,k}_{\tau,\xi^n}$ and $\bar{\amsmathbb{Q}}_{\tau,\zeta}^{\otimes k}$ for any $k\leq n$.

\begin{lemma}\label{lem:proof_chaos_1}
       It holds for each fixed $k\in\amsmathbb{N}$ that 
       \begin{align*}
              \lim_{n\rightarrow\infty}d_{TV}(\widehat{\amsmathbb{Q}}^{n,k}_{\tau,\xi^n},\bar{\amsmathbb{Q}}_{\tau,\zeta}^{\otimes k})=0.
       \end{align*}
\end{lemma}
\begin{proof}
       Set $A^{\ell,n}_T:=\bigcup_{t\in[\tau,T]}(\widehat{X}_t^{\ell,n}\neq\bar{X}_t^{\ell,n})$. We then have that 
       \begin{align*}
              d_{TV}(\widehat{\amsmathbb{Q}}_{\tau,\xi^{n,k}},\bar{\amsmathbb{Q}}_{\tau,\zeta}^{\otimes k})\leq \amsmathbb{P}\Bigg(\bigcup_{\ell=1}^k A^{\ell,n}_T\Bigg)\leq \sum_{\ell=1}^k \amsmathbb{P}(A_T^{\ell,n}).
       \end{align*}
       where the last inequality is due to the subadditivity of measures. Since\\ $(\widehat{Y}_0^{1,n},\ldots,\widehat{Y}^{k,n},\bar{Y}^{1,n},\ldots,\bar{Y}^{k,n})$ are chosen according to a maximal coupling we have 
       \begin{align*}
              \amsmathbb{P}(\widehat{Y}_0^{\ell,n}\neq \bar{Y}^{\ell,n}_0)\leq \amsmathbb{P}\Bigg(\bigcup_{\ell=1}^k \widehat{Y}_0^{\ell,n}\neq \bar{Y}^{\ell,n}\Bigg)=d_{TV}(\zeta^{n,k},\zeta^{\otimes k}),
       \end{align*}
       for $\ell=1,\ldots,k$. Lemma~\ref{lem:coupling_PA_bound} and an application of Grönwall's inequality yields
       \begin{align*}
              \sum_{\ell=1}^k\amsmathbb{P}(A^{\ell,n}_T)\leq k e^{C_1(T-\tau)}d_{TV}(\zeta^{n,k},\zeta^{\otimes k}).
       \end{align*}
       As $(\zeta^n)_{n\in\amsmathbb{N}}$ is $\zeta$-chaotic in total variation, the result follows.
\end{proof}

We continue with the pair $\amsmathbb{Q}_{\tau,\zeta^n}^{n}$ and $\widehat{\amsmathbb{Q}}^{n}_{\tau,\xi^n}$. For each $n\in\amsmathbb{N}$ we set 
\begin{equation}\label{eq:proof_coupling_nind_hat}
       \begin{pmatrix}
              X_t^{n}\\
              \widehat{X}_t^{n}
       \end{pmatrix}=
       \begin{pmatrix}
              Y_0^{n}\\
              \widehat{Y}_0^{n}
       \end{pmatrix}+\int_{(\tau,t]\times E^{2n}}
       \begin{pmatrix}
              y^{1:n}-X_{s-}^{\ell,n}\\
              \widehat{y}^{1:n}-\widehat{X}_{s-}^{n}
       \end{pmatrix}
       \mathcal{N}^{n}(\mathrm{d}s,\mathrm{d}(y^{1:n},\widehat{y}^{1:n})),\,\,t\in[\tau,T],
\end{equation}
where $Y_0^n=(Y_0^{1,n},\ldots,Y_0^{n,n})$ and $\widehat{Y}_0^{n}=(\widehat{Y}_0^{1,n},\ldots,\widehat{Y}_0^{n,n})$ are from the joint distribution $\nu^n$ satisfying (i)-(iv) above, while $\mathcal{N}^{n}$ is a random counting measure with compensating measure
\begin{align*}
       L^{n}(&\mathrm{d}t,\mathrm{d}(y^{1:n},\widehat{y}^{1:n}))\\
       &=C_{\lambda}\sum_{\ell=1}^n\bigg(\gamma_t(X_{t-}^{\ell,n},\ti{X}_{t-}^{\ell,n},\varepsilon_{t-}^n,\bar{p}_t^{\tau,\zeta},\mathrm{d}(y_{\ell},\widehat{y}_{\ell}))\prod_{j=1,j\neq\ell}^n\delta_{\{X_{t-}^{\ell,n},\widehat{X}_{t-}^{\ell,n}\}}(\mathrm{d}(y_j,\widehat{y}_j))\bigg)\mathrm{d}t,
\end{align*}
with $\gamma:[\tau,T]\times E^2\times\mathcal{P}(E)^2\times\mathcal{B}(E^2)\rightarrow [0,1]$ from Subsection~\ref{subsec:coupling}. Again each pair $(X^{\ell,n},\widehat{X}^{\ell,n})$ corresponds to the coupling construction of Subsection~\ref{subsec:coupling} and by an argument similar to the proof of Proposition~\ref{prop:coupling_JP} we can conclude that the distribution of $(X^{1,n},\ldots,X^{k,n},\widehat{X}^{1,n},\ldots,\widehat{X}^{k,n})$ is a coupling of $\amsmathbb{Q}^{n,k}_{\tau,\zeta^n}$ and $\widehat{\amsmathbb{Q}}_{\tau,\xi^n}^{n,k}$ for any $k\leq n$. Note that by property (iii) of $\nu^n$ the initial pairs $(Y_0^{\ell,n},\widehat{Y}_0^{\ell,n})$ and consequently the pairs $(X^{\ell,n},\widehat{X}^{\ell,n})$ are identically distributed for all $\ell=\{1,\ldots,n\}$.

\begin{lemma}
       It holds for each fixed $k\in\amsmathbb{N}$ that 
       \begin{align*}
              \lim_{n\rightarrow\infty}d_{TV}(\amsmathbb{Q}_{\tau,\zeta^n}^{n,k},\widehat{\amsmathbb{Q}}^{n,k}_{\tau,\xi^n})=0
       \end{align*}
\end{lemma}
\begin{proof}
       Set $A^{\ell,n}_T:=\bigcup_{t\in[\tau,T]}(X_t^{\ell,n}\neq\widehat{X}_t^{\ell,n})$. We then have that 
       \begin{align*}
              d_{TV}(\amsmathbb{Q}_{\tau,\zeta^n}^{n,k},\widehat{\amsmathbb{Q}}_{\tau,\xi^n}^{n,k})\leq \amsmathbb{P}\Bigg(\bigcup_{\ell=1}^k A^{\ell,n}_T\Bigg)\leq k\amsmathbb{P}(A_T^{\ell,n}).
       \end{align*}
       where the last inequality is due to the subadditivity of measures and the fact that the pairs $(X^{\ell,n},\hat{X}^{\ell,n})$ are identically distributed. Since by property (iii) of $\nu^n$
       \begin{align*}
              \amsmathbb{P}(Y_0^{\ell,n}\neq \bar{Y}_0^{\ell,n})=\nu^n_1(E^2_{x\neq y})=d_{TV}(\zeta^{n,1},\zeta),
       \end{align*}
       Lemma~\ref{lem:coupling_PA_bound} yields
       \begin{align*}
              \amsmathbb{P}(A^{\ell,n}_T)\leq d_{TV}(\zeta^{n,1},\zeta)+C_1\int_{\tau}^T\amsmathbb{P}(A^{\ell,n}_t)+\amsmathbb{E}[d_{BL}(\varepsilon_{t-}^n,\bar{p}_t^{\tau,\zeta})]\mathrm{d}t.
       \end{align*}
       By the triangle inequality we have
       \begin{align*}
              \amsmathbb{E}[d_{BL}(\varepsilon_{t-}^n,\bar{p}_{t}^{\tau,\zeta})]\leq \amsmathbb{E}[d_{BL}(\varepsilon_{t-}^n,\widehat{\varepsilon}_{t-}^n)]+\amsmathbb{E}[d_{BL}(\widehat{\varepsilon}_{t-}^n,\bar{p}_{t}^{\tau,\zeta})],
       \end{align*}
       where $\widehat{\varepsilon}^n_{t}:=\frac{1}{n}\sum_{\ell=1}^n\delta_{\{\widehat{X}_t^{\ell}\}}$. For the first term, we note that 
       \begin{align*}
              \amsmathbb{E}[d_{BL}(\varepsilon_{t-}^n,\widehat{\varepsilon}_{t-}^n)]\leq  \amsmathbb{E}[d_{TV}(\varepsilon_{t-}^n,\widehat{\varepsilon}_{t-}^n)]\leq \frac{1}{n}\sum_{\ell=1}^n\amsmathbb{P}(A^{\ell,n}_t)=\amsmathbb{P}(A^{\ell,n}_t).
       \end{align*}
       Combining this and applying Grönwall's inequality yields
       \begin{align*}
              \amsmathbb{P}(A_T^{\ell,n})\leq e^{2C_1(T-\tau)}\bigg(d_{TV}(\zeta^{n,1},\zeta)+\int_{\tau}^T\amsmathbb{E}[d_{BL}(\ti{\varepsilon}_{t-}^n,\bar{p}_{t}^{\tau,\zeta})]\mathrm{d}t\bigg)
       \end{align*}
       Since $\varepsilon_{t-}^n\neq\varepsilon_t^n$ for at most countably many $t\in[\tau,T]$, we can replace $\varepsilon_{t-}^n$ with $\varepsilon_{t}^n$ in the integral. By Lemma~\ref{lem:proof_chaos_1} we have that $(\widehat{\amsmathbb{Q}}_{\tau,\xi^n}^n)_{n\in\amsmathbb{N}}$ is $\bar{\amsmathbb{Q}}_{\tau,\zeta}$-chaotic in total variation, which by Proposition~2.2 in~\cite{Sznitman1991} implies that $\lim_{n\rightarrow\infty}\amsmathbb{E}[d_{BL}(\widehat{\varepsilon}_{t}^n,\bar{p}_{t})]=0$ for each $t\in[\tau,T]$. Applying the Dominated Convergence Theorem yields convergence to zero of the second term. The first term goes to zero since $(\zeta^n)_{n\in\amsmathbb{N}}$ is $\zeta$-chaotic in total variation. The desired result follows.
\end{proof}

\section{Conditional mean-field approximation}\label{sec:MF_approximation_conditional}
We now turn to the convergence of the conditional distributions of the $n$-individual model~(\ref{eq:n-ind}). In particular we are interested in the convergence of the following two kinds
\begin{align*}
       \amsmathbb{Q}^{n,1}_{\tau,\zeta}(\mathrm{d}f|X^{\circ}_{\tau}\in B)&\stackrel{TV}{\rightarrow}\bar{\amsmathbb{Q}}_{\tau,\zeta}(\mathrm{d}f|X^{\circ}_{\tau}\in B)\\
       \amsmathbb{Q}^{n,1}_{\tau,\zeta}(\mathrm{d}f|X^{\circ}_{\tau}=x)&\stackrel{TV}{\rightarrow}\bar{\amsmathbb{Q}}_{\tau,\zeta}(\mathrm{d}f|X^{\circ}_{\tau}=x),
\end{align*}
where $B\in\mathcal{B}(E)$ and $x\in E$. We will start with the first type of convergence which can be shown to be a direct consequence of Theorem~\ref{th:SDE_Chaos}, followed by the second which requires extra assumptions if $\bar{\amsmathbb{Q}}_{\tau,\zeta}(X^{\circ}_{\tau}=x)=0$. 

\subsection{Conditioning on a set}
Fix $\tau\in[0,T]$, $m\in\amsmathbb{N}$, let $n>m$, set $\mathbf{B}_m\in\mathcal{B}(E^m)$ and let $\beta^n_m$ and $\bar{\beta}_m^n$ denote a regular versions of $\zeta^n(\mathrm{d}y^{1:n}|y^{1:m}\in \mathbf{B}_m)$ and $\zeta^{\otimes n}(\mathrm{d}y^{1:n}|y^{1:m}\in \mathbf{B}_m)$. We now consider the jump process
\begin{align}\label{eq:n-ind-cond-B}
       X_t^{n}&=Y^{n}_0+\int_{(\tau,t]\times E^n}(y^{1:n}-X_{s-}^{n})\,Q^{n}(\mathrm{d}s,\mathrm{d}y^{1:n}),\quad t\in[\tau,T],
\end{align}
where $(Y^{1,n}_0,\ldots,Y^{n,n}_0)$ has distribution $\beta^n_m$ and the random counting measure $Q^{n}$ has compensating measure
\begin{align*}
       L^{n}(\mathrm{d}t,\mathrm{d}y)=\sum_{\ell=1}^n\bigg(\mu_t(X_{t-}^{\ell,n},\varepsilon_{t-}^n,\mathrm{d}y_{\ell})\prod_{j=1,j\neq\ell}^n\delta_{\{X_{t-}^{j,n}\}}(\mathrm{d}y_j)\bigg)\mathrm{d}t.
\end{align*}
This closely resembles the jump process~(\ref{eq:n-ind}), but the difference is that the initial distribution now is $\beta^n_m$, which is a conditional distribution of $\zeta^n$. Thus we can interpret (\ref{eq:n-ind-cond-B}) as the model for a cohort of $n$ individuals, where we already know that the first $m$ individuals have their initial values in the set $\mathbf{B}_m$. This interpretation is correct, since the distribution of (\ref{eq:n-ind-cond-B}) denoted by $\amsmathbb{Q}^n_{\tau,\zeta^n,\mathbf{B}_m}$ is the conditional distribution of $\amsmathbb{Q}^n_{\tau,\zeta^n}$ given that the initial values of first $m$ individuals are in $\mathbf{B}_m$.

\begin{proposition}\label{prop:n-ind-path-cond-B}
       Assume that $\zeta^{n,m}(\mathbf{B}_m)>0$. Then it holds that 
       \begin{align*}
              \amsmathbb{Q}_{\tau,\zeta^n,\mathbf{B}_m}^n(\mathrm{d}f)=\amsmathbb{Q}_{\tau,\zeta^n}^{n}(\mathrm{d}f|(X_{\tau}^{\circ,1},\ldots,X_{\tau}^{\circ,m})\in\mathbf{B}_m).
       \end{align*}
\end{proposition}
\begin{proof}
       Let $E^n_{\mathbf{B}_m}:=\mathbf{B}_m\times E^{n-m}$. Since $\amsmathbb{Q}_{\tau,x^{1:n}}^n$ by Theorem~\ref{th:SDE-cond} is a regular conditional distribution of $\amsmathbb{Q}_{\tau,\zeta^n}^n$ given all initial values we have that 
       \begin{align*}
              \amsmathbb{Q}^n_{\tau,\zeta}(\mathrm{d}f\cap(X^{\circ}_{\tau}\in E^n_{\mathbf{B}_m}))&=\int_{E^n_{\mathbf{B}_m}}\amsmathbb{Q}^n_{\tau,x^{1:n}}(\mathrm{d}f)\zeta^n(\mathrm{d}x^{1:n})\\
              &=\int_{E^n}\mathds{1}_{E^n_{\mathbf{B}_m}}(x^{1:n})\amsmathbb{Q}^n_{\tau,x^{1:n}}(\mathrm{d}f)\zeta^n(\mathrm{d}x^{1:n}).
       \end{align*}
       Since $\zeta$ is the initial distribution and $\beta^n_m$ is a conditional distribution we have
       \begin{align*}
              \amsmathbb{Q}^n_{\tau,\zeta}(X^{\circ}_{\tau}\in E^n_{\mathbf{B}_m})=\zeta^n(E^n_{\mathbf{B}_m})\quad\text{and}\quad \beta^n_m(\mathrm{d}x^{1:n})=\mathds{1}_{E^n_{\mathbf{B}_m}}(x^{1:n})\frac{\zeta^n(\mathrm{d}x^{1:n})}{\zeta^n(E^n_{\mathbf{B}_m})}
       \end{align*}
       which we can use to arrive at
       \begin{align*}
              \frac{\amsmathbb{Q}^n_{\tau,\zeta}(\mathrm{d}f\cap(X^{\circ}_{\tau}\in E^n_{\mathbf{B}_m}))}{\amsmathbb{Q}_{\tau,\zeta}^n(X^{\circ}_{\tau}\in E^n_{\mathbf{B}_m})}=\int_{E^n}\amsmathbb{Q}_{\tau,x^{1:n}}^n(\mathrm{d}f)\beta^n_m(\mathrm{d}x^{1:n})=\amsmathbb{Q}_{\tau,\zeta^n,\mathbf{B}_m}^n(\mathrm{d}f),
       \end{align*}
       where the last equality follows from Lemma~\ref{lem:JP-dist-char}.
\end{proof}

Similarly we can consider the jump process
\begin{align}\label{eq:n-mf-B}
       \ti{X}_t^n=\ti{Y}^n_0+\int_{(\tau,t]\times E^n}(y^{1:n}-\ti{X}_{s-}^n)\ti{Q}^n(\mathrm{d}s,\mathrm{d}y^{1:n}),\quad t\in[\tau,T],
\end{align}
where the initial values $\ti{Y}^{n}_0=(\ti{Y}^{1,n}_0,\ldots,\ti{Y}^{n,n}_0)$ have joint distribution $\bar{\beta}^n_m$ and the random counting measure $\ti{Q}^{n}$ has compensating measure 
\begin{align*}
       \ti{L}^n(\mathrm{d}t,\mathrm{d}y^{1:n})=\sum_{\ell=1}^n\bigg(\mu_t(\ti{X}_{t-}^{\ell,n},\bar{p}_t^{\tau,\zeta},\mathrm{d}y_{\ell})\prod_{j=1,j\neq\ell}^n\delta_{\{\ti{X}_{t-}^{j,n}\}}(\mathrm{d}y_j)\bigg)\mathrm{d}t.
\end{align*}
This closely resembles the jump process~(\ref{eq:n-mf}), but since $\bar{p}_t^{\tau,\zeta}=\pi_t(\bar{\amsmathbb{Q}}_{\tau,\zeta})$, the process (\ref{eq:n-mf-B}) is a linearised Markov jump process in the spirit of~(\ref{eq:lDDSDE}). The process~(\ref{eq:n-mf-B}) with distribution $\ti{\amsmathbb{Q}}_{\tau,\zeta,\mathbf{B}_m}^{\otimes n}:=\ti{X}^n(\amsmathbb{P})$ can thus be interpreted as the conditional distribution of $\bar{\amsmathbb{Q}}_{\tau,\zeta}^{\otimes n}$ given that the initial values of the first $m$ individuals are in $\mathbf{B}_m$.
\begin{proposition}\label{prop:n-ind-path-cond-B2}
       Assume that $\zeta^{\otimes m}(\mathbf{B}_m)>0$. Then it holds that 
       \begin{align*}
              \ti{\amsmathbb{Q}}_{\tau,\zeta,\mathbf{B}_m}^{\otimes n}(\mathrm{d}f)=\bar{\amsmathbb{Q}}_{\tau,\zeta}^{\otimes n}(\mathrm{d}f|(X_{\tau}^{\circ,1},\ldots,X_{\tau}^{\circ,m})\in\mathbf{B}_m).
       \end{align*}
\end{proposition}
\begin{proof}
       The proof is the same as the proof of Proposition~\ref{prop:n-ind-path-cond-B} with Lemma~\ref{lem:JP-dist-char} replaced by Lemma~\ref{lem:NJP-dist-char}.
\end{proof}
We are now ready to prove the following conditional version of Theorem~\ref{th:SDE_Chaos}.
\begin{theorem}\label{th:SDE_cond_chaos_B}
       Assume that $\zeta^{\otimes m}(\mathbf{B}_m)>0$. Under the assumptions of Theorem~\ref{th:SDE_Chaos} it holds for any fixed $k\geq m$ that 
       \begin{align*}
              \lim_{n\rightarrow\infty}d_{TV}\Big(\amsmathbb{Q}_{\tau,\zeta^n,\mathbf{B}_m}^{n,k},\ti{\amsmathbb{Q}}^{\otimes k}_{\tau,\zeta,\mathbf{B}_m}\Big)=0.
       \end{align*}
\end{theorem}
\begin{proof}
       Let $A\in\mathcal{B}(\amsmathbb{H}([\tau,T],E^k))$ be arbitrary and let $E^k_{\mathbf{B}_m}:=\mathbf{B}_m\times E^{k-m}$. By Proposition~\ref{prop:n-ind-path-cond-B} it holds that 
        \begin{align*}
              \amsmathbb{Q}^{n,k}_{\tau,\zeta^n,\mathbf{B}_m}(A)=\frac{\amsmathbb{Q}_{\tau,\zeta^n}^{n,k}(A\cap(X^{\circ}_{\tau}\in E^k_{\mathbf{B}_m}))}{\zeta^{n,m}(\mathbf{B}_m)}
       \end{align*}
       and by Proposition~\ref{prop:n-ind-path-cond-B2} it holds that
       \begin{align*}
              \ti{\amsmathbb{Q}}^{\otimes k}_{\tau,\zeta,\mathbf{B}_m}(A)=\frac{\bar{\amsmathbb{Q}}_{\tau,\zeta}^{\otimes k}(A\cap(X^{\circ}_{\tau}\in E^k_{\mathbf{B}_m})}{\zeta^{\otimes m}(\mathbf{B}_m)}.
       \end{align*}
       Utilising these two formulas in conjuction with (\ref{eq:TV_sup}) we arrive at
       \begin{align*}
              |\amsmathbb{Q}^{n,k}_{\tau,\zeta^n,\mathbf{B}_m}(A)-\ti{\amsmathbb{Q}}^{\otimes k}_{\tau,\zeta,\mathbf{B}_m}(A)|\leq \frac{d_{TV}(\amsmathbb{Q}^{n,k}_{\tau,\zeta_n},\bar{\amsmathbb{Q}}^{\otimes k}_{\tau,\zeta})}{\zeta^{n,m}(\mathbf{B}_m)}+\bigg|\frac{1}{\zeta^{n,m}(\mathbf{B}_m)}-\frac{1}{\zeta^{\otimes m}(\mathbf{B}_m)}\bigg|.
       \end{align*}
       For $n\rightarrow\infty$, the first term on the right hand side goes to zero due to Theorem~\ref{th:SDE_Chaos} and the second term goes to zero since $(\zeta^n)_{n\in\amsmathbb{N}}$ is $\zeta$-chaotic in total variation, which implies set-wise convergence. Since this upper bound does not depend on $A$ it survives taking the supremum over $A\in\mathcal{B}(\amsmathbb{H}([\tau,T],E^k))$ and the desired result follows.     
\end{proof}

Theorem~\ref{th:SDE_cond_chaos_B} shows, that we have 
\begin{align*}
       \amsmathbb{Q}_{\tau,\zeta^n}^{n,k}(\mathrm{d}f|(X_{\tau}^{\circ,1},\ldots,X_{\tau}^{\circ,m})\in\mathbf{B}_m)\stackrel{TV}{\rightarrow} \bar{\amsmathbb{Q}}_{\tau,\zeta}^{\otimes k}(\mathrm{d}f|(X_{\tau}^{\circ,1},\ldots,X_{\tau}^{\circ,m})\in\mathbf{B}_m)
\end{align*}
whenenever $\zeta^{\otimes m}(\mathbf{B}_m)>0$ for any $k\geq m$. Choosing $k=m=1$, we thus in particular arrive at
\begin{align*}
       \amsmathbb{Q}^{n,1}_{\tau,\zeta}(\mathrm{d}f|X^{\circ}_{\tau}\in B)&\stackrel{TV}{\rightarrow}\bar{\amsmathbb{Q}}_{\tau,\zeta}(\mathrm{d}f|X^{\circ}_{\tau}\in B)
\end{align*}
for any $B\in\mathcal{B}(E)$ for which $\zeta(B)>0$.

\subsection{Conditioning on a point}
Fix $\tau\in [0,T]$ and $m\in\amsmathbb{N}$, let $n\geq m$ and set $\mathbf{x}^m=(x_1,\ldots,x_m)\in E^m$. Let $(\zeta^n_{\mathbf{x}^m})_{\mathbf{x}^m\in E^m}\subset\mathcal{P}(E^n)$ denote a regular version of 
\begin{align*}
       \zeta^n(\mathrm{d}(y_1,\ldots,y_n)|(y_1,\ldots,y_m)=\mathbf{x}^m)
\end{align*}
for $n\geq m$. Clearly $\zeta^n_{\mathbf{x}^m}=\delta_{\{\mathbf{x^m}\}}\otimes\beta^{n-m}_{\mathbf{x}^m}$, where $(\beta^{n-m}_{\mathbf{x}^{m}})_{\mathbf{x}^m\in E^m}\subset\mathcal{P}(E^{n-m})$ is a regular conditional probability of 
\begin{align*}
       \zeta^{n,n-m}(\mathrm{d}(y_{m+1},\ldots,y_{n})|(y_1,\ldots,y_m)=\mathbf{x}^m).
\end{align*}
Unfortunately $\zeta^n_{\mathbf{x}^m}$ is no longer exchangeable, but luckily $\beta^{n-m}_{\mathbf{x}^m}$ remains so. Using $\zeta^n_{\mathbf{x}^m}$ as initial distribution, we can define the jump process 
\begin{align}\label{eq:n-ind-cond}
       X_t^{n}&=Y^{n}_0+\int_{(\tau,t]\times E^n}(y^{1:n}-X_{s-}^{n})\,Q^{n}(\mathrm{d}s,\mathrm{d}y^{1:n}),\quad t\in[\tau,T],
\end{align}
where the random counting measure $Q^{n}$ has compensating measure
\begin{align*}
       L^{n}(\mathrm{d}t,\mathrm{d}y^{1:n})=\sum_{\ell=1}^n\bigg(\mu_t(X_{t-}^{\ell,n},\varepsilon_{t-}^n,\mathrm{d}y_{\ell})\prod_{j=1,j\neq\ell}^n\delta_{\{X_{t-}^{j,n}\}}(\mathrm{d}y_j)\bigg)\mathrm{d}t.
\end{align*}
This closely resembles (\ref{eq:n-ind-cond-B}), the only difference being that the first $m$ individuals now have known and deterministic starting values $(Y^{1,n}_0,\ldots,Y^{m,n}_0)=\mathbf{x}_m$, while the rest have random starting values $(Y^{m+1,n}_0,\ldots, Y^{n,n}_0)$ from distribution $\beta^{n-m}_{\mathbf{x}^m}$. Thus it is not suprising, that the distribution of (\ref{eq:n-ind-cond}) denoted by $\amsmathbb{Q}^n_{\tau,\zeta^n,\mathbf{x}^m}$ turns out to be a regular conditional distribution of $\amsmathbb{Q}^n_{\tau,\zeta^n}$. 

\begin{proposition}\label{prop:n-ind-path-cond}
       The family $(\amsmathbb{Q}_{\tau,\zeta^n,\mathbf{x}^m}^n)_{\mathbf{x}^m\in E^m}$ constitutes a regular conditional distribution of $\amsmathbb{Q}^n_{\tau,\zeta^n}(\mathrm{d}f|(X^{\circ,1}_{\tau},\ldots,X^{\circ,m}_{\tau})=\mathbf{x}^m)$. Thus it holds that
       \begin{align*}
              \amsmathbb{Q}^n_{\tau,\zeta}(\mathrm{d}f)=\int_{E^m}\amsmathbb{Q}_{\tau,\zeta^n,\mathbf{x}^m}^n(\mathrm{d}f)\zeta^{n,m}(\mathrm{d}\mathbf{x}^m).
       \end{align*}
\end{proposition}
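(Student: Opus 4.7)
The plan is to obtain the statement as an iterated application of Theorem~\ref{th:SDE-cond} to the joint $n$-dimensional SDE, followed by the tower property for disintegrations. As noted in the text immediately after~(\ref{eq:n-ind}), the system (\ref{eq:n-ind}) can be viewed as a single SDE on $E^n$ driven by a random counting measure with state space $\bigcup_{\ell=1}^n A\times\{\ell\}$, so the $n$-dimensional analogue of Theorem~\ref{th:SDE-cond} applies: there exists a measurable family $(\amsmathbb{Q}^n_{\tau,\mathbf{y}})_{\mathbf{y}\in E^n}$ of path-laws with deterministic initial value $Y^n=\mathbf{y}$ satisfying
\begin{align*}
       \amsmathbb{Q}^n_{\tau,\zeta^n}(\mathrm{d}\omega)=\int_{E^n}\amsmathbb{Q}^n_{\tau,\mathbf{y}}(\mathrm{d}\omega)\zeta^n(\mathrm{d}\mathbf{y}),
\end{align*}
and which is a regular conditional distribution of $\amsmathbb{Q}^n_{\tau,\zeta^n}$ given the full initial vector $Y^n$.

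Next I would use Assumption~\ref{ass:cond_chaos}(1), which supplies a regular conditional probability $(\zeta^n(\mathbf{x}^m))_{\mathbf{x}^m\in E^m}$ of $(Y^{m+1,n},\ldots,Y^{n,n})$ given $(Y^{1,n},\ldots,Y^{m,n})=\mathbf{x}^m$, so that $\zeta^n$ disintegrates as
\begin{align*}
       \zeta^n(\mathrm{d}\mathbf{y})=\zeta^{n,m}(\mathrm{d}\mathbf{x}^m)\otimes\zeta^n(\mathbf{x}^m)(\mathrm{d}\mathbf{y}^{n-m}).
\end{align*}
Substituting this decomposition into the above disintegration of $\amsmathbb{Q}^n_{\tau,\zeta^n}$ and invoking Fubini's theorem yields
\begin{align*}
       \amsmathbb{Q}^n_{\tau,\zeta^n}(\mathrm{d}\omega)=\int_{E^m}\bigg(\int_{E^{n-m}}\amsmathbb{Q}^n_{\tau,(\mathbf{x}^m,\mathbf{y}^{n-m})}(\mathrm{d}\omega)\,\zeta^n(\mathbf{x}^m)(\mathrm{d}\mathbf{y}^{n-m})\bigg)\zeta^{n,m}(\mathrm{d}\mathbf{x}^m).
\end{align*}

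To finish, I would identify the inner integral with $\amsmathbb{Q}^n_{\tau,\rho^n(\mathbf{x}^m)}$. Since by definition $\rho^n(\mathbf{x}^m)=\delta_{\{\mathbf{x}^m\}}\otimes\zeta^n(\mathbf{x}^m)$, a second application of the $n$-dimensional version of Theorem~\ref{th:SDE-cond}, now with initial distribution $\rho^n(\mathbf{x}^m)$, gives
\begin{align*}
       \amsmathbb{Q}^n_{\tau,\rho^n(\mathbf{x}^m)}(\mathrm{d}\omega)=\int_{E^n}\amsmathbb{Q}^n_{\tau,\mathbf{y}}(\mathrm{d}\omega)\rho^n(\mathbf{x}^m)(\mathrm{d}\mathbf{y})=\int_{E^{n-m}}\amsmathbb{Q}^n_{\tau,(\mathbf{x}^m,\mathbf{y}^{n-m})}(\mathrm{d}\omega)\,\zeta^n(\mathbf{x}^m)(\mathrm{d}\mathbf{y}^{n-m}),
\end{align*}
which is exactly the inner integral, establishing the disintegration formula in the statement.

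It remains to verify that $\mathbf{x}^m\mapsto\amsmathbb{Q}^n_{\tau,\rho^n(\mathbf{x}^m)}$ is a Markov kernel: for each $\mathbf{x}^m$ it is a probability measure by construction, and measurability in $\mathbf{x}^m$ follows from the joint measurability of $(\mathbf{y},B)\mapsto\amsmathbb{Q}^n_{\tau,\mathbf{y}}(B)$ (a consequence of the pathwise construction used in Theorem~\ref{th:SDE-existence}) combined with the Markov kernel property of $\mathbf{x}^m\mapsto\zeta^n(\mathbf{x}^m)$ supplied by Assumption~\ref{ass:cond_chaos}(1). The main point to keep track of is thus the bookkeeping of the two disintegrations; once both are in hand, the statement is immediate, and I do not expect any genuinely analytic obstacle. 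The delicate step is really ensuring that Theorem~\ref{th:SDE-cond}, stated for the one-coordinate SDE, transfers verbatim to the joint system on $E^n$; this is justified by the same reduction used in deriving the existence of $(\amsmathbb{Q}^n_{\tau,\zeta^n})_{n\in\amsmathbb{N}}$ from Theorem~\ref{th:SDE-existence}.
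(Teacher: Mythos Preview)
Your proposal is correct and follows essentially the same approach as the paper: both apply Theorem~\ref{th:SDE-cond} to the $n$-dimensional system to get the family $(\amsmathbb{Q}^n_{\tau,\mathbf{y}})_{\mathbf{y}\in E^n}$, use the disintegration $\zeta^n(\mathrm{d}\mathbf{y})=\int_{E^m}\rho^n(\mathbf{x}^m)(\mathrm{d}\mathbf{y})\,\zeta^{n,m}(\mathrm{d}\mathbf{x}^m)$, and combine via Fubini. Your write-up is in fact more careful than the paper's, since you also verify the Markov kernel measurability in $\mathbf{x}^m$, which the paper leaves implicit.
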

\begin{proof}
       Let $B_m\in\mathcal{B}(E^m)$ and set $E^n_{B_m}=B_m\times E^{n-m}$. Since $\zeta^n_{\mathbf{x}_m}$ is a regular conditional distribution of $\zeta^n$ given the first $m$ coordinates, we have 
       \begin{align*}
             \int_{B_m}\zeta^n_{\mathbf{x}_m}(\mathrm{d}y^{1:n})\zeta^{n,m}(\mathrm{d}\mathbf{x}_m)=\zeta^n(\mathrm{d}y^{1:n}\cap E^n_{B_m})=\mathds{1}_{E^n_{B_m}}(y^{1:m})\zeta^n(\mathrm{d}y^{1:n}) 
       \end{align*}
       and similarly since $\amsmathbb{Q}_{\tau,y^{1:n}}^n$ by Theorem~\ref{th:SDE-cond} is a regular conditional distribution of $\amsmathbb{Q}_{\tau,\zeta^n}^n$ given all initial values, we get
       \begin{align*}
              \amsmathbb{Q}_{\tau,\zeta^n}^n(\mathrm{d}f\cap (X^{\circ}\in E^n_{B_m}))=\int_{E^n}\mathds{1}_{E^n_{B_m}}(y^{1:n})\amsmathbb{Q}_{\tau,y^{1:n}}^n(\mathrm{d}f)\zeta^n(\mathrm{d}y^{1:n}).
       \end{align*}
       Using these two identities we arrive at 
       \begin{align*}
              \int_{B_m}\amsmathbb{Q}^n_{\tau,\zeta^n,\mathbf{x}_m}(\mathrm{d}f)\zeta^{n,m}(\mathrm{d}\mathbf{x}_m)&=\int_{E^n}\amsmathbb{Q}_{\tau,y^{1:n}}^n(\mathrm{d}f)\int_{B_m}\zeta^n_{\mathbf{x}_m}(\mathrm{d}y^{1:n})\zeta^{n,m}(\mathrm{d}\mathbf{x}_m)\\
              &=\int_{E^n}\mathds{1}_{E^n_{B_m}}(y^{1:n})\amsmathbb{Q}_{\tau,y^{1:n}}^n(\mathrm{d}f)\zeta^n(\mathrm{d}y^{1:n})\\
              &=\amsmathbb{Q}_{\tau,\zeta^n}^n(\mathrm{d}f\cap (X^{\circ}\in E^n_{B_m})).
       \end{align*}
\end{proof}

Similarly to (\ref{eq:n-mf-B}) the mean-field model we consider now is given by
\begin{align}\label{eq:n-mf-cond}
       \ti{X}_t^n=\ti{Y}^n_0+\int_{(\tau,t]\times E^n}(y^{1:n}-\ti{X}_{s-}^n)\ti{Q}^n(\mathrm{d}s,\mathrm{d}y^{1:n}),\quad t\in[\tau,T],
\end{align}
where the initial values $\ti{Y}^{n}_0=(\ti{Y}^{1,n}_0,\ldots,\ti{Y}^{n,n}_0)$ have joint distribution 
\begin{align*}
       \delta_{\mathbf{x}^m}(\mathrm{d}y^{1:m})\otimes\zeta^{\otimes n-m}(\mathrm{d}y^{m+1:n})
\end{align*}
and the random counting measure $\ti{Q}^{n}$ has compensating measure 
\begin{align*}
       \ti{L}^n(\mathrm{d}t,\mathrm{d}y^{1:n})=\sum_{\ell=1}^n\bigg(\mu_t(\ti{X}_{t-}^{\ell,n},\bar{p}_t^{\tau,\zeta},\mathrm{d}y_{\ell})\prod_{j=1,j\neq\ell}^n\delta_{\{\ti{X}_{t-}^{j,n}\}}(\mathrm{d}y_j)\bigg)\mathrm{d}t.
\end{align*}
The individuals are all independent and the jump process distribution of (\ref{eq:n-mf-cond}) satisfies
\begin{align*}
       \ti{X}^n(\amsmathbb{P})=\bigotimes_{\ell=1}^m\ti{\amsmathbb{Q}}_{\tau,\zeta,x_{\ell}} \otimes \bar{\amsmathbb{Q}}_{\tau,\zeta}^{\otimes n-m}:=\ti{\amsmathbb{Q}}^{\otimes n}_{\tau,\zeta,\mathbf{x}^m}.
\end{align*}
Thus all individuals are independent and while the first $m$ invididuals each have distribution $\ti{\amsmathbb{Q}}_{\tau,\zeta,x_{\ell}}$, which is the distribution of the linearised Markov jump process~(\ref{eq:lDDSDE}), the remaining $n-m$ individuals have distribution $\bar{\amsmathbb{Q}}_{\tau,\zeta}$. Thus by Theorem~\ref{th:DDSDE-cond}, we can conclude that $(\ti{\amsmathbb{Q}}^{\otimes n}_{\tau,\zeta,\mathbf{x}^m})_{\mathbf{x}^m\in E^m}$ is a regular conditional distribution of
\begin{align*}
       \ti{\amsmathbb{Q}}^{\otimes n}_{\tau,\zeta,\mathbf{x}^m}(\mathrm{d}f)=\bar{\amsmathbb{Q}}_{\tau,\zeta}(\mathrm{d}f|(X^{\circ,1}_{\tau},\ldots,X^{\circ,m}_{\tau})=\mathbf{x}^m).
\end{align*}

If $\zeta$ has point masses, then we can obtain the following corollary to Theorem~\ref{th:SDE_cond_chaos_B}:
\begin{corollary}\label{cor:chaos}
       Let $\mathbf{x}^m\in E^m$. If $\zeta^{\otimes m}(\{\mathbf{x}^m\})>0$, then under the assumptions of Theorem~\ref{th:SDE_Chaos} it holds for any fixed $k\geq m$ that 
       \begin{align*}
              \lim_{n\rightarrow\infty}d_{TV}\bigg(\amsmathbb{Q}^{n,k}_{\tau,\zeta^n,\mathbf{x}^m},\ti{\amsmathbb{Q}}^{\otimes k}_{\tau,\zeta,\mathbf{x}^m}\bigg)=0.
       \end{align*}
\end{corollary}

This result can be interpreted as follows: If we have a group of $m$ individuals with known initial values that are embedded into a large cohort of individuals whose initial values are random according to a chaotic distribution, then these $m$ individuals become asymptotically independent, each with distribution $\ti{\amsmathbb{Q}}_{\tau,\zeta,x_{\ell}}$. Simultaneously for any fixed $k>m$ the remaining $k-m$ individuals become asymptotically independent of the first $m$ individuals and of each other, each with distribution $\bar{\amsmathbb{Q}}_{\tau,\zeta}$, even though they also depend on the individuals $1\leq \ell\leq m$. Thus even though the collective as a whole is no longer exchangeable, we still recover a conditional version of chaos in total variation. The intuition behind this result is, that changing the initial distribution of a finite number of individuals has no effect on the empirical distribution of the collective, when the total number of individuals tends to infinity. 

Sadly Corollary~\ref{cor:chaos} only applies whenever $\zeta^{\otimes m}(\{\mathbf{x}^m\})>0$. This is mostly the case if the state space $E$ is at most countably ininite, but if $E$ is uncountable, then it most likely happens that $\zeta^{\otimes m}(\{\mathbf{x}^m\})=0$. In that case, we have to make an extra assumption.

\begin{theorem}\label{th:SDE_Chaos-cond}
       Let $\mathbf{x}^m\in E^m$ and assume that the sequence $(\beta^{n-m}_{\mathbf{x}^m})_{n>m}$ is $\zeta$-chaotic in total variation. Then for any fixed $k\geq m$ it holds that 
       \begin{align*}
              \lim_{n\rightarrow\infty} d_{TV}\Big(\amsmathbb{Q}^{n,k}_{\tau,\zeta^n,\mathbf{x}^m},\ti{\amsmathbb{Q}}^{\otimes k}_{\tau,\zeta,\mathbf{x}^m}\Big)=0.
       \end{align*}
\end{theorem}

\begin{remark}
   Note that the assumption that $(\beta^{n-m}_{\mathbf{x}^m})_{n\geq m}$ is $\zeta$-chaotic in total variation is not automatically implied by the fact that $(\zeta^n)_{n\in\amsmathbb{N}}$ is $\zeta$-chaotic in total variation. Thus it is a stronger assumption. What always is implied is the following. By Theorem~3 of~\cite{Ganssler1971} it holds for any $k\in\amsmathbb{N}$ that 
   \begin{align*}
       \lim_{n\rightarrow\infty}\int_{E^m}d_{TV}(\beta^{n-m,k}_{\mathbf{x}^m},\zeta^{\otimes k})\zeta^{\otimes m}(\mathrm{d}\mathbf{x}^m)=0.
   \end{align*}
   So in that sense $(\beta^{n-m}_{\mathbf{x}^m})_{n\geq m}$ is $\zeta$-chaotic in total variation in $L^1$.
\end{remark}

\subsection{Proof of Theorem~\ref{th:SDE_Chaos-cond}}
The strategy of the proof is similar to the proof of Theorem~\ref{th:SDE_Chaos}, with the added complication that the first $m$ individuals are no longer identically distributed. Similar to the proof of Theorem~\ref{th:SDE_Chaos} the key is to construct an appropriate intermediate jump process distribution $\widehat{\amsmathbb{Q}}_{\tau,\xi^n,\mathbf{x}^m}^n$ and then use the triangle inequality to conclude that is sufficient to show 
\begin{align*}
       \lim_{n\rightarrow\infty}d_{TV}\big(\amsmathbb{Q}^{n,k}_{\tau,\zeta^n,\mathbf{x}^m},\widehat{\amsmathbb{Q}}_{\tau,\xi^n,\mathbf{x}^m}^{n,k}\big)=0\quad\text{and}\quad \lim_{n\rightarrow\infty}d_{TV}\big(\widehat{\amsmathbb{Q}}_{\tau,\xi^n,\mathbf{x}^m}^{n,k},\ti{\amsmathbb{Q}}_{\tau,\zeta,\mathbf{x}^m}^{\otimes k}\big)=0.
\end{align*}
For the direct approach we would need a coupling of $\zeta^n_{\mathbf{x}^m}$ and $\zeta^{\otimes n}_{\mathbf{x}^m}$ which is maximal for each marginal and this is again impossible. Instead we again construct a measure $\nu^n_{\mathbf{x}^m}$ on $(E^{2n},\mathcal{B}(E^{2n}))$ now with the properties:
\begin{enumerate}
       \item[(i)] The marginal $\nu^n_{\mathbf{x}^m}(\mathrm{d}x^{1:n}\times E^n)$ is equal to $\zeta^n_{\mathbf{x}^m}$
       \item[(ii)] Let $\xi^n_{\mathbf{x}^m}(\mathrm{d}y^{1:n}):=\nu^n_{\mathbf{x}^m}(E^n\times\mathrm{d}y^{1:n})$. Then $\xi^{n,m}_{\mathbf{x}^m}(\mathrm{d}y^{1:m})=\delta_{\mathbf{x}^m}(\mathrm{d}y^{1:m})$ and $\xi^n_{\mathbf{x}^m}(E^{\ell-1}\times\mathrm{d}y_{\ell}\times E^{n-\ell})=\zeta(\mathrm{d}y_{\ell})$ for all $\ell=m+1,\ldots,n$.
       \item[(iii)] The marginal $\nu^n_{\mathbf{x}^m}((E^{\ell-1}\times\mathrm{d}x_{\ell}\times E^{n-\ell})\times(E^{\ell-1}\times\mathrm{d}y_{\ell}\times E^{n-\ell}))$ equals $\delta_{\{\mathbf{x}^m_{\ell},\mathbf{x}^m_{\ell}\}}(\mathrm{d}x_{\ell},\mathrm{d}y_{\ell})$ for $\ell=1,\ldots,m$ and the same maximal coupling of $\beta^{n-m,1}_{\mathbf{x}^m}$ and $\zeta$ for all $\ell=m+1,\ldots,n$.
       \item[(iv)] Let $\alpha^{n-m}_{\mathbf{x}^m}(\mathrm{d}y^{m+1:n}):=\xi^n_{\mathbf{x}^m}(E^m\times\mathrm{d}y^{m+1:n})$. Then $(\alpha^{n-m}_{\mathbf{x}^m})_{n>m}$ is $\zeta$-chaotic in total variation.
\end{enumerate}
The existence of such a $\nu^n_{\mathbf{x}^m}$ for all $n\geq m$ can be verified by using Lemma~\ref{lem:coupling_chaos} and Lemma~\ref{lem:coupling_extension}. The intermediate jump process $\widehat{X}^n=(\widehat{X}^{1,n},\ldots,\widehat{X}^{n,n})$ is given by (\ref{eq:proof_xhat}) with initial distribution $\xi^n_{\mathbf{x}^m}$ instead of $\xi^n$. The distribution of $\widehat{X}^{n}$ is denoted by $\widehat{\amsmathbb{Q}}_{\tau,\xi^n,\mathbf{x}^m}^{n}$. We start by considering the pair $\widehat{\amsmathbb{Q}}_{\tau,\xi^n,\mathbf{x}^m}^{n}$ and $\ti{\amsmathbb{Q}}_{\tau,\zeta,\mathbf{x}^m}^{\otimes n}$.

\begin{lemma}\label{lem:cond_proof_Qhat_Qbar}
       It holds for each fixed $k\geq m$ that 
       \begin{align*}
              \lim_{n\rightarrow\infty}d_{TV}\big(\widehat{\amsmathbb{Q}}^{n,k}_{\tau,\xi^n,\mathbf{x}^m},\ti{\amsmathbb{Q}}_{\tau,\zeta,\mathbf{x}^m}^{\otimes k}\big)=0.
       \end{align*}
\end{lemma}
\begin{proof}
       The coupling $(\widehat{X}^n,\ti{X}^n)(\amsmathbb{P})$ of $\widehat{\amsmathbb{Q}}_{\tau,\xi^n,\mathbf{x}^m}^{n}$ and $\ti{\amsmathbb{Q}}_{\tau,\zeta,\mathbf{x}^m}^{\otimes n}$ is given by (\ref{eq:proof_coupling_hat_bar}), where the joint distribution of the initial values $(\widehat{Y}_0^n,\ti{Y}_0^n)$ is a coupling of $\zeta^n_{\mathbf{x}^m}$ and $\xi^n_{\mathbf{x}^m}$ with the following properties
       \begin{enumerate}
              \item[(i)] $\amsmathbb{P}\left(\bigcup_{\ell=1}^m(\widehat{Y}^{\ell,n}_0\neq\ti{Y}^{\ell,n}_0)\right)=d_{TV}(\delta_{\{\mathbf{x}^m\}},\delta_{\{\mathbf{x}^m\}})=0$
              \item[(ii)]$\amsmathbb{P}\left(\bigcup_{\ell=m+1}^k (\widehat{Y}^{\ell,n}_0\neq\ti{Y}^{\ell,n}_0)\right)=d_{TV}(\alpha^{n-m,k-m}_{\mathbf{x}^m},\zeta^{\otimes k-m})$.
       \end{enumerate}
       The existence of such a coupling can be verified using Lemma~\ref{lem:coupling_marginal} and Lemma~\ref{lem:coupling_extension}. Set $A^{\ell,n}_T:=\bigcup_{t\in[\tau,T]}(\widehat{X}_t^{\ell,n}\neq\ti{X}_t^{\ell,n})$. We then have that 
       \begin{align*}
              d_{TV}(\widehat{\amsmathbb{Q}}_{\tau,\xi^{n,k}},\ti{\amsmathbb{Q}}_{\tau,\zeta}^{\otimes k})\leq \amsmathbb{P}\Bigg(\bigcup_{\ell=1}^k A^{\ell,n}_T\Bigg)\leq \sum_{\ell=1}^m \amsmathbb{P}(A_T^{\ell,n})+\sum_{\ell=m+1}^k\amsmathbb{P}(A_T^{\ell,n})
       \end{align*}
       where the last inequality is due to the subadditivity of measures. Due to property (i) and (ii) of the joint distribution of $(\widehat{Y}_0^n,\ti{Y}_0^n)$ we have 
       \begin{align*}
              \amsmathbb{P}(\widehat{Y}_0^{\ell,n}\neq \ti{Y}^{\ell,n}_0)&\leq \amsmathbb{P}\left(\bigcup_{\ell=1}^m(\widehat{Y}^{\ell,n}\neq\ti{Y}^{\ell,n}_0)\right)=d_{TV}(\delta_{\{\mathbf{x}^m\}},\delta_{\{\mathbf{x}^m\}})=0
       \end{align*}
       for $\ell=1,\ldots,m$ and 
       \begin{align*}
              \amsmathbb{P}(\widehat{Y}_0^{\ell,n}\neq \ti{Y}^{\ell,n}_0)&\leq \amsmathbb{P}\left(\bigcup_{\ell=m+1}^k (\widehat{Y}^{\ell,n}_0\neq\ti{Y}^{\ell,n}_0)\right)=d_{TV}(\alpha^{n-m,k-m}_{\mathbf{x}^m},\zeta^{\otimes k-m})
       \end{align*}
       for $\ell=m+1,\ldots,k$. Lemma~\ref{lem:coupling_PA_bound} and an application of Grönwall's inequality yields
       \begin{align*}
              \sum_{\ell=1}^m\amsmathbb{P}(A^{\ell,n}_T)&\leq m e^{C_1(T-\tau)}d_{TV}(\delta_{\{\mathbf{x}^m\}},\delta_{\{\mathbf{x}^m\}})=0\\
              \sum_{\ell=m+1}^k\amsmathbb{P}(A^{\ell,n}_T)&\leq (k-m) e^{C_1(T-\tau)}d_{TV}(\alpha^{n-m,k-m}_{\mathbf{x}^m},\zeta^{\otimes k-m}).
       \end{align*}
       As $(\alpha^{n-m}_{\mathbf{x}^m})_{n>m}$ is $\zeta$-chaotic in total variation, the result follows.
\end{proof}

We now turn towards proving
\begin{align*}
       \lim_{n\rightarrow\infty}d_{TV}\big(\amsmathbb{Q}^{n,k}_{\tau,\zeta^n,\mathbf{x}^m},\widehat{\amsmathbb{Q}}_{\tau,\xi^n,\mathbf{x}^m}^{n,k}\big)=0.
\end{align*}
The coupling $(X^n,\widehat{X}^n)(\amsmathbb{P})$ of $\amsmathbb{Q}^{n}_{\tau,\zeta^n,\mathbf{x}^m}$ and $\widehat{\amsmathbb{Q}}_{\tau,\xi^n,\mathbf{x}^m}^{n}$ is given by (\ref{eq:proof_coupling_nind_hat}), where the joint distribution of the initial values $(Y_0^n,\widehat{Y}_0^n)$ is given by the measure $\nu^n_{\mathbf{x}^m}$ satisfying points (i)-(iv) above. Note that by property (iii) of $\nu_{\mathbf{x}_m}^n$ it is now only the pairs $(Y_0^{\ell,n},\widehat{Y}_0^{\ell,n})_{\ell=m+1,\ldots,n}$ and consequently $(X^{\ell,n},\widehat{X}^{\ell,n})_{\ell=m+1,\ldots,n}$ that are identically distributed. We start with the following result.

\begin{lemma}\label{lem:cond_proof_emp}
       Let $\widehat{\varepsilon}_{t}^n:=\frac{1}{n}\sum_{\ell=1}^n\delta_{\{\widehat{X}^{\ell,n}_t\}}$. Then it holds that 
       \begin{align*}
              \lim_{n\rightarrow\infty}\amsmathbb{E}[d_{BL}(\widehat{\varepsilon}_{t}^n,\bar{p}_t^{\tau,\zeta})]=0.
       \end{align*}
\end{lemma}
\begin{proof}
       For any function $f:E\rightarrow[-1,1]$ let $\bar{m}_t(f):=\int_Ef(x)\bar{p}_t^{\tau,\zeta}(\mathrm{d}x)$. By definition of $d_{BL}$, the triangle inequality and subadditivity of the supremum we get
       \begin{align*}
              d_{BL}(\widehat{\varepsilon}_{t}^n,\bar{p}_t^{\tau,\zeta})=&\frac{1}{2}\sup_{f\in \text{BL}}\bigg\{\bigg|\frac{1}{n}\sum_{\ell=1}^n f(\widehat{X}^{\ell,n})-\bar{m}_t(f)\bigg|\bigg\}\\
              \leq& \frac{n-m}{n}\sup_{f\in \text{BL}}\bigg\{\bigg|\frac{1}{n-m}\frac{1}{2}\sum_{\ell=m+1}^n \big(f(\widehat{X}^{\ell,n})-\bar{m}_t(f)\big)\bigg|\bigg\}\\
              &+\frac{1}{2}\sup_{f\in \text{BL}}\bigg\{\bigg|\frac{1}{n}\sum_{\ell=1}^m \big(f(\widehat{X}^{\ell,n})-\bar{m}_t(f)\big)\bigg|\bigg\}\\
              \leq&d_{BL}(\widehat{\rho}^{n-m}_{t},\bar{p}_t)+\frac{m}{n}.
       \end{align*}
       where $\widehat{\rho}^{n-m}_{t}:=\frac{1}{n-m}\sum_{\ell=m+1}^n\delta_{\{\widehat{X}_t^{\ell,n}\}}$. Substituting $i=n-m$ we get 
       \begin{align*}
              \widehat{\rho}^{n-m}_{t}:=\frac{1}{n-m}\sum_{\ell=m+1}^n\delta_{\{\widehat{X}_t^{\ell,n}\}}=\frac{1}{i}\sum_{\ell=1}^i\delta_{\{\widehat{X}_t^{m+\ell,m+i}\}}.
       \end{align*}
       We thus arrive at
       \begin{align*}
              \amsmathbb{E}[d_{BL}(\widehat{\varepsilon}_{t}^{m+i},\bar{p}_t^{\tau,\zeta})]\leq 2\frac{m}{m+i}+\amsmathbb{E}[d_{BL}(\widehat{\rho}^{i}_{t},\bar{p}_t^{\tau,\zeta})].
       \end{align*}
       The first term obviously goes to zero for $i\rightarrow\infty$. The second term requires a little more work. For $i\in\amsmathbb{N}$ and $t\in[\tau,T]$ let $\theta^i_t$ denote the distribution of $(\widehat{X}^{m+1,m+i}_t,\ldots,\widehat{X}^{m+i,m+i}_t)$. Then for any fixed $j\in\amsmathbb{N}$ and $t\in [\tau,T]$ we have 
       \begin{align*}
              d_{TV}\big(\theta_t^{i,j},(\bar{p}_t^{\tau,\zeta})^{\otimes j}\big)\leq d_{TV}(\widehat{\amsmathbb{Q}}^{m+i,m+j}_{\tau,\zeta^{m+i},\mathbf{x}^m},\bar{\amsmathbb{Q}}^{\otimes m+j}_{\tau,\zeta,\mathbf{x}^m})
       \end{align*}
       By Lemma~\ref{lem:cond_proof_Qhat_Qbar} we thus have that 
       \begin{align*}
              \lim_{n\rightarrow\infty}d_{TV}(\theta_t^{i,j},\bar{p}_t^{\otimes j})=0
       \end{align*}
       for any fixed $t\in[\tau,T]$ and $j\in\amsmathbb{N}$. Thus $(\theta^i_t)_{i\in\amsmathbb{N}}$ is $\bar{p}_t^{\tau,\zeta}$-chaotic in total variation, which by Proposition~2.2 in~\cite{Sznitman1991} implies that $\lim_{i\rightarrow\infty}\amsmathbb{E}[d_{BL}(\widehat{\rho}_{t}^i,\bar{p}_{t}^{\tau,\zeta})]=0$ for each $t\in[\tau,T]$.
\end{proof}

We finish the proof of Theorem~\ref{th:SDE_Chaos-cond} with the following final result.
\begin{lemma}
       It holds for each fixed $k\geq m$ that 
       \begin{align*}
              \lim_{n\rightarrow\infty}d_{TV}\big(\amsmathbb{Q}^{n,k}_{\tau,\zeta^n,\mathbf{x}^m},\widehat{\amsmathbb{Q}}_{\tau,\xi^n,\mathbf{x}^m}^{n,k}\big)=0.
       \end{align*}
\end{lemma}
\begin{proof}
       Set $A^{\ell,n}_T:=\bigcup_{t\in[\tau,T]}(X_t^{\ell,n}\neq\widehat{X}_t^{\ell,n})$. We then have that 
       \begin{align*}
              d_{TV}(\amsmathbb{Q}_{\tau,\zeta^n}^{n,k},\widehat{\amsmathbb{Q}}_{\tau,\xi^n}^{n,k})\leq \amsmathbb{P}\Bigg(\bigcup_{\ell=1}^k A^{\ell,n}_T\Bigg)\leq \sum_{\ell=1}^m\amsmathbb{P}(A_T^{\ell,n})+(k-m)\amsmathbb{P}(A_T^{k,n}),
       \end{align*}
       where the last inequality is due to the subadditivity of measures and the fact that $(X^{\ell,n},\widehat{X}^{\ell,n})$ are identically distributed for $\ell=m+1,\ldots,n$. By property (iii) of $\nu^n$ we have 
       \begin{align*}
              \amsmathbb{P}(Y_0^{\ell,n}\neq \widehat{Y}_0^{\ell,n})&\leq d_{TV}(\delta_{\{\mathbf{x}^m\}},\delta_{\{\mathbf{x}^m\}})=0,\quad\ell=1,\ldots,m\\
              \amsmathbb{P}(Y_0^{\ell,n}\neq \widehat{Y}_0^{\ell,n})&=d_{TV}(\beta^{n-m,1}_{\mathbf{x}^m},\zeta),\quad \ell=m+1,\ldots,n.
       \end{align*}
       Lemma~\ref{lem:coupling_PA_bound} thus yields
       \begin{align}\label{eq:cond_proof_under_m}
              \amsmathbb{P}(A^{\ell,n}_T)&\leq C_1\int_{\tau}^T\amsmathbb{P}(A^{\ell,n}_t)+\amsmathbb{E}[d_{BL}(\varepsilon_{t-}^n,\bar{p}_t^{\tau,\zeta})]\mathrm{d}t,
       \end{align}
       for $\ell=1,\ldots,m$ and 
       \begin{align}\label{eq:cond_proof_over_m}
              \amsmathbb{P}(A^{k,n}_T)&\leq d_{TV}(\beta^{n-m,1}_{\mathbf{x}^m},\zeta)+C_1\int_{\tau}^T\amsmathbb{P}(A^{k,n}_t)+\amsmathbb{E}[d_{BL}(\varepsilon_{t-}^n,\bar{p}_t^{\tau,\zeta})]\mathrm{d}t.
       \end{align}
       By the triangle inequality we have
       \begin{align*}
              \amsmathbb{E}[d_{BL}(\varepsilon_{t-}^n,p_{t}^{\tau,\zeta})]\leq \amsmathbb{E}[d_{BL}(\varepsilon_{t-}^n,\widehat{\varepsilon}_{t-}^n)]+\amsmathbb{E}[d_{BL}(\widehat{\varepsilon}_{t-}^n,p_{t-}^{\tau,\zeta})],
       \end{align*}
       where $\widehat{\varepsilon}^n_{t}:=\frac{1}{n}\sum_{\ell=1}^n\delta_{\{\widehat{X}_t^{\ell,n}\}}$. For the first term, we note that 
       \begin{align*}
              \amsmathbb{E}[d_{BL}(\varepsilon_{t-}^n,\widehat{\varepsilon}_{t-}^n)]&\leq  \amsmathbb{E}[d_{TV}(\varepsilon_{t-}^n,\widehat{\varepsilon}_{t-}^n)]\leq \frac{1}{n}\sum_{\ell=1}^m\amsmathbb{P}(A^{\ell,n}_t)+\frac{1}{n}\sum_{\ell=m+1}^n\amsmathbb{P}(A^{\ell,n}_t)\\
              &\leq\frac{m}{n}+\amsmathbb{P}(A_t^{k,n}).
       \end{align*}
       Inserting this back into (\ref{eq:cond_proof_under_m}) and applying Grönwall's inequality we arrive at 
       \begin{align*}
              \sum_{\ell=1}^m\amsmathbb{P}(A_T^{\ell,n})\leq C_1e^{C_1(T-\tau)}\bigg((T-\tau)\bigg(\frac{m}{n}+\amsmathbb{P}(A_T^{k,n})\bigg)+\int_{\tau}^{T}\amsmathbb{E}[d_{BL}(\widehat{\varepsilon}_t^n,\bar{p}_t^{\tau,\zeta})]\mathrm{d}t\bigg).
       \end{align*}
       The fraction $\frac{m}{n}$ obviously goes to zero for $n\rightarrow\infty$ and Lemma~\ref{lem:cond_proof_emp} in conjunction with the Dominated Convergence Theorem implies that the integral goes to zero for $n\rightarrow\infty$. It remains to show that $\amsmathbb{P}(A_T^{k,n})$ goes to zero. 
       
       Inserting the bound for the expected bounded-Lipschitz distance into (\ref{eq:cond_proof_over_m}) and applying Grönwall's inequality we get 
       \begin{align*}
              \amsmathbb{P}(A^{k,n}_T)&\leq e^{2C_1(T-\tau)}\bigg(d_{TV}(\beta^{n-m,1}_{\mathbf{x}^m},\zeta)+C_1\int_{\tau}^T\frac{m}{n}+\amsmathbb{E}[d_{BL}(\widehat{\varepsilon}_{t}^n,\bar{p}_t^{\tau,\zeta})]\mathrm{d}t\bigg).
       \end{align*}
       The fact that $(\beta^{n-m}_{\mathbf{x}^m})_{n>m}$ is $\zeta$-chaotic in total variation and Lemma~\ref{lem:cond_proof_emp} in conjunction with the Dominated Convergence Theorem yields
       \begin{align*}
              \lim_{n\rightarrow\infty}\amsmathbb{P}(A_T^{k,n})=0.
       \end{align*}
       The desired result follows.
\end{proof}

\section{Convergence of insurance liabilities}\label{sec:convergence_liabilities}
Consider a cohort of $n$ individuals modelled by (\ref{eq:n-ind}). The future insurance payments at time $\tau\in [0,T]$ of each individual are given by $g\big((X^{\ell,n}_{t})_{t\in[\tau,T]}\big)$, where $g:\amsmathbb{H}([\tau,T],E)\rightarrow\amsmathbb{R}$ is a functional of the individual's jump process path. For pricing and reserving purposes, we are interested in three different types of insurance liabilities:
\begin{definition}
       Let $g:\amsmathbb{H}([\tau,T],E)\rightarrow\amsmathbb{R}$ be measurable. Then the cohort-wide insurance liability is given by 
       \begin{align*}
              V^{1,n}(\tau):=\amsmathbb{E}\big[g\big((X_t^{1,n})_{t\in[\tau,T]}\big)\big].
       \end{align*}
       For $A\in\mathcal{B}(E)$ with $\zeta^{n,1}(A)>0$ the grouped insurance liability is given by
       \begin{align*}
              V^{1,n}(\tau,A):=\amsmathbb{E}\big[g\big((X^{1,n}_t)_{t\in[\tau,T]}\big)\big|X_{\tau}^{1,n}\in A\big].
       \end{align*}
       For $x\in E$ the individual insurance liability is given by 
       \begin{align*}
              V^{1,n}(\tau,x):=\amsmathbb{E}\big[g\big((X^{1,n}_t)_{t\in[\tau,T]}\big)\big|X_{\tau}^{1,n}=x\big].
       \end{align*}
\end{definition}
Since the individuals are identically distributed, the cohort-wide insurance liability satisfies
\begin{align*}
       V^{1,n}(\tau)=\frac{1}{n}\amsmathbb{E}\bigg[\sum_{\ell=1}^n g\big((X^{\ell,n}_t)_{t\in[\tau,T]}\big)\bigg]
\end{align*}
and is thus the individual's share of the cohort's total insurance liability under an equally weighted distribution. If the state space $E$ is partitioned into $m$ disjoint subsets $(A_i)_{i=1,\ldots,m}\subset\mathcal{B}(E)$ such that $\zeta^{n,1}(A_i)>0$ for all $i=1,\ldots,m$ and $\bigcup_{i=1}^m A_i=E$ and we know that the individual is part of group $A_i$ at time $\tau$ then $V^{1,n}(\tau,A_i)$ can be understood as the insurance liability of an individual, given that that they are part of group $i$. Finally the individual insurance liability is the insurance liability of an individual given that particular individual's exact characteristics at time $\tau$, represented by $X^{1,n}_{\tau}=x$ and not just their membership of a certain group.

The corresponding insurance liabilities of the mean-field model (\ref{eq:n-mf}) are given by:
\begin{definition}
       Let $g:\amsmathbb{H}([\tau,T],E)\rightarrow\amsmathbb{R}$ be measurable. Then the cohort-wide mean-field insurance liability is given by 
       \begin{align*}
              \bar{V}(\tau):=\amsmathbb{E}\big[g\big((\bar{X}_t)_{t\in[\tau,T]}\big)\big].
       \end{align*}
       For $A\in\mathcal{B}(E)$ with $\zeta^{n,1}(A)>0$ the grouped mean-field insurance liability is given by
       \begin{align*}
              \bar{V}(\tau,A):=\amsmathbb{E}\big[g\big((\bar{X}_t)_{t\in[\tau,T]}\big)\big|\bar{X}_{\tau}\in A\big].
       \end{align*}
       For $x\in E$ the individual mean-field insurance liability is given by 
       \begin{align*}
              \bar{V}(\tau,x):=\amsmathbb{E}\big[g\big((\bar{X}_t)_{t\in[\tau,T]}\big)\big|\bar{X}_{\tau}=x\big].
       \end{align*}
\end{definition}

The following three Proposition now show when the mean-field approximations 
\begin{align*}
       V^{1,n}(\tau)\approx\bar{V}(\tau)\quad\text{and}\quad V^{1,n}(\tau,A)\approx\bar{V}(\tau,A)\quad\text{and}\quad V^{1,n}(\tau,x)\approx\bar{V}(\tau,x)
\end{align*}
can be formally justified.
\begin{proposition}\label{prop:convergence_portfolio}
       Assume that the assumptions of Theorem~\ref{th:DDSDE-existence} and Theorem~\ref{th:SDE_Chaos} are satisfied. If there exists an $\varepsilon>0$ such that
       \begin{align*}
              \sup_{n\in\amsmathbb{N}}\amsmathbb{E}\big[\big|g\big((X_t^{1,n})_{t\in[\tau,T]}\big)\big|^{1+\varepsilon}\big]<\infty,
       \end{align*}
       then we have $\lim_{n\rightarrow\infty}V^{1,n}(\tau)=\bar{V}(\tau)$.
\end{proposition}
\begin{proof}
       By Theorem~\ref{th:DDSDE-existence} the mean-field model exists and is unique. By Theorem~\ref{th:SDE_Chaos} the sequence $(\amsmathbb{Q}_{\tau,\zeta^n}^n)_{n\in\amsmathbb{N}}$ is $\bar{\amsmathbb{Q}}_{\tau,\zeta}$-chaotic in total variation. We can therefore apply Proposition~\ref{prop:Chaos-wk} to get the desired result.
\end{proof}
\begin{proposition}\label{prop:convergence_grouped}
       Assume that the assumptions of Theorem~\ref{th:DDSDE-existence} and Theorem~\ref{th:SDE_cond_chaos_B} are satisfied. If there exists an $\varepsilon>0$ such that
       \begin{align*}
              \sup_{n\in\amsmathbb{N}}\amsmathbb{E}\big[\big|g\big((X_t^{1,n})_{t\in[\tau,T]}\big)\big|^{1+\varepsilon}\big|X_{\tau}^{1,n}\in A\big]<\infty,
       \end{align*}
       then we have $\lim_{n\rightarrow\infty}V^{1,n}(\tau,A)=\bar{V}(\tau,A)$.
\end{proposition}
\begin{proof}
       By Theorem~\ref{th:DDSDE-existence} the mean-field model exists and is unique. By Theorem~\ref{th:SDE_cond_chaos_B} we have that $\amsmathbb{Q}^{n,1}_{\tau,\zeta^n,A}\stackrel{TV}{\rightarrow}\ti{\amsmathbb{Q}}_{\tau,\zeta,A}$. The result follows directly from Corollary~2.9 in~\cite{FeinbergEtAl2016}.
\end{proof}
\begin{proposition}\label{prop:convergence_state}
       Assume that the assumptions of Theorem~\ref{th:DDSDE-existence} and additionally of Corollary~\ref{cor:chaos} or Theorem~\ref{th:SDE_Chaos-cond} are satisfied. If there exists an $\varepsilon>0$ such that
       \begin{align*}
              \sup_{n\in\amsmathbb{N}}\amsmathbb{E}\big[\big|g\big((X_t^{1,n})_{t\in[\tau,T]}\big)\big|^{1+\varepsilon}\big|X^{1,n}_{\tau}=x\big]<\infty,
       \end{align*}
       then we have $\lim_{n\rightarrow\infty}V^{1,n}(\tau,x)=\bar{V}(\tau,x)$.
\end{proposition}
\begin{proof}
       By Theorem~\ref{th:DDSDE-existence} the mean-field model exists and is unique. By Corollary~\ref{cor:chaos} or Theorem~\ref{th:SDE_Chaos-cond} we have that $\amsmathbb{Q}^{n,1}_{\tau,\zeta^n,x}\stackrel{TV}{\rightarrow}\ti{\amsmathbb{Q}}_{\tau,\zeta,x}$. The result follows directly from Corollary~2.9 in~\cite{FeinbergEtAl2016}.
\end{proof}

Assuming a little more integrability, we can derive the following law of large numbers, which shows that the diversification effect of large portfolios persists, even though all individuals are dependent. The diversification effect persists both on a cohort or cohort-wide level, but also on a group level.

\begin{proposition}\label{prop:convergence_lln}
       Assume that the assumptions of Theorem~\ref{th:DDSDE-existence} and Theorem~\ref{th:SDE_Chaos} are satisfied. If there exists an $\varepsilon>0$ such that
       \begin{align*}
              \sup_{n\in\amsmathbb{N}}\amsmathbb{E}[|g(X^{1,n})|^{2+\varepsilon}]<\infty,
       \end{align*}
       then it holds that 
       \begin{align*}
              \frac{1}{n}\sum_{\ell=1}^n g\big((X^{\ell,n}_t)_{t\in[\tau,T]}\big)\stackrel{L^2}{\rightarrow}\bar{V}(\tau).
       \end{align*}
       If $A\in\mathcal{B}(E)$ with $\zeta(A)>0$, then 
       \begin{align*}
              \frac{\frac{1}{n}\sum_{\ell=1}^n\mathds{1}_{(X_{\tau}^{\ell,n}\in A)}g\big((X^{\ell,n}_t)_{t\in[\tau,T]}\big)}{\frac{1}{n}\sum_{\ell=1}^n\mathds{1}_{(X_{\tau}^{\ell,n}\in A)}}\stackrel{P}{\rightarrow}\bar{V}(\tau,A).
       \end{align*}
\end{proposition}
\begin{proof}
       By Theorem~\ref{th:DDSDE-existence} the mean-field model exists and is unique. By Theorem~\ref{th:SDE_Chaos} the sequence $(\amsmathbb{Q}_{\tau,\zeta^n}^n)_{n\in\amsmathbb{N}}$ is $\bar{\amsmathbb{Q}}_{\tau,\zeta}$-chaotic in total variation. We can therefore apply Proposition~\ref{prop:Chaos-lln} to directly get the first result. For the second result we note that 
       \begin{align*}
            \sup_{n\in\amsmathbb{N}}\amsmathbb{E}\big[\big|\mathds{1}_{(X_{\tau}^{\ell,n}\in A)}g\big((X^{\ell,n}_t)_{t\in[\tau,T]}\big)\big|^{2+\varepsilon}\big]\leq  \sup_{n\in\amsmathbb{N}}\amsmathbb{E}\big[\big|g\big((X^{\ell,n}_t)_{t\in[\tau,T]}\big)\big|^{2+\varepsilon}\big]
       \end{align*}
       and that $\mathds{1}_{(X_{\tau}^{\ell,n}\in A)}\leq 1$ for all $n\in\amsmathbb{N}$. We can thus apply Proposition~\ref{prop:Chaos-lln} to get $L^2$-convergence and thus convergence in probability of both nominator and denominator. An application of the continuous mapping theorem yields the second result.
\end{proof}

Let $\sigma^2_{\tau}:=\text{Var}\big(g\big((\bar{X}_t)_{t\in[\tau,T]}\big)\big)$. If we assume further integrability and specific convergence speeds, then we can obtain the following central limit theorem. 
\begin{proposition}\label{prop:convergence_clt}
       Assume that the assumptions of Theorem~\ref{th:DDSDE-existence} and Theorem~\ref{th:SDE_Chaos} are satisfied, that 
       \begin{align*}
              \lim_{n\rightarrow\infty}n\mathrm{Cov}\big(g\big((X^{1,n}_t)_{t\in[\tau,T]}\big),g\big((X^{2,n}_t)_{t\in[\tau,T]}\big)\big)=0
       \end{align*}
       and that $\lim_{n\rightarrow\infty}\sqrt{n}\big(V^{1,n}(\tau)-\bar{V}(\tau)\big)=0$. If there exists an $\varepsilon>0$ such that
       \begin{align*}
              \sup_{n\in\amsmathbb{N}}\amsmathbb{E}\big[\big|g\big((X^{1,n}_t)_{t\in[\tau,T]}\big)\big|^{4+\varepsilon}\big]<\infty,
       \end{align*}
       then it holds that 
       \begin{align*}
              \frac{1}{\sqrt{n}}\sum_{\ell=1}^n\frac{V^{1,n}(\tau)-\bar{V}(\tau)}{\sigma_{\tau}}\stackrel{D}{\rightarrow}N(0,1).
       \end{align*}
\end{proposition}

Note that since total variation chaos implies that the individuals become asymptotically independent it always holds that 
\begin{align*}
       \lim_{n\rightarrow\infty}\mathrm{Cov}\big(g\big((X^{1,n}_t)_{t\in[\tau,T]}\big),g\big((X^{2,n}_t)_{t\in[\tau,T]}\big)\big)=0,
\end{align*}
and in light of Proposition~\ref{prop:convergence_portfolio} it always holds that $\lim_{n\rightarrow\infty}\big(V^{1,n}(\tau)-\bar{V}(\tau)\big)=0$. Unfortunately since total variation chaos in itself is not enough to automatically guarantee the required convergence speeds these would have to be verified on a case by case basis. This appears to be very difficult to do theoretically.

\subsection{Non-life insurance liabilities}
In many non-life insurance applications the main quantity of interest is the expected loss throughout the contract period $[0,T]$. If we have a cohort of $n$ individuals, the loss of each individual is typically modelled by 
\begin{align*}
       W_T^{\ell,n}=\sum_{i=1}^{N_T^{\ell,n}} Y_i^{\ell,n},
\end{align*}
where the counting process $N_T^{\ell,n}$ can be interpreted as the individual claim count, while the non-negative random variables $(Y_i^{\ell,n})_{i\in\amsmathbb{N}}$ are the individual claim sizes. Often the individual claim counts and claim sizes depend on individual characteristics, such as covariates (constant in time) or status (exposed/not exposed variable over time) of the individual, which we here will model by a process $U^{\ell,n}$ taking values in the space $\mathcal{U}\subseteq\amsmathbb{R}^{d}$. Usually the individuals are assumed to be independent given their individual characteristics, but as we have seen in the case of cyber insurance, this is not always the case.

The total loss and the individual characteristics can be modelled as one joint jump process starting at $\tau=0$. We therefore now consider a cohort of $n$ individuals of the type (\ref{eq:n-ind}), where each $X^{\ell,n}=(W^{\ell,n},U^{\ell,n})$ has state space $E=[0,\infty)\times\mathcal{U}\subset\amsmathbb{R}^{d+1}$ and is defined by
\begin{align*}
       X_t^{\ell,n}=\begin{pmatrix}
              W_t^{\ell,n}\\
              U_t^{\ell,n}
       \end{pmatrix}
       =\begin{pmatrix}
              0\\
              U^{\ell,n}_0
       \end{pmatrix}+
       \int_{(0,t]\times E}
       \begin{pmatrix}
              (w-W_{s-}^{\ell,n})\\
              (u-U_{s-}^{\ell,n})
       \end{pmatrix}Q^{\ell,n}(\mathrm{d}s,\mathrm{d}(w,u))
\end{align*}
where $Q^{\ell,n}$ has compensating measure 
\begin{align*}
       L^{\ell,n}(\mathrm{d}t,\mathrm{d}(w,u))=\mu_t(W_{t-}^{\ell,n},U^{\ell,n}_{t-},\varepsilon_{t-}^n,\mathrm{d}(w,u))\mathrm{d}t.
\end{align*}
The initial distribution of the process $X=(X^{1,n},\ldots,X^{n,n})$ is given by $\zeta^n=\delta_{\{0\}}^{\otimes n}\otimes\nu^n$, where $\nu^n\in\mathcal{P}(\mathcal{U}^n)$ is the initial distribution of $(U^{\ell,n})_{\ell=1,\ldots,n}$. We assume that $\nu^n$ is $\nu$-chaotic in total variation for some $\nu\in\mathcal{P}(\mathcal{U})$ which implies that $\zeta^n$ is $\zeta$-chaotic in total variation, where $\zeta:=\delta_{\{0\}}\otimes\nu$.

The typical non-life insurance liabilities of interest are the cohort-wide and individual expected loss
\begin{align*}
       V^{1,n}(0):=\amsmathbb{E}[W_T^{\ell,n}]\quad\text{and}\quad V^{1,n}(0,u):=\amsmathbb{E}[W_T^{1,n}|U^{1,n}_0=u].
\end{align*}
Since it is customary to partition the space of individual characteristics $\mathcal{U}$ into a finite number of groups, we are also interested in the expected loss for a particular group $A\in\mathcal{B}(E)$ given by
\begin{align*}
       V^{1,n}(0,A):=\amsmathbb{E}[W_T^{1,n}|U^{1,n}_0\in A].
\end{align*}
The corresponding mean-field model is given by 
\begin{align*}
       \bar{X}_t=\begin{pmatrix}
              \bar{W}_t\\
              \bar{U}_t
       \end{pmatrix}
       =\begin{pmatrix}
              0\\
              \bar{U}_0
       \end{pmatrix}+
       \int_{(0,t]\times E}
       \begin{pmatrix}
              (w-\bar{W}_{s-})\\
              (u-\bar{U}_{s-})
       \end{pmatrix}\bar{Q}(\mathrm{d}s,\mathrm{d}(w,u))
\end{align*}
where $\bar{U}_0$ has distribution $\nu$ and $\bar{Q}$ has compensating measure 
\begin{align*}
       \bar{L}(\mathrm{d}t,\mathrm{d}(w,u))=\mu_t(\bar{W}_{t-},\bar{U}_{t-},\bar{p}_t^{\tau,\zeta},\mathrm{d}(w,u))\mathrm{d}t.
\end{align*}
The mean-field insurance liabilities are given by 
\begin{align*}
       \bar{V}(0):=\amsmathbb{E}[\bar{W}_T]\quad\text{and}\quad \bar{V}(0,u):=\amsmathbb{E}[\bar{W}_T|\bar{U}_0=u]
\end{align*}
and in the case of a grouping of the space $\mathcal{U}$ we have 
\begin{align*}
       \bar{V}(0,A):=\amsmathbb{E}[\bar{W}_T|\bar{U}_0\in A].
\end{align*}
The next result shows, that the integrability conditions required for the convergence results of Propositions~\ref{prop:convergence_portfolio}-\ref{prop:convergence_clt} to hold are automatically satisfied if the intensity measure and the initial distribution satisfy simple moment conditions.

\begin{proposition}
       Let $\|x\|:=\sum_{i=1}^{d+1}|x_i|$ for $x\in \amsmathbb{R}^{d+1}$ and let $q>1$. If there exists a constant $C>0$ such that
       \begin{align*}
              \int_E \|y\|^q\mu_t(x,\rho,\mathrm{d}y)\leq C,\quad \forall t\in[0,T],\,x\in E,\,\rho\in\mathcal{P}(E).
       \end{align*}
       and if $\sup_{n\in\amsmathbb{N}}\int_E\|y\|^q\zeta^{n,1}(\mathrm{d}y)<\infty$ then there exists $g:\amsmathbb{H}([\tau,T],E)\rightarrow\amsmathbb{R}$ such that 
       \begin{enumerate}
              \item[(i)] $W_T^{\ell,n}=g\big((X^{\ell,n}_t)_{t\in[\tau,T]}\big)$ and $\bar{W}_T=g\big((\bar{X}_t)_{t\in[\tau,T]}\big)$
              \item[(ii)] $\sup_{n\in\amsmathbb{N}}\amsmathbb{E}[|g((X^{1,n}_t)_{t\in[\tau,T]})|^{q}]<\infty$ 
              \item[(iii)] $\sup_{n\in\amsmathbb{N}}\amsmathbb{E}[|g((X^{1,n}_t)_{t\in[\tau,T]})|^{q}|X_{\tau}^{1,n}=x]<\infty$
              \item[(iv)] $\sup_{n\in\amsmathbb{N}}\amsmathbb{E}[|g((X^{1,n}_t)_{t\in[\tau,T]})|^{q}|X_{\tau}^{1,n}\in A]<\infty$, whenever $\zeta^{n,1}(A)>0$ for all $n\in\amsmathbb{N}$ and $\sup_{n\in\amsmathbb{N}}\int_E\|y\|^q\zeta^{n,1}(\mathrm{d}y|y\in A)<\infty$
       \end{enumerate}
\end{proposition}
\begin{proof} 
       (i): Note that since $E\subset\amsmathbb{R}^{d+1}$, we have that any element               $f\in\amsmathbb{H}([\tau,T],E)$ has $d+1$ coordinates, that is $f=(f_1,\ldots,f_{d+1})$. Let now $g:\amsmathbb{H}([\tau,T],E)\rightarrow\amsmathbb{R}$ be given by $g(f)=\pi_T^1(f)$, where $\pi_T^1(f)=f_1(T)$. Then we have that 
       \begin{align*}
              g\big((X^{\ell,n}_t)_{t\in[\tau,T]}\big)=W_T^{\ell,n}\quad\text{and}\quad g\big((\bar{X}_t)_{t\in[\tau,T]}\big)=\bar{W}_T.
       \end{align*}

       (ii): Clearly $|g(f)|\leq\sup_{t\in[\tau,T]}\|\pi_t(f)\|$ and we have arrive at
       \begin{align*}
              \amsmathbb{E}[|g\big((X^{1,n}_t)_{t\in[\tau,T]}\big)|^q]\leq \amsmathbb{E}\bigg[\sup_{t\in[\tau,T]}\|X_t^{1,n}\|^q\bigg].
       \end{align*}
       Due to to the representation~(\ref{eq:SDE:sum}) it holds that
       \begin{align*}
              \sup_{t\in[\tau,T]}\|X_t^{1,n}\|^q\leq \|Y_0^{1,n}\|^q+\int_{(\tau,T]\times E}\|y\|^q Q^{1,n}(\mathrm{d}t,\mathrm{d}y).
       \end{align*}
       Thus by assumption we have  
       \begin{align*}
              \amsmathbb{E}\bigg[\sup_{t\in[\tau,T]}\|X_t^{1,n}\|^q\bigg]&\leq\amsmathbb{E}[\|Y_0^{1,n}\|^q]+\amsmathbb{E}\bigg[\int_{\tau}^T\int_E \|y\|^q\mu_t(X_{t-}^{1,n},\mathrm{dy})\mathrm{d}t\bigg]\\ 
              &\leq \sup_{n\in\amsmathbb{N}}\int_E\|y\|^q\zeta^{n,1}(\mathrm{d}y)+C_1(T-\tau)<\infty.
       \end{align*}
       This bound holds for all $n\in\amsmathbb{N}$ and the desired result follows.

       (iii)+(iv): By Proposition~\ref{prop:n-ind-path-cond} and Proposition~\ref{prop:n-ind-path-cond-B} we can prove the second and third result by repeating the argument with $X^n$ given by (\ref{eq:n-ind-cond}) and (\ref{eq:n-ind-cond-B}) instead.
\end{proof}

\begin{example}[Cyber insurance]
       One case that fits into this framework is the cyber insurance example from Section~\ref{sec:example_infections}. We can view the total loss $W_T$ as the terminal value of a jump process that has jump sizes with distribution $r_i(\mathrm{d}y)=h_i(y)\kappa(\mathrm{d}y)$ whenever $Z$ jumps from the susceptible state to the infected state and has no jump when $Z$ jumps back to the susceptible state. We can thus view $(Z,I)$ as the process $U$, which provides the individual characteristics, in this case the status susceptible/infected and the location in the network. If we modify the intensity kernel (\ref{eq:mf-sis-intensity}) appropriately, then we can set $X=(W,Z,I)$. This modification is given as follows:
       \begin{align*}
              \mu_t((\bar{w},\bar{z},\bar{i}),\rho,\mathrm{d}(w,z,i))=&\mathds{1}_{(\bar{z}=0)}\mu_{01}(t,\bar{i},\rho)\delta_{\{1,\bar{i}\}}(\mathrm{d}(z,i))\mathds{1}_{[\bar{w},\infty)}(w)h_{\bar{i}}(w-\bar{w})\mathrm{d}w\\
              &+\mathds{1}_{(\bar{z}=1)}\gamma_{\bar{i}}\delta_{\{\bar{w},0,\bar{i}\}}(\mathrm{d}(w,z,i)).
       \end{align*} 
       Note that $W$ only has a jump whenever $Z$ jumps from 0 to 1. Even though this modified intensity kernel does not have a density anymore, we can use similar arguments as in the proof of Proposition~\ref{prop:regularity} and Example~\ref{ex:SIS-regularity} to show that the assumptions of Theorem~\ref{th:DDSDE-existence} are satisfied. Assume that the initial distribution of $(Z^{\ell,n},I^{\ell,n})_{\ell=1,\ldots,n}$ given by $\nu_n$ is $\nu$-chaotic in total variation for some $\nu$. Then the initial distribution of $X^n$ given by $\zeta^n=\delta^{\otimes n}_{\{0\}}\otimes\nu^n$ is $\zeta$-chaotic in total variation, where $\zeta=\delta_{\{0\}}\otimes\nu$.
       
       If $\int_{[0,\infty)}|y|^q h_i(y)\mathrm{d}y<\infty$ for some $q>1$ and all $i\in\{1,\ldots,J\}$ and since the initial distribution $\zeta^{n,1}$ has the same finite support for all $n\in\amsmathbb{N}$, Proposition~\ref{prop:convergence_portfolio} yields
       \begin{align*}
              \lim_{n\rightarrow\infty}V^{1,n}(0)=\bar{V}(0).
       \end{align*}
       Since $(Z^{1,n},I^{1,n})$ and $(\bar{Z},\bar{I})$ are defined on a discrete space, we can by Proposition~\ref{prop:convergence_state} furthermore conclude that
       \begin{align*}
              \lim_{n\rightarrow\infty}V^{1,n}_0(0,i)=\bar{V}_0(0,i)
       \end{align*}
       whenever $\nu(\{z,i\})>0$. Finally if $\int_{[0,\infty)}|y|^q h_i(y)\mathrm{d}y<\infty$ for some $q>2$ and all $i\in\{1,\ldots,J\}$ we have by Proposition~\ref{prop:convergence_lln}
       \begin{align*}
              \frac{1}{n}\sum_{\ell=1}^n W^{\ell,n}_T\stackrel{L^2}{\rightarrow}\bar{V}(0)\quad\text{and}\quad \frac{\frac{1}{n}\sum_{\ell=1}^n\mathds{1}_{(Z_{0}^{\ell,n}=0,I_0^{\ell,n}=i)}W_T^{\ell,n}}{\frac{1}{n}\sum_{\ell=1}^n\mathds{1}_{(Z_{0}^{\ell,n}=0,I_0^{\ell,n}=i)}}\stackrel{P}{\rightarrow}\bar{V}_0(0,i).
       \end{align*}
       We conclude that the insurance liabilities in the $n$-individual model converge to their mean-field counterparts and that the dependency between the individuals does not remove the diversification effect of large portfolios. Thus the mean-field liabilities $\bar{V}(0)$ and $\bar{V}_0(0,i)$ are indeed viable approximations of $V^{1,n}(0)$ and $V^{1,n}_0(0,i)$.
\end{example}

\subsection{Life insurance liabilities}
In life insurance applications the contractual payments are deterministic functions of the jump process describing the biometric state of each individual and other quantities of interest. Therefore there is no need to include the payments in the jump process itself and we directly consider the $n$-individual model (\ref{eq:n-ind}) with a state space $E\in\mathcal{B}(\amsmathbb{R}^d)$. Within this model, each individual receives the contractual payments given by 
\begin{align*}
       B^{\ell,n}(\mathrm{d}t)=b(t,X_{t-}^{\ell,n})\mathrm{d}t+\int_E b^y(t,X_{t-}^{\ell,n})Q^{\ell,n}(\mathrm{d}t,\mathrm{d}y),\quad \ell=1,\ldots,n,
\end{align*}
where $Q^{\ell,n}$ is the same random counting measure which also drives $X^{\ell,n}$ and $b$ and $(b^y)_{y\in E}$ are measurable payment functions. The function $b$ describes the sojourn payment rate, while the functions $(b^y)_{y\in E}$ describe the transition payments. Let $r:[0,T]\rightarrow \amsmathbb{R}$ be a bounded and measurable function. The present value of future payments at time $\tau\in[0,T]$ are given by
\begin{align*}
       PV^{\ell,n}(\tau):=\int_{\tau}^{T}e^{-\int_{\tau}^t r(u)\mathrm{d}u}B^{\ell,n}(\mathrm{d}t).
\end{align*}
The insurance liabilities of interest are then the cohort-wide and state-wise (individual) reserves
\begin{align*}
       V^{1,n}(\tau)=\amsmathbb{E}[PV^{1,n}(\tau)]\quad\text{and}\quad V^{1,n}(\tau,x)=\amsmathbb{E}[PV^{1,n}(\tau)|X^{1,n}_{\tau}=x].
\end{align*}
In the corresponding mean-field model~(\ref{eq:n-mf}) the individual payments are given by 
\begin{align*}
       \bar{B}(\mathrm{d}t)=b(t,\bar{X}_{t-})\mathrm{d}t+\int_E b^y(t,\bar{X}_{t-})\bar{Q}(\mathrm{d}t,\mathrm{d}y),
\end{align*}
the present value of future payments is given by 
\begin{align*}
       \bar{PV}(\tau):=\int_{\tau}^{T}e^{-\int_{\tau}^t r(u)\mathrm{d}u}\bar{B}(\mathrm{d}t).
\end{align*}
and the mean-field liabilities are given by
\begin{align*}
       \bar{V}(\tau)=\amsmathbb{E}[\bar{PV}(\tau)]\quad\text{and}\quad \bar{V}(\tau,x)=\amsmathbb{E}[\bar{PV}(\tau)|\bar{X}_{\tau}=x].
\end{align*}
The following result shows that in order for the convergence results of Propositions~\ref{prop:convergence_portfolio}--\ref{prop:convergence_clt} to apply it suffices to assume that $b(t,x)$ and $b^y(t,x)$ are uniformly bounded.
\begin{proposition}\label{prop:life-regularity}
       Assume that there exists $C>0$ such that $|b(t,x)|<C$ for all $(t,x)\in[\tau,T]\times E$ and $|b^y(t,x)|<C$ for all $(t,x,y)\in[\tau,T]\times E^2$. Then there exists $g:\amsmathbb{H}([\tau,T],E)\rightarrow\amsmathbb{R}$ such that for all $q>1$
       \begin{enumerate}
              \item[(i)] $PV^{\ell,n}(\tau)=g\big((X^{\ell,n}_t)_{t\in[\tau,T]}\big)$ and $\bar{PV}(\tau)=g\big((\bar{X}_t)_{t\in[\tau,T]}\big)$
              \item[(ii)] $\sup_{n\in\amsmathbb{N}}\amsmathbb{E}[|g((X^{1,n}_t)_{t\in[\tau,T]})|^{q}]<\infty$ 
              \item[(iii)] $\sup_{n\in\amsmathbb{N}}\amsmathbb{E}[|g((X^{1,n}_t)_{t\in[\tau,T]})|^{q}|X_{\tau}^{1,n}=x]<\infty$
              \item[(iv)] $\sup_{n\in\amsmathbb{N}}\amsmathbb{E}[|g((X^{1,n}_t)_{t\in[\tau,T]})|^{q}|X_{\tau}^{1,n}\in A]<\infty$, whenever $\zeta^{n,1}(A)>0$ for all $n\in\amsmathbb{N}$.
       \end{enumerate}
\end{proposition}
\begin{proof}
       (i): Let $g:\amsmathbb{H}([\tau,T],E)\rightarrow\amsmathbb{R}$ be given by 
       \begin{align*}
              g(f):=\int_{\tau}^T b(t,\pi_{t-}(f))\mathrm{d}t+\sum_{\tau<t\leq T}\mathds{1}_{(\pi_t(f)\neq\pi_{t-}(f))}b^{\pi_{t}(f)}(t,\pi_{t-}(f)),
       \end{align*}
       for $f\in\amsmathbb{H}([\tau,T],E)$. Then clearly $PV^{\ell,n}(\tau)=g\big((X^{\ell,n}_t)_{t\in[\tau,T]}\big)$ and $\bar{PV}(\tau)=g\big((\bar{X}_t)_{t\in[\tau,T]}\big)$. 
       
       (ii): Due to the boundedness of $b$ and $(b^y)_{y\in E}$ we have that 
       \begin{align*}
              |g(f)|&\leq K\bigg(1+\sum_{\tau<t\leq T}\mathds{1}_{(\pi_t(f)\neq\pi_{t-}(f))}\bigg),
       \end{align*}
       for all $f\in\amsmathbb{H}([\tau,T],E)$. Here $K=C\max(1,(T-\tau))$. Since 
       \begin{align*}
              \sum_{\tau<t\leq T}\mathds{1}_{(X^{1,n}_t\neq X^{1,n}_{t-})}=Q^{1,n}((\tau,T]\times E),
       \end{align*}
       Lemma~\ref{A:lem:mod_minkowski} and the bound on $g$ yields
       \begin{align*}
              \amsmathbb{E}[|g(X^{1,n})|^q]\leq K^q 2^{q-1}\bigg(1+\amsmathbb{E}[Q^{1,n}((\tau,T]\times E)^q]\bigg).
       \end{align*}
       The counting process $t\mapsto Q^{1,n}([\tau,t]\times E)$ has intensity process
       \begin{align*}
              t\mapsto\int_{\tau}^t\int_E\mu_s(X_{s-}^{1,n},\varepsilon_{s-}^n,\mathrm{d}y)\mathrm{d}s\leq C_{\lambda}(t-\tau)\leq C_{\lambda}(t-\tau),
       \end{align*}
       which is bounded, without the bound depending on $n$. Thus it follows that all $Q^{1,n}([\tau,t]\times E)$ are dominated by the same time-homogeneous Poisson process $M$ with rate $C_{\lambda}(t-\tau)$ in the sense of first order stochastic dominance. This implies that $\amsmathbb{E}[Q^{1,n}([\tau,T]\times E)^q]\leq \amsmathbb{E}{M_T^q}<\infty$ for all $n\in\amsmathbb{N}$. Since a Poisson random variable has all moments, the desired result follows.
       
       (iii)+(iv): By Proposition~\ref{prop:n-ind-path-cond} and Proposition~\ref{prop:n-ind-path-cond-B} we can prove the second and third result by repeating the argument with $X^n$ given by (\ref{eq:n-ind-cond}) and (\ref{eq:n-ind-cond-B}) instead.
\end{proof}

\begin{example}[Epidemic health insurance]
       In the case of the epidemic health insurance example of Section~\ref{sec:example_infections}, we have shown in Example~\ref{ex:SIS-regularity} that the assumptions of Theorem~\ref{th:DDSDE-existence} are satisfied, yielding existence and uniqueness of the mean-field model. We now assume that the initial distribution $\nu^n$ of $(Z^n,I^n)$ is $\nu$-chaotic in total variation. The individual contractual payments in the $n$-individual model are given by
       \begin{align*}
              B^{\ell,n}(\mathrm{d}t)=-\mathds{1}_{(Z_{t-}^{\ell,n}=0)}\pi(I^{\ell,n})\mathrm{d}t+\mathds{1}_{(Z_{t-}^{\ell,n}=1)}b(I^{\ell,n})\mathrm{d}t,
       \end{align*}
       while the contractual payments in the mean-field model are given by
       \begin{align*}
              \bar{B}(\mathrm{d}t)=-\mathds{1}_{(\bar{Z}_{t-}=0)}\pi(\bar{I})\mathrm{d}t+\mathds{1}_{(\bar{Z}_{t-}=1)}b(\bar{I})\mathrm{d}t.
       \end{align*}
       Since $\pi:\{1,\ldots,J\}\rightarrow [0,\infty)$ and $b:\{1,\ldots,J\}\rightarrow [0,\infty)$ only take a finite number of values they are bounded we can thus by Propositions~\ref{prop:life-regularity}+\ref{prop:convergence_portfolio} conclude that 
       \begin{align*}
              \lim_{n\rightarrow\infty} V^{1,n}(\tau)=\bar{V}(\tau).
       \end{align*}
       Furthermore since $(Z^{1,n},I^{1,n})$ and $(\bar{Z},\bar{I})$ are defined on a discrete state space, Propositions~\ref{prop:life-regularity}+\ref{prop:convergence_state} yield
       \begin{align*}
              \lim_{n\rightarrow\infty}V_{z}^{1,n}(\tau,i)=\bar{V}_z(\tau,i)
       \end{align*}
       whenever $\zeta(\{z,i\})>0$ and finally Propositions~\ref{prop:life-regularity}+\ref{prop:convergence_lln} yield
       \begin{align*}
              \frac{1}{n}\sum_{\ell=1}^n PV^{\ell,n}(\tau)\stackrel{L^2}{\rightarrow}\bar{V}(\tau)\quad\text{and}\quad \frac{\frac{1}{n}\sum_{\ell=1}^n\mathds{1}_{(Z_{\tau}^{\ell,n}=0,I_{\tau}^{\ell,n}=i)}PV^{\ell,n}(\tau)}{\frac{1}{n}\sum_{\ell=1}^n\mathds{1}_{(Z_{\tau}^{\ell,n}=0,I_{\tau}^{\ell,n}=i)}}\stackrel{P}{\rightarrow}\bar{V}_z(\tau,i).
       \end{align*}

       We conclude that the insurance liabilities in the $n$-individual model converge to their mean-field counterparts and that the dependency between the individuals does not remove the diversification effect of large portfolios. Thus the mean-field liabilities $\bar{V}(\tau)$ and $\bar{V}_z(\tau,i)$ are indeed viable approximations of $V^{1,n}(\tau)$ and $V^{1,n}_z(\tau,i)$.
\end{example}

\section{Discussion of different notions of chaos}\label{sec:discussion}
As we have already noted in Section~\ref{sec:MF_approximation} it is possible to choose different kinds of chaos metrised by different metrics. Apart from the chaos in Definition~\ref{def:chaos} metrised by the bounded-Lipschitz distance and the total variation chaos of Definition~\ref{def:W-chaos}, another popular option often used in the literature is Wasserstein(1)-chaos (see~\cite{Hauray&Mischler2014,Chaintron&Diez2022I}). This requires $\lim_{n\rightarrow\infty}d_W(\amsmathbb{Q}^{n,k},\amsmathbb{Q})=0$ for all fixed $k\in\amsmathbb{N}$, where $d_W$ is the Wasserstein(1) distance. While Definition~\ref{def:W-chaos} merely requires a measurable space $(S,\mathcal{S})$, Definition~\ref{def:chaos} and Wasserstein(1)-chaos require that there is a metric $d_S$ such that $(S,d_S)$ is a separable metric space. If the metric space $(S,d_S)$ is bounded in the sense of the metric $d_S$, then Wasserstein(1) chaos is equivalent to Definition~\ref{def:chaos}, and thus weaker than total variation chaos. If the metric space $(S,d_S)$ is unbounded in the sense of the metric $d_S$, the situation is more complicated. Since convergence in the Wasserstein(1) distance is equivalent to weak convergence on the space of probability measures with first moment and convergence of first moments (see Theorem~6.9 in~\cite{Villani2009}) it is stronger than the bounded-Lipschitz chaos. At the same time convergence in the total variation distance is a stronger mode of convergence than convergence in the Wasserstein(1) distance, but the latter distance is not dominated by the former. Thus the interesting case to compare is the case of unbounded $(S,d_S)$.

In our case $(S,d_S)=(\amsmathbb{H}([\tau,T],E),d_{J_1})$, where $d_{J_1}$ is the $J_1$-metric which generates the $J_1$-topology on the Skorokhod space of càdlàg paths $\amsmathbb{D}([\tau,T],E)$. Since $\amsmathbb{D}([\tau,T],E)$ equipped with the $J_1$-metric is a separable metric space and $\amsmathbb{H}([\tau,T],E)$ is a subset of $\amsmathbb{D}([\tau,T],E)$, using the $J_1$-metric on $\amsmathbb{H}([\tau,T],E)$ renders this a separable metric space as well. By the definition of $d_{J_1}$ (see e.g.~(12.13) on p.124 of~\cite{Billingsley1999} and Section~4.3 of~\cite{Kern2024}) the space $(\amsmathbb{H}([\tau,T],E),d_{J_1})$ is bounded whenever the state space $(E,d_E)$ is bounded. We will therefore focus on the case of an unbounded state space $(E,d_E)$, where total variation chaos and Wasserstein(1) chaos imply different types of convergence and turn out to require different regularity conditions for the intensity kernel and initial distributions in order for the respective convergence to hold. We start by discussing the implications of the difference in regularity conditions followed a discussion of the difference in convergence types.

Let $E=\amsmathbb{R}^d$ and let $\|x\|:=\sum_{i=1}^d|x_i|$ for $x\in\amsmathbb{R}^d$. For simplicity we restrict the intensity kernels to the time homogeneous form 
\begin{align*}
       \mu(x,\rho,\mathrm{d}y)=q^y\bigg(x,\int_{\amsmathbb{R}^d} h(x,z)\rho(\mathrm{d}z)\bigg)\nu(\mathrm{d}y)
\end{align*}
for $q^y:\amsmathbb{R}^d\times\amsmathbb{R}^d\rightarrow [0,\infty)$ and $h:\amsmathbb{R}^d\times\amsmathbb{R}^d\rightarrow\amsmathbb{R}^d$ and for some measure $\nu$ on $\amsmathbb{R}^d$. In the case of total variation chaos, Theorem~\ref{th:SDE_Chaos} requires the initial distributions $\zeta^n$ to be elements of $\mathcal{P}(E^n)$ and $(\zeta^n)_{n\in\amsmathbb{N}}$ to be $\zeta$-chaotic in total variation for some $\zeta\in\mathcal{P}(E)$, while we by Proposition~\ref{prop:regularity} must assume the Lipschitz conditions
\begin{align*}
       |q^y(x_1,u_1)-q^y(x_1,u_2)|&\leq C(\mathds{1}_{(x_1\neq x_2)}+\|u_1-u_2\|)\\
       \|h(x_1,z_1)-h(x_2,z_2)\|&\leq C(\mathds{1}_{(x_1\neq x_2)}+\min(1,\|z_1-z_2\|).
\end{align*}
In the case of Wasserstein(1) chaos the results of~\cite{AndreisEtAl2018,Graham1992,Graham1992-2} show that $\zeta^n$ must be in $\mathcal{P}^1(E)$ and $(\zeta^n)_{n\in\amsmathbb{N}}$ must be $\zeta$-chaotic in Wasserstein(1) for $\zeta\in\mathcal{P}^1(E)$, while we must assume the Lipschitz conditions 
\begin{align*}
       |q^y(x_1,u_1)-q^y(x_1,u_2)|&\leq C(\|x_1-x_2\|+\|u_1-u_2\|)\\
       \|h(x_1,z_1)-h(x_2,z_2)\|&\leq C(\|x_1-x_2\|+\|z_1-z_2\|)
\end{align*}
and the linear growth condition $h(x,z)\leq C(1+\|y\|)$. 

The first difference is that we have to restrict ourselves to initial distributions with first moments, as the Wasserstein(1)-distance would be undefined otherwise. The more significant difference lies in the Lipschitz assumptions though. In the total variation case, the mappings $x\mapsto q^y(x,u)$ and $x\mapsto h(x,z)$ must be bounded without having to be continuous, while in the case of Wasserstein(1), they must Lipschitz continuous without having to be to be bounded. Similary the mapping $z\mapsto h(x,z)$ must be bounded and Lipschitz continuous in the case of total variation, while it suffices to be Lipschitz continuous and of linear growth in the case of Wasserstein(1). This shows that there is a trade-off to be made in terms of regularity conditions when working with an unbounded state space.

We continue with a discussion of the difference in convergence types. Wasserstein(1) chaos implies weak convergence which is weaker than convergence in total variation. As a consequence the convergence of conditional distributions proved by Theorem~\ref{th:SDE_cond_chaos_B} and Corollary~\ref{cor:chaos} will no longer be valid for uncountable state spaces as its proof relies on setwise convergence of measures, which is stronger than weak convergence, but implied by total variation convergence. As a consequence this invalidates the convergence of grouped insurance liabilities obtained by Proposition~\ref{prop:convergence_grouped}. The convergence of the cohort-wide insurance liability of Proposition~\ref{prop:convergence_portfolio} and the unconditional law of large numbers obtained by Proposition~\ref{prop:convergence_lln} can only be preserved if the mapping $g:\amsmathbb{H}([\tau,T],E)\rightarrow\amsmathbb{R}$ describing the insurance payments is $\bar{\amsmathbb{Q}}_{\tau,\zeta}$-almost surely continuous.

In conclusion using Wasserstein(1) chaos instead of total variation chaos in the case of an unbounded state space can lead to slightly less restrictive regularity conditions in the measure dependence of the intensity kernel and a trade-off in regularity conditions for the state dependence of the intensity kernel, while leading to more restrictive convergence results for the insurance liabilities. Bridging this gap could be an interesting topic for future research.

\section*{Acknowledgements}
The author has carried out this research in association with the project frame InterAct. The author would like to thank Christian Furrer for many fruitful discussions and his helpful comments on earlier versions of the manuscript. The author is grateful to an anonymous referee for their helpful comments and suggestions.

\section*{Disclosure statement}
The author reports there are no competing interests to declare.

\bibliographystyle{amsplain}
\bibliography{references.bib}
\newpage
\appendix

\section{Distances on the spaces of measures}\label{sec:A_distances}
Let $(S,\mathcal{S})$ be a measurable space, let $\mathcal{M}_b(S)$ denote the set of bounded measures on $(S,\mathcal{S})$ and let $\text{BM}$ be the class of measurable functions $f:S\rightarrow [-1,1]$. We can then define the total variation distance on the set $\mathcal{M}_b(S)$ as 
\begin{align}\label{eq:TV_def}
       d_{TV}(\mu,\nu):=\frac{1}{2}\sup_{f\in\text{BM}}\bigg\{\bigg|\int_S f(x)\mu(\mathrm{d}x)-\int_Sf(x)\nu(\mathrm{d}x)\bigg|\bigg\}.
\end{align}
If $\mu$ and $\nu$ are probability measures, then we have the following identity
\begin{align}\label{eq:TV_sup}
       d_{TV}(\mu,\nu)=\sup_{A\in\mathcal{S}}|\mu(A)-\nu(A)|.
\end{align}
If $\mathcal{A}$ is a sigma-algebra such that $\mathcal{A}\subseteq\mathcal{S}$ then (\ref{eq:TV_sup}) yields
\begin{align}\label{eq:TV_sub-sig}
       d_{TV}(\mu_{\mathcal{A}},\nu_{\mathcal{A}})=d_{TV}(\mu,\nu),
\end{align}
where $\mu_{\mathcal{A}}$ and $\nu_{\mathcal{A}}$ denote the respective restriction to $\mathcal{A}$ Furthermore if $\mu$ and $\nu$ are probability measures, then the total variation distance can be written in terms of an infimum over couplings. Let 
\begin{align*}
       C(\mu,\nu):=\{\gamma\in\mathcal{P}(S^2):\gamma(\mathrm{d}x\times S)=\mu(\mathrm{d}x)\text{ and } \gamma(S\times\mathrm{d}y)=\nu(\mathrm{d}y)\}
\end{align*}
denote the set of couplings between $\mu$ and $\nu$ and define $S^2_{x\neq y}:=\{(x,y)\in S^2:x\neq y\}$. If $S^2_{x\neq y}\in\mathcal{S}\otimes\mathcal{S}$, then (see Theorems~7.2,~7.3 in Chapter~3 of~\cite{Thorisson2000})
\begin{align}\label{eq:TV_coupling}
       d_{TV}(\mu,\nu)=\inf_{\gamma\in C(\mu,\nu)}\gamma(S^2_{x\neq y}).
\end{align}
This yields the following very useful inequality. If $X$ and $Y$ are $S$-valued random variables on a probability space $(\Omega,\mathcal{F},\amsmathbb{P})$ with joint distribution $\gamma\in C(\mu,\nu)$, then 
\begin{align}\label{eq:TV_coupling_inequality}
       d_{TV}(\mu,\nu)\leq \amsmathbb{P}(X\neq Y).
\end{align}
A coupling $\gamma\in C(\mu,\nu)$ is called maximal, if equality is achieved. The space $(\mathcal{P}(S),d_{TV})$ is a complete metric space and if the sequence $(\mu_n)_{n\in\amsmathbb{N}}\subset\mathcal{P}(S)$ converges to $\mu\in\mathcal{P}(S)$ in the metric $d_{TV}$, then (\ref{eq:TV_sup}) implies the setwise convergence $\mu_n(A)\rightarrow\nu_n(A)$ for all $A\in\mathcal{B}(S)$.

If $S$ is metrisable with a metric $d_S$ such that $(S,d_S)$ is a separable metric space, we will also consider the so-called bounded-Lipschitz distance on $\mathcal{P}(S)$. For this let $\text{BL}$ denote the class of Lipschitz continuous functions $f:S\rightarrow [-1,1]$ whose smallest Lipschitz constant is less than or equal to one. Then 
\begin{align*}
       d_{BL}:=\frac{1}{2}\sup_{f\in\text{BL}}\bigg\{\bigg|\int_S f(x)\mu(\mathrm{d}x)-\int_Sf(x)\nu(\mathrm{d}x)\bigg|\bigg\}.
\end{align*}
By (\ref{eq:TV_def}) it holds that $d_{BL}(\mu,\nu)\leq d_{TV}(\mu,\nu)$. The metric $d_{BL}$ metrises weak convergence. That is $\mu_n\stackrel{wk.}{\rightarrow}\mu$ if and only if $\lim_{n\rightarrow\infty}d_{BL}(\mu_n,\mu)=0$.

\section{Characterisation of jump process distributions}\label{sec:Char_JP_Dist}
There is a close correspondance between non-explosive marked point process distributions and non-explosive jump process distributions, see Section~2.5 of~\cite{Last&Brandt1995}. Let $(E,\mathcal{B}(E))$ be a standard Borel space and define the space of non-explosive marked point process paths on the time interval $[\tau,T]$ as follows:
\begin{definition}
       The sequence $(t_i,y_i)_{i\in\amsmathbb{N}_0}\subset([\tau,T]\cup\{\infty\}\times E\cup\{\nabla\})^{\amsmathbb{N}_0}$ is a marked point process realisation started at $\tau$ and ending at $T$ if and only if 
       \begin{itemize}
              \item[(i)] $\tau=t_0<t_1\leq t_2\leq\ldots$ with $t_i<t_{i+1}$ whenever $t_i<\infty$ and $t_i=t_{i+1}$ whenever $t_i=\infty$.
              \item[(ii)] $y_i\in E$ whenever $t_i<\infty$ and $y_i=\nabla$ whenever $t_i=\infty$.
              \item[(iii)] $\lim_{i\rightarrow\infty}t_i=\infty$.  
       \end{itemize}
       The set of marked point process realisations started at $\tau$ is denoted by $\bar{\amsmathbb{M}}([\tau,T],E)$. 
\end{definition}
Similarly we can define the space of non-explosive marked point process paths without trivial jumps as
\begin{align*}
       \amsmathbb{M}([\tau,T],E):=\{(t_i,y_i)_{i\in\amsmathbb{N}_0}\in\bar{\amsmathbb{M}}([\tau,T],E):y_i\neq y_{i+1}\Leftrightarrow t_i<\infty\}.
\end{align*}

The sets $\bar{\amsmathbb{M}}([\tau,T],E)$ and $\amsmathbb{M}([\tau,T],E)$ are measurable subsets of $([\tau,T]\cup\{\infty\}\times E\cup\{\nabla\})^{\amsmathbb{N}_0}$ and can be endowed with sigma-algebra generated by the coordinate projections $t^{\circ}_i((t_i,y_i)_{i\in\amsmathbb{N}_0})=t_i$ and $y^{\circ}_i((t_i,y_i)_{i\in\amsmathbb{N}_0})=y_i$. 

As in Section~2.5 in~\cite{Last&Brandt1995}, define now the mapping $\Gamma$ from either $\amsmathbb{M}([\tau,T],E)$ or $\bar{\amsmathbb{M}}([\tau,T],E)$ into $\amsmathbb{H}([\tau,T],E)$ given by 
\begin{align*}
       \Gamma((t_i,y_i)_{i\in\amsmathbb{N}_0}):=\bigg(\sum_{i=0}^{\infty}y_i\mathds{1}_{(t_i\leq t <t_{i+1})}\bigg)_{t\in [\tau,T]}
\end{align*}
and the mapping $\Gamma^*:\amsmathbb{H}([\tau,T],E)\rightarrow \amsmathbb{M}([\tau,T],E)$ given by
\begin{align*}
       \Gamma^*(f):=(t_i(f),y_i(f))_{i\in\amsmathbb{N}_0},\quad \text{where}\quad y_i(f):=
       \begin{cases}
              f(t_i(f))&\text{if } t_i(f)<\infty\\
              \nabla & \text{if } t_i(f)=\infty,
       \end{cases}
\end{align*}
and $t_i(f):=\inf\{t\geq t_{i-1}(f):f(t)\neq f(t_{i-1}(f))\}$, with $t_0(f):=\tau$. Theorems~2.5.10 and~2.5.11 in~\cite{Last&Brandt1995} yield that if $\Gamma$ is defined on $\amsmathbb{M}([\tau,T],E)$ it is a bimeasurable bijection with $\Gamma^*$ as its inverse, while if $\Gamma$ is defined on $\bar{\amsmathbb{M}}([\tau,T],E)$ it is only surjective. We thus get the following result:

\begin{lemma}\label{lem:MPP-JP}
       It holds that:
       \begin{enumerate}
              \item[(i)] For any $\amsmathbb{Q}\in\mathcal{P}(\amsmathbb{H}([\tau,T],E))$ there exists a unique $\amsmathbb{P}\in\mathcal{P}(\amsmathbb{M}([\tau,T],E))$ such that $\amsmathbb{Q}=\Gamma(\amsmathbb{P})$.
              \item[(ii)] For any $\amsmathbb{Q}\in\mathcal{P}(\amsmathbb{H}([\tau,T],E))$ there exists at least one $\amsmathbb{P}\in\mathcal{P}(\bar{\amsmathbb{M}}([\tau,T],E))$ such that $\amsmathbb{Q}=\Gamma(\amsmathbb{P})$. 
              \item[(iii)] For any $\amsmathbb{P}\in\mathcal{P}(\amsmathbb{M}([\tau,T],E))$ there exists a unique $\amsmathbb{Q}\in\mathcal{P}(\amsmathbb{H}([\tau,T],E))$ such that $\amsmathbb{P}=\Gamma^*(\amsmathbb{Q})$.
       \end{enumerate}
\end{lemma}
\begin{proof}
       Result (i) and (iii) follow if we show that $\amsmathbb{P}\mapsto\Gamma(\amsmathbb{P})$ is bijection with inverse $\amsmathbb{Q}\mapsto\Gamma^*(\amsmathbb{Q})$. Using that $\Gamma=(\Gamma^*)^{-1}$ and that $\Gamma$ is a bijection, we have that 
       \begin{align*}
              \Gamma(\Gamma^*(\amsmathbb{Q}))(B)=\amsmathbb{Q}(\Gamma^{-1}(\Gamma(B)))=\amsmathbb{Q}(B),\quad\forall B\in\mathcal{B}(\amsmathbb{H}([\tau,T],E))
       \end{align*}
       and 
       \begin{align*}
              \Gamma^*(\Gamma(\amsmathbb{P}))(B)=\amsmathbb{P}(\Gamma(\Gamma^{-1}(B)))=\amsmathbb{P}(B),\quad\forall B\in\mathcal{B}(\amsmathbb{M}([\tau,T],E)).
       \end{align*}
       For result (ii) let $\amsmathbb{P}\in\mathcal{P}(\amsmathbb{M}([\tau,T],E))$ be the unique measure from (i) and let $\bar{\amsmathbb{P}}:=\amsmathbb{P}(\cdot\cap \amsmathbb{M}([\tau,T],E))$ be the extension to $\bar{\amsmathbb{M}}([\tau,T],E)$. Thus there exists at least one measure with the desired property. Since $\Gamma$ is surjective when defined on $\bar{\amsmathbb{M}}([\tau,T],E)$ we cannot be sure that it is the only one.
\end{proof}

Thus existence and uniqueness of jump process distributions is equivalent to existence and uniqueness of marked point process distributions with initial distribution and without trivial jumps. If trivial jumps are allowed, then several marked point process distributions might give rise to the same jump process distribution, but for each marked point process distribution and initial value there will only be exactly one jump process distribution.

We now turn to the characterisation of the jump process distributions $\amsmathbb{Q}_{\tau,\zeta}$ of (\ref{eq:SDE-JD}) and  $\amsmathbb{Q}_{\tau,x}$ of (\ref{eq:SDEx}) in terms of the associated marked point process distribution.
\begin{lemma}\label{lem:JP-dist-char}
       Let $x\in E$ and $B\in\mathcal{B}(\amsmathbb{H}([\tau,T],E))$. Then the kernel $(x,B)\mapsto \amsmathbb{Q}_{\tau,x}(B)$ is a regular conditional probability of $\amsmathbb{Q}_{\tau,\zeta}(X^{\circ}\in B||X^{\circ}_{\tau}=x)$ and thus 
       \begin{align*}
              \amsmathbb{Q}_{\tau,\zeta}(X^{\circ}\in B|X^{\circ}_{\tau}=x)=\amsmathbb{Q}_{\tau,x}(X^{\circ}\in B),\quad \text{for }\zeta-a.e.\,x.
       \end{align*}
       Furthermore there exists a sequence of sets $C_n\in\mathcal{B}(([\tau,T]\times E)^{n})$ such that 
       \begin{align*}
              \amsmathbb{Q}_{\tau,x}(X^{\circ}\in B)&=\sum_{n=1}^{\infty}R^n_{\tau,x}((t_{i}^{\circ},y_{i}^{\circ})_{i=1\ldots,n-1}\in C_{n-1},t_n^{\circ}>T),
       \end{align*}
       where $R^n_{\tau,x}:E\times\mathcal{B}(([\tau,T]\times E)^{n+1})\rightarrow [0,1]$ is the probability kernel given by
       \begin{align*}
              R^n_{\tau,x}(\mathrm{d}((t_i,y_i)_{i\leq n})):=&\mathds{1}_{(\tau<t_1<\cdots<t_n)} r_{t_n}(y_{n-1},\mathrm{d}y_n)U_{t_{n-1},y_{n-1}}(\mathrm{d}t_n)\\
              &\ldots r_{t_1}(y_0,\mathrm{d}y_1)U_{t_0,y_0}(\mathrm{d}t_1)\delta_{\{\tau,x\}}(\mathrm{d}(t_0,y_0))
       \end{align*}
       and $U_{t,y}(\mathrm{d}s)=\lambda_s(y)e^{-\int_{t}^s\lambda_u(y)\mathrm{d}u}\mathrm{d}s$ for $s\geq t$.
\end{lemma}
\begin{proof}
       Let $B\in\mathcal{B}(\amsmathbb{H}([\tau,T],E))$. By construction there exists a unique $\amsmathbb{P}_{\tau,x}\in\mathcal{P}(\amsmathbb{M}([\tau,T],E))$ determined by $x$ and the compensating measure $L$ such that 
       \begin{align*}
            \amsmathbb{Q}_{\tau,x}(X^{\circ}\in B)=\amsmathbb{P}_{\tau,x}((t_i^{\circ},y_i^{\circ})_{i\in\amsmathbb{N}_0}\in \Gamma^*(B)). 
       \end{align*}
       Since $\Gamma^*(B)\in\mathcal{B}(\amsmathbb{M}([\tau,T],E))$, we can apply Lemma~2.2.21 and Exercise~2.2.16 in~\cite{Last&Brandt1995} to conclude that there exists a sequence of sets $C_{n}\in\mathcal{B}(([\tau,T]\times E)^{n+1})$ such that 
       \begin{align*}
             ((t_i^{\circ},y_i^{\circ})_{i\in\amsmathbb{N}_0}\in \Gamma^*(B))&\cap (t_{n-1}^{\circ}\leq T<t_{n}^{\circ})=((t_i^{\circ},y_i^{\circ})_{i=0,\ldots,n-1}\in C_{n-1})\cap(t_{n}^{\circ}>T)
       \end{align*}
       for all $n\in\amsmathbb{N}$. Since this partition is disjoint, we get by Theorem 8.1.2 of~\cite{Last&Brandt1995} that
       \begin{align*}
              \amsmathbb{P}_{\tau,x}((t_i^{\circ},y_i^{\circ})_{i\in\amsmathbb{N}_0}\in \Gamma^*(B))&=\sum_{n=1}^{\infty}\amsmathbb{P}_{\tau,x}(((t_i^{\circ},y_i^{\circ})_{i=0,\ldots,n-1}\in C_{n-1})\cap(t_{n}^{\circ}>T))\\
              &=\sum_{n=1}^{\infty}R^n_{\tau,x}((t_{i}^{\circ},y_{i}^{\circ})_{i=1\ldots,n-1}\in C_{n-1},t_n^{\circ}>T).
       \end{align*}
       This proves the second statement. By construction there exists a unique $\amsmathbb{P}_{\tau,\zeta}\in\mathcal{P}(\amsmathbb{M}([\tau,T],E))$ determined by $\zeta$ and the compensating measure $L$ such that 
       \begin{align*}
            \amsmathbb{Q}_{\tau,\zeta}(X^{\circ}\in B)=\amsmathbb{P}_{\tau,\zeta}((t_i^{\circ},y_i^{\circ})_{i\in\amsmathbb{N}_0}\in \Gamma^*(B)). 
       \end{align*}
       By Theorem~8.2.2 of~\cite{Last&Brandt1995}
       \begin{align*}
              \amsmathbb{P}_{\tau,\zeta}((t_i^{\circ},y_i^{\circ})_{i\in\amsmathbb{N}_0}\in \Gamma^*(B)|y_0^{\circ}=x)=\amsmathbb{P}_{\tau,x}((t_i^{\circ},y_i^{\circ})_{i\in\amsmathbb{N}_0}\in \Gamma^*(B)),\quad\text{for }\zeta-a.e.\,x.
       \end{align*}
       The the first statement follows.
\end{proof}

We can obtain a similar characterisation of the jump process distributions $\bar{\amsmathbb{Q}}_{\tau,\zeta}$ of (\ref{eq:DDSDE}) and $\ti{\amsmathbb{Q}}_{\tau,\zeta,x}$ of (\ref{eq:lDDSDE}). Since existence and uniqueness of $\bar{\amsmathbb{Q}}_{\tau,\zeta}$ implies existence and uniqueness of $\bar{p}^{\tau,\zeta}_t$ we can view $(\bar{p}^{\tau,\zeta}_t)_{t\in[\tau,T]}$ as given and fixed and treat (\ref{eq:DDSDE}) as a standard jump process. Thus by directly invoking Lemma~\ref{lem:JP-dist-char} we obtain the following result:

\begin{lemma}\label{lem:NJP-dist-char}
       Assume that $\bar{\amsmathbb{Q}}_{\tau,\zeta}$ exists and is unique. Let $x\in E$ and $B\in\mathcal{B}(\amsmathbb{H}([\tau,T],E))$. Then the kernel $(x,B)\mapsto \ti{\amsmathbb{Q}}_{\tau,\zeta,x}(B)$ is a regular conditional probability of $\bar{\amsmathbb{Q}}_{\tau,\zeta}(X^{\circ}\in B||X^{\circ}_{\tau}=x)$ and thus
       \begin{align*}
              \bar{\amsmathbb{Q}}_{\tau,\zeta}(X^{\circ}\in B|X^{\circ}_{\tau}=x)=\ti{\amsmathbb{Q}}_{\tau,\zeta,x}(X^{\circ}\in B),\quad \text{for }\zeta-a.e.\,x.
       \end{align*}
       Furthermore there exists a sequence of sets $C_n\in\mathcal{B}(([\tau,T]\times E)^{n})$ such that 
       \begin{align*}
              \ti{\amsmathbb{Q}}_{\tau,\zeta,x}(X^{\circ}\in B)&=\sum_{n=1}^{\infty}\ti{R}^n_{\tau,\zeta,x}((t_{i}^{\circ},y_{i}^{\circ})_{i=1\ldots,n-1}\in C_{n-1},t_n^{\circ}>T),
       \end{align*}
       where $\ti{R}^n_{\tau,\zeta,x}:E\times\mathcal{B}(([\tau,T]\times E)^{n+1})\rightarrow [0,1]$ is the probability kernel given by
       \begin{align*}
              \ti{R}^n_{\tau,\zeta,x}(\mathrm{d}((t_i,y_i)_{i\leq n})):=&\mathds{1}_{(\tau<t_1<\cdots<t_n)} r_{t_n}(y_{n-1},\bar{p}^{\tau,\zeta}_{t_n},\mathrm{d}y_n)\ti{U}_{t_{n-1},y_{n-1}}(\mathrm{d}t_n)\\
              &\ldots r_{t_1}(y_0,\bar{p}^{\tau,\zeta}_{t_1},\mathrm{d}y_1)\ti{U}_{t_0,y_0}(\mathrm{d}t_1)\delta_{\{\tau,x\}}(\mathrm{d}(t_0,y_0))
       \end{align*}
       and $\ti{U}_{t,y}(\mathrm{d}s)=\lambda_s(y,\bar{p}^{\tau,\zeta}_s)e^{-\int_{t}^s\lambda_u(y,\bar{p}^{\tau,\zeta}_u)\mathrm{d}u}\mathrm{d}s$ for $s\geq t$.
\end{lemma}

We are now ready to prove Theorem~\ref{th:DDSDE-cond}. Note that Theorem~\ref{th:SDE-cond} can be proven by a similar argument using Lemma~\ref{lem:JP-dist-char} instead of Lemma~\ref{lem:NJP-dist-char} in the proof below.

\begin{proof}[Proof of Theorem~\ref{th:DDSDE-cond}]
       Let $B\in\mathcal{B}(\amsmathbb{H}([\tau,T],E))$. As $(\ti{\amsmathbb{Q}}_{\tau,\zeta,x})_{x\in E}$ by Lemma~\ref{lem:NJP-dist-char} is a regular conditional distribution of $\amsmathbb{Q}_{\tau,\zeta}(X^{\circ}\in B|X^{\circ}_{\tau}=x)$ and since $\bar{X}^{\tau,\zeta}(\amsmathbb{P})=\bar{\amsmathbb{Q}}_{\tau,\zeta}$ we have that $(\ti{\amsmathbb{Q}}_{\tau,\zeta,x})_{x\in E}$ is a regular conditional distribution of $ \amsmathbb{P}((\bar{X}^{\tau,\zeta}_t)_{t\in[\tau,T]}\in B|\bar{X}^{\tau,\zeta}_{\tau}=x)$ as well. This proves the first assertion.

       Fix $0\leq\tau\leq s\leq t\leq T$, let $(\bar{\mathcal{F}}^{\tau,\zeta}_u)_{u\in[\tau,T]}$ be the natural filtration of the jump process $\bar{X}^{\tau,\zeta}$ and let $(\bar{\mathcal{G}}_u^{\tau,\zeta})_{u\in[\tau,T]}$ with $\bar{\mathcal{G}}_u^{\tau,\zeta}:=\sigma(\bar{Y}_0,((\bar{T}_n,\bar{Y}_n)\mathds{1}_{(\bar{T}_n\leq u)})_{n\in\amsmathbb{N}_0})$ be the filtration generated by the initial value and marked point process. By Theorem~2.5.10 in~\cite{Last&Brandt1995}, we have that $\bar{\mathcal{F}}_u^{\tau,\zeta}=\bar{\mathcal{G}}_u^{\tau,\zeta}$ for all $u\in[\tau,T]$. Furthermore we can write
       \begin{align*}
              \bar{X}_t^{\tau,\zeta}=\sum_{n=0}^{\infty}\bar{Y}_{s,n}\mathds{1}_{(\bar{T}_{s,n}\leq t <\bar{T}_{s,n+1})},
       \end{align*}
       where $\bar{T}_{s,n}:=\bar{T}_{N(s)+n}$ and $\bar{Y}_{s,n}:=\bar{Y}_{N(s)+n}$ are the $n$'th jump time and mark after time $s$, with $\bar{T}_{s,0}:=s$ and $\bar{Y}_{s,0}:=\bar{Y}_{N(s)}$. Thus for any $B\in\mathcal{B}(E)$ we obtain 
       \begin{align*}
              \amsmathbb{P}(\bar{X}_t^{\tau,\zeta}\in B|\bar{\mathcal{F}}_s^{\tau,\zeta})&=\amsmathbb{P}\Bigg(\bigcup_{n=1}^{\infty}(\bar{Y}_{s,n-1}\in B)\cap(\bar{T}_{s,n-1}\leq t <\bar{T}_{s,n})\Bigg|\bar{\mathcal{G}}_s^{\tau,\zeta}\Bigg)\\
              &=\sum_{n=1}^{\infty}\amsmathbb{P}(\bar{Y}_{s,n-1}\in B,\bar{T}_{s,n-1}\leq t < \bar{T}_{s,n}|\bar{\mathcal{G}}_s^{\tau,\zeta}).
       \end{align*}
       By Theorem~8.1.2 of~\cite{Last&Brandt1995}, it follows that 
       \begin{align*}
              \amsmathbb{P}(\bar{X}_t^{\tau,\zeta}\in B|\bar{\mathcal{F}}_s^{\tau,\zeta})=\sum_{n=1}^{\infty}\ti{R}^n_{s,\zeta,\bar{X}_s^{\tau,\zeta}}(t_{n-1}^{\circ}\leq t,y_{n-1}^{\circ}\in B,t_n^{\circ}>t),\quad \amsmathbb{P}-\text{a.s.},
       \end{align*}
       where $\ti{R}^n_{s,\zeta,x}$ for each $x\in E$ is given by 
       \begin{align*}
              \bar{R}^n_{s,\zeta,x}(\mathrm{d}((t_i,y_i)_{i\leq n})):=&\mathds{1}_{(\tau<t_1<\cdots<t_n)} r_{t_n}(y_{n-1},\bar{p}_{t_n}^{\tau,\zeta},\mathrm{d}y_n)\bar{U}_{t_{n-1},y_{n-1}}(\mathrm{d}t_n)\\
              &\ldots r_{t_1}(y_0,\bar{p}_{t_1}^{\tau,\zeta},\mathrm{d}y_1)\bar{U}_{t_0,y_0}(\mathrm{d}t_1)\delta_{\{\tau,x\}}(\mathrm{d}(t_0,y_0))
       \end{align*}
       and $\bar{U}_{t,y}(\mathrm{d}s)=\lambda_s(y,\bar{p}_s^{\tau,\zeta})e^{-\int_{t}^s\lambda_u(y,\bar{p}^{\tau,\zeta}_u)\mathrm{d}u}\mathrm{d}s$ for $s\geq t$. By Proposition~\ref{prop:DDSDE-forward} it holds that $\bar{p}^{\tau,\zeta}_t=\bar{p}_t^{s,\bar{p}^{\tau,\zeta}_s}$ for any $t\geq s$ and thus we can replace $\bar{p}^{\tau,\zeta}_t$ with $\bar{p}_t^{s,\bar{p}^{\tau,\zeta}_s}$. By Lemma~\ref{lem:NJP-dist-char} we can thus conclude that 
       \begin{align*}
              \amsmathbb{P}(\bar{X}_t^{\tau,\zeta}\in B|\bar{\mathcal{F}}_s^{\tau,\zeta})=\ti{\amsmathbb{Q}}_{s,p_s^{\tau,\zeta},X_s^{\tau,\zeta}}(X^{\circ}_t\in B),\quad \amsmathbb{P}-\text{a.s.}
       \end{align*}
       for any $B\in\mathcal{B}(E)$. An extension argument as in Remark~2.4 of~\cite{Rehmeier&Roeckner2024} via the finite dimensional distributions and the Monotone Class Theorem then yields the desired result.
\end{proof}

\section{Coupling results}
The first result we need is the following extension of the well-known (see Theorem~7.3 in Chapter~4 of~\cite{Thorisson2000}) fact that it is always possible to construct a coupling between two probability measures defined on the same space that is maximal with respect to the total variation distance.
\begin{lemma}\label{lem:coupling_kernel}
       Let $(S,\mathcal{B}(S))$ be a standard Borel space and let $p:S\times\mathcal{B}(S)\rightarrow [0,1]$ be a probability kernel. Then there always exists a probability kernel $\gamma:S^2\times\mathcal{B}(S^2)\rightarrow [0,1]$ such that
       \begin{itemize}
              \item[(i)] For any $x_1,x_2\in S$ and $A\in\mathcal{B}(S)$ it holds that $\pi(x_1,x_2,A\times S)=p(x_1,S)$ and $\pi(x_1,x_2,S\times A)=p(x_2,S)$.
              \item[(ii)] For any $x_1,x_2\in S$ we have that 
              \begin{align*}
                     \gamma(x_1,x_2,S^2_{x\neq y})=d_{TV}(p(x_1,\mathrm{d}y),p(x_2,\mathrm{d}y)),
              \end{align*}
              where $S^2_{x\neq y}:=\{(x,y)\in S^2|x\neq y\}$.
       \end{itemize}
\end{lemma}
\begin{proof}
       Set $q(x_1,x_2,\mathrm{d}y):=\frac{1}{2}(p(x_1,\mathrm{d}y)+p(x_2,\mathrm{d}y))$. Then by Theorem~58 in~\cite{Dellacherie1982}, which is a variant of the Radon-Nikodym Theorem, we have that there exists measurable functions $f_1,f_2:S^3\rightarrow [0,\infty)$ such that 
       \begin{align*}
              p(x_1,\mathrm{d}y)=f_1(x_1,x_2,y)q(x_1,x_2,\mathrm{d}y)\quad \text{and}\quad p(x_2,\mathrm{d}y)=f_2(x_1,x_2,y)q(x_1,x_2,\mathrm{d}y).
       \end{align*}
       Since $q(x_1,x_2,\mathrm{d}y)=q(x_2,x_1,\mathrm{d}y)$, we also get 
       \begin{align*}
              p(x_2,\mathrm{d}y)=f_1(x_2,x_1,y)q(x_1,x_2,\mathrm{d}y)\quad \text{and}\quad p(x_1,\mathrm{d}y)=f_2(x_2,x_1,y)q(x_1,x_2,\mathrm{d}y).
       \end{align*}
       We thus get that $f_2(x_1,x_2,y)=f_1(x_2,x_1,y)$ for $q(x_1,x_2,\cdot)$-a.a.\,$y\in S$. Hence we can conclude that there exists a measurable function $f:S^3\rightarrow [0,\infty)$ such that 
       \begin{align*}
              p(x_1,\mathrm{d}y)=f(x_1,x_2,y)q(x_1,x_2,\mathrm{d}y)\quad \text{and}\quad p(x_2,\mathrm{d}y)=f(x_2,x_1,y)q(x_1,x_2,\mathrm{d}y).
       \end{align*}
       Define now the measures
       \begin{align*}
              \nu(x_1,x_2,\mathrm{d}y):=\min(f(x_1,x_2,y),f(x_2,x_1,y))q(x_1,x_2,\mathrm{d}y)\\
              \nu^+(x_1,x_2,\mathrm{d}y):=(f(x_1,x_2,y)-f(x_2,x_1,y))^+q(x_1,x_2,\mathrm{d}y)\\
              \nu^-(x_1,x_2,\mathrm{d}y):=(f(x_1,x_2,y)-f(x_2,x_1,y))^-q(x_1,x_2,\mathrm{d}y)
       \end{align*}
       which all are measurable functions of $(x_1,x_2)$. Thus $\nu,\nu^+,\nu^-:S^2\times\mathcal{B}(S)\rightarrow [0,\infty)$ are kernels. As a consequence, we also have that 
       \begin{align*}
              (x_1,x_2)\mapsto \nu(x_1,x_2,S)=\int_S\min(f(x_1,x_2,y),f(x_2,x_1,y))q(x_1,x_2,\mathrm{d}y)
       \end{align*}
       is a measurable mapping. We can now define the probability measure $\gamma$ on $S^2$ as
       \begin{align*}
              \gamma(x_1,x_2,A\times B):=&\nu(x_1,x_2,A\cap B)+\frac{\nu^+(x_1,x_2,A)\nu^-(x_1,x_2,B)}{1-\nu(x_1,x_2,S)}
       \end{align*}
       for any $A,B\in\mathcal{B}(S)$, whenever $\nu(x_1,x_2,S)<1$ and 
       \begin{align*}
              \gamma(x_1,x_2,A\times B):=\nu(x_1,x_2,A\cap B),
       \end{align*}
       for any $A,B\in\mathcal{B}(S)$, whenever $\nu(x_1,x_2,S)=1$. As $(x_1,x_2)\mapsto\gamma(\cdot,x_1,x_2)$ can be seen as a composition of measurable mappings, it is measurable as well, and thus it is a probability kernel. Using the identities of Theorems~8.1 and~8.2 in Chapter~3 of~\cite{Thorisson2000} we can for each fixed $(x_1,x_2)$ identify $\gamma(x_1,x_2,\cdot)$ as the maximal coupling from Theorem~7.3 in Chapter~3 of~\cite{Thorisson2000} and thus property (i) and (ii) follow.
\end{proof}

The next result shows that it is possible to construct a coupling between two multivariate distributions which is maximal for the marginal distribution of a fixed subset of the coordinates.
\begin{lemma}\label{lem:coupling_marginal}
       Let $(S,\mathcal{B}(S))$ be standard Borel space. Let $(X^1,\ldots,X^n)$ have distribution $\zeta_1\in\mathcal{P}(S^n)$ and $(Y_1,\ldots,Y^n)$ have distribution $\zeta_2\in\mathcal{P}(S^n)$. Fix a $k\in\{1,\ldots,n\}$. Then there exists a coupling $\gamma\in\mathcal{P}(S^{2n})$ of $\zeta^1$ and $\zeta^2$ such that the marginal 
       \begin{align*}
              \gamma((\mathrm{d}x_{1:k}\times S^{n-k})\times(\mathrm{d}y_{1:k}\times S^{n-k}))
       \end{align*}
       is a maximal coupling of the marginals $\zeta_1^{k}:=\zeta_1(\mathrm{d}x_{1:k}\times S^{n-k})$ and $\zeta_2^{k}:=\zeta_1(\mathrm{d}y_{1:k}\times S^{n-k})$.
\end{lemma}
\begin{proof}
       By Theorem~7.3 in Chapter~3 of~\cite{Thorisson2000} there exists a measure $\nu_k\in\mathcal{P}(S^{2k})$ such that $\nu_k$ is the maximal coupling of $\zeta_1^{k}$ and $\zeta_2^{k}$. Let 
       \begin{align*}
              \zeta_1(\mathrm{d}x_{k+1:n}|x_{1:k})\quad \text{and}\quad \zeta_2(\mathrm{d}y_{k+1:n}|y_{1:k})
       \end{align*}
       denote regular versions of the conditional probability of $\zeta_1$ and $\zeta_2$ given the value of the first $k$ coordinates. Define now
       \begin{align*}
              \gamma(\mathrm{d}(x_{1:n},y_{1:n})):=\zeta_1(\mathrm{d}x_{k+1:n}|x_{1:k})\zeta_2(\mathrm{d}y_{k+1:n}|y_{1:k})\nu_k(\mathrm{d}(x_{1:k},y_{1:k})).
       \end{align*}
       Then we have that 
       \begin{align*}
              \gamma(S^n\times \mathrm{d}y_{1:n})&=\zeta_2(\mathrm{d}y_{k+1:n}|y_{1:k})\zeta_2^k(\mathrm{d}y_{1:k})=\zeta_2(\mathrm{d}y_{1:n})\\
              \gamma(\mathrm{d}x_{1:n}\times S^n)&=\zeta_1(\mathrm{d}x_{k+1:n}|x_{1:k})\zeta_1^k(\mathrm{d}x_{1:k})=\zeta_1(\mathrm{d}x_{1:n}),
       \end{align*}
       which confirms that $\gamma$ indeed is a coupling of $\zeta_1$ and $\zeta_2$. Furthermore 
       \begin{align*}
              \gamma((\mathrm{d}x_{1:k}\times S^{n-k})\times(\mathrm{d}y_{1:k}\times (S^{n-k}))=\nu_k(\mathrm{d}(x_{1:k},y_{1:k})),
       \end{align*}
       which confirms the last property.
\end{proof}

\begin{lemma}\label{lem:coupling_chaos}
       Let $(S,\mathcal{B}(S))$ be a standard Borel space. Let $(\zeta^n)_{n\in\amsmathbb{N}}$ with $\zeta^n\in\mathcal{P}(S^n)$ be $\zeta$-chaotic in total variation for $\zeta\in\mathcal{P}(S)$. Then there exists a sequence of probability measures $\gamma^n\in\mathcal{P}(S^{2n})$ such that 
       \begin{enumerate}
              \item[(i)] The marginal $\gamma^n(\mathrm{d}x_{1:n}\times S^n)$ is equal to $\zeta^n$.
              \item[(ii)] Let $\ti{\zeta}^n:=\gamma^n(S^n\times \mathrm{d}y_{1:n})$. The marginals $\ti{\zeta}^{n,i}:=\gamma^n(S^n\times(S^{i-1}\times\mathrm{d}y_i\times S^{n-i}))$ are equal to $\zeta$ for all $i\in\{1,\ldots,n\}$.
              \item[(iii)] The marginals $\gamma^n((S^{i-1}\times\mathrm{d}x_i\times S^{n-i})\times(S^{i-1}\times\mathrm{d}y_i\times S^{n-i}))$ are equal to a maximal coupling $\gamma_1^n(\mathrm{d}x_1,\mathrm{d}y_1)$ of $\zeta_1^{n,1}$ and $\zeta$ for all $i\in\{1,\ldots,n\}$.
              \item[(iv)] The sequence of marginals $(\ti{\zeta}^n)_{n\in\amsmathbb{N}}=(\gamma^n(S^n\times\mathrm{d}y_{1:n}))_{n\in\amsmathbb{N}}$ is $\zeta$-chaotic in total variation.
       \end{enumerate}
\end{lemma}
\begin{proof}
       By Theorem~7.3 in Chapter~3 of~\cite{Thorisson2000} there always exists a maximal coupling $\gamma_1^n(\mathrm{d}x,\mathrm{d}y)$ of $\zeta^{n,1}$ and $\zeta$. Define now 
       \begin{align*}
              \gamma(\mathrm{d}x_{1:n},\mathrm{d}y_{1:n}):=\prod_{i=1}^n\gamma_1^n(\mathrm{d}y_i|x_i)\zeta^n(\mathrm{d}x_{1:n}),
       \end{align*}
       where $\gamma_1^n(\mathrm{d}y_i|x_i)$ is a regular conditional distribution of $\gamma_1^n$ given $x_1$. Next we check the properties. 

       (i) is satisfied since 
       \begin{align*}
              \gamma^n(\mathrm{d}x_{1:n}\times S^n)=\prod_{i=1}^n\gamma_1^n(S|x_i)\zeta^n(\mathrm{d}x_{1:n})=\zeta^n(\mathrm{d}x_{1:n}).
       \end{align*}
       (ii) is satisfied since 
       \begin{align*}
              \gamma^n(S^n\times(S^{i-1}\times\mathrm{d}y_i\times S^{n-i}))&=\int_{S^n} \gamma_1^n(\mathrm{d}y_i|x_i)\prod_{j=1,j\neq i}^n\gamma_1^n(S|x_j)\zeta^n(\mathrm{d}x_{1:n})\\
              &=\int_{S}\gamma_1^n(\mathrm{d}y_i|x_i)\zeta^{n,1}(\mathrm{d}x_i)=\gamma_1^1(S,\mathrm{d}y_i)=\zeta(\mathrm{d}y_i).
       \end{align*}
       (iii) is satsified since for any $B_1,B_2\in\mathcal{B}(S)$
       \begin{align*}
              \gamma^n((S^{i-1}\times B_1\times S^{n-i})&\times(S^{i-1}\times B_2\times S^{n-i}))\\
              &=\int_{S^{i-1}\times B_1\times S^{n-i}} \gamma_1^n(B_2|x_i)\prod_{j=1,j\neq i}^n\gamma_1^n(S|x_j)\zeta^n(\mathrm{d}x_{1:n})\\
              &=\int_{B_1}\gamma_1^n(B_2|x_i)\zeta^{n,1}(\mathrm{d}x_i)=\gamma(B_1\times B_2).
       \end{align*}
       (iv) Set $\ti{\zeta}^{n,k}:=\gamma_1^n(S^n\times\mathrm{d}y_{1:k}\times S^{n-k})$. Let $((X_i)_{i=1,\ldots,n},(Y_i)_{i=1,\ldots,n})$ a family with distribution $\gamma$. Here $(X_i)_{i=1,\ldots,n}$ has distribution $\zeta^n$, while $(Y_i)_{i=1,\ldots,n}$ has distribution $\ti{\zeta}^n$. By the coupling representation of the total variation distance, we now have 
       \begin{align*}
              d_{TV}(\zeta^{n,k},\ti{\zeta}^{n,k})&\leq \amsmathbb{P}\Bigg(\bigcup_{i=1}^k(X_i\neq Y_i)\Bigg)\leq \sum_{i=1}^k\amsmathbb{P}(X_i\neq Y_i)\\
              &=k\gamma_1^n(S^2_{x\neq y})=k\mathrm{d}_{TV}(\zeta^{n,1},\zeta),
       \end{align*}
       which converges to zero for $n\rightarrow\infty$ as $(\zeta^n)_{n\in\amsmathbb{N}}$ is $\zeta$-chaotic in total variation. Next we have by the triangle inequality that 
       \begin{align*}
              d_{TV}(\ti{\zeta}^{n,k},\zeta^{\otimes k})\leq d_{TV}(\ti{\zeta}^{n,k},\zeta^{n,k})+d_{TV}(\zeta^{n,k},\zeta^{\otimes k}).
       \end{align*}
       As we have just shown, the first term goes to zero for $n\rightarrow\infty$, while the second term goes to zero since $(\zeta^n)_{n\in\amsmathbb{N}}$ is $\zeta$-chaotic in total variation.
\end{proof}

Finally we show that it is possible to extend a given coupling.
\begin{lemma}\label{lem:coupling_extension}
       Let $(S,\mathcal{B}(S))$ be standard Borel space and fix $m\in\amsmathbb{N}$. For $i\in\amsmathbb{N}$, let $\zeta^i_{\mathbf{x}}:S^m\times\mathcal{B}(S^i)\rightarrow [0,1]$ and $\xi^i_{\mathbf{x}}:S^m\times\mathcal{B}(S^i)\rightarrow [0,1]$ be two probability kernels and let $\nu^i_{\mathbf{x}}\in\mathcal{P}(S^{2i})$ be a coupling of $\zeta^i_{\mathbf{x}}$ and $\xi^i_{\mathbf{x}}$. Then for any $\mathbf{x}\in S^m$ there exists a probability measure $\gamma_{\mathbf{x}}\in\mathcal{P}(S^{m+i})$ such that 
       \begin{enumerate}
              \item[(i)] $\gamma_{\mathbf{x}}(\mathrm{d}x^{1:m+i}\times S^{m+i})=\delta_{\mathbf{x}}(\mathrm{d}x^{1:m})\otimes\zeta^i_{\mathbf{x}}(\mathrm{d}x^{m+1:m+i})$
              \item[(ii)] $\gamma^i_{\mathbf{x}}(S^{m+i}\times \mathrm{d}y^{1:m+i})=\delta_{\mathbf{x}}(\mathrm{d}y^{1:m})\otimes\xi^i_{\mathbf{x}}(\mathrm{d}y^{m+1:m+i})$
              \item[(iii)] Let $\gamma^{1:m}_{\mathbf{x}}(\mathrm{d}(x^{1:m},y^{1:m})):=\gamma_{\mathbf{x}}((\mathrm{d}x^{1:m}\times S^i)\times (\mathrm{d}y^{1:m}\times S^i))$. \\Then $\gamma^{1:m}_{\mathbf{x}}(S^m_{x\neq y})=d_{TV}(\delta_{\{\mathbf{x}\}},\delta_{\{\mathbf{x}\}})=0$.
              \item[(iv)] $\gamma_{\mathbf{x}}((S^m\times\mathrm{d}x^{1:i})\times(S^m\times\mathrm{d}y^{1:i}))=\nu^i_{\mathbf{x}}(\mathrm{d}(x^{1:i},y^{1:i}))$
       \end{enumerate}
\end{lemma}
\begin{proof}
       Define the measure $\gamma_{\mathbf{x}}$ as 
       \begin{align*}
              \gamma_{\mathbf{x}}(\mathrm{d}(x^{1:m+i},y^{1:m+i})):=\nu_{\mathbf{x}}^i(\mathrm{d}(x^{m+1:m+i},y^{m+1:m+i}))\delta_{\{\mathbf{x},\mathbf{x}\}}(\mathrm{d}(x^{1:m},y^{1:m})).
       \end{align*}
       (i): Since $\nu^i_{\mathbf{x}}$ is a coupling, we have that 
       \begin{align*}
              \gamma_{\mathbf{x}}(\mathrm{d}x^{1:m+i}\times S^{m+i})&=\nu_{\mathbf{x}}^i(\mathrm{d}x^{m+1:m+i}\times S^i)\delta_{\{\mathbf{x},\mathbf{x}\}}(\mathrm{d}x^{1:m}\times S^m)\\
              &=\delta_{\mathbf{x}}(\mathrm{d}x^{1:m})\zeta^i_{\mathbf{x}}(\mathrm{d}x^{m+1:m+i}).
       \end{align*}
       (ii): Proven by a similar argument as (i).

       (iii): Note that $\gamma^{1:m}_{\mathbf{x}}(\mathrm{d}(x^{1:m},y^{1:m}))=\delta_{\{\mathbf{x},\mathbf{x}\}}(\mathrm{d}(x^{1:m},y^{1:m}))$, which can be recognised as the maximal coupling of $\delta_{\{\mathbf{x}\}}$ with itself. The result follows.

       (iv): It holds that
       \begin{align*}
              \gamma_{\mathbf{x}}((S^m\times\mathrm{d}x^{1:i})\times(S^m\times\mathrm{d}y^{1:i}))&=\nu^i_{\mathbf{x}}(\mathrm{d}(x^{1:i},y^{1:i}))\delta_{\{\mathbf{x},\mathbf{x}\}}(S^m\times S^m)\\
              &=\nu^i_{\mathbf{x}}(\mathrm{d}(x^{1:i},y^{1:i})).
       \end{align*}
\end{proof}

\section{LLN and CLT for chaotic random variables}\label{sec:A_C}
Let $(S,\mathcal{S})$ be a measurable space and let $(\Omega,\mathcal{F},\amsmathbb{P})$ be a probability space. Consider the triangular array $((X^{1,n},\ldots,X^{n,n}))_{n\in\amsmathbb{N}}$ of random variables $X^{\ell,n}:\Omega\rightarrow S$, where each row $X^n=(X^{1,n},\ldots,X^{n,n})$ has distribution $X^n(\amsmathbb{P})=\amsmathbb{Q}^n\in\mathcal{P}(S)$. Furthermore let $X:\Omega\rightarrow S$ be a random variable with $X(\amsmathbb{P})=\amsmathbb{Q}\in\mathcal{P}(S)$.

\begin{proposition}\label{prop:Chaos-wk}
       Assume that $(\amsmathbb{Q}^n)_{n\in\amsmathbb{N}}$ is $\amsmathbb{Q}$-chaotic in total variation and that $f:S\rightarrow \amsmathbb{R}$ is measurable. If the sequence $(f(X^{1,n}))_{n\in\amsmathbb{N}}$ is uniformly integrable, then it holds that 
       \begin{align*}
              \lim_{n\rightarrow\infty}\amsmathbb{E}[f(X^{\ell,n})]=\amsmathbb{E}[f(X)]
       \end{align*}
\end{proposition}
\begin{proof}
       Due to chaosticity we have that $X^{\ell,n}(\amsmathbb{P})=\amsmathbb{Q}^{n,1}\stackrel{TV}{\rightarrow}\amsmathbb{Q}=X(\amsmathbb{P})$. The result follows directly from Corollary~2.9 in~\cite{FeinbergEtAl2016}.
\end{proof}

The next result is a law of large numbers:
\begin{proposition}\label{prop:Chaos-lln}
       Assume that $(\amsmathbb{Q}^n)_{n\in\amsmathbb{N}}$ is $\amsmathbb{Q}$-chaotic in total variation and let $f:S\rightarrow \amsmathbb{R}$ be measurable with
       \begin{align*}
              \sup_{n\in\amsmathbb{N}}\amsmathbb{E}[|f(X^{1,n})|^{2+\varepsilon}]<\infty,\quad \text{for some }\varepsilon>0
       \end{align*}
       Then it holds that
       \begin{align*}
              \lim_{n\rightarrow\infty}\amsmathbb{E}\bigg[\bigg(\frac{1}{n}\sum_{\ell=1}^n f(X^{\ell,n})-\amsmathbb{E}[f(X)]\bigg)^2\bigg]=0.
       \end{align*}
\end{proposition}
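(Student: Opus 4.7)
The plan is to exploit the exchangeability inherent in the definition of chaos to split the $L^2$-error into a variance term that vanishes through the usual $1/n$-factor and a covariance term that vanishes through chaosticity. Setting $m_n := \amsmathbb{E}[f(X^{1,n})]$ and $m := \amsmathbb{E}[f(X)]$, the moment bound on $|f(X^{1,n})|^{2+\varepsilon}$ implies that $(f(X^{1,n}))_{n\in\amsmathbb{N}}$ is uniformly integrable, so Proposition~\ref{prop:Chaos-wk} yields $m_n \to m$. Writing $\frac{1}{n}\sum_{\ell=1}^n f(X^{\ell,n}) - m = \big(\frac{1}{n}\sum_{\ell=1}^n f(X^{\ell,n}) - m_n\big) + (m_n - m)$ and using $(a+b)^2 \le 2a^2 + 2b^2$, the task reduces to showing that $\mathrm{Var}\big(\frac{1}{n}\sum_{\ell=1}^n f(X^{\ell,n})\big) \to 0$.

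By exchangeability (built into chaos as per Definition~\ref{def:chaos}), the variance decomposes as
\begin{align*}
\mathrm{Var}\bigg(\frac{1}{n}\sum_{\ell=1}^n f(X^{\ell,n})\bigg) = \frac{1}{n}\mathrm{Var}(f(X^{1,n})) + \frac{n-1}{n}\mathrm{Cov}(f(X^{1,n}),f(X^{2,n})).
\end{align*}
For the first term, Jensen's inequality applied to $x \mapsto x^{(2+\varepsilon)/2}$ gives $\amsmathbb{E}[f(X^{1,n})^2] \le \amsmathbb{E}[|f(X^{1,n})|^{2+\varepsilon}]^{2/(2+\varepsilon)}$, so $\sup_n \mathrm{Var}(f(X^{1,n})) < \infty$ and the term vanishes at rate $1/n$.

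For the covariance term, chaosticity gives $\amsmathbb{Q}^{n,2} \to \amsmathbb{Q}\otimes\amsmathbb{Q}$ weakly. The map $(x_1,x_2) \mapsto f(x_1)f(x_2)$ is continuous $\amsmathbb{Q}\otimes\amsmathbb{Q}$-almost everywhere, since the discontinuity set of $f$ has $\amsmathbb{Q}$-measure zero, so the product discontinuity set $D\times S \cup S\times D$ has $\amsmathbb{Q}^{\otimes 2}$-measure zero. The main, and really only nontrivial, step will be verifying uniform integrability of $(f(X^{1,n})f(X^{2,n}))_{n\in\amsmathbb{N}}$: by Cauchy-Schwarz,
\begin{align*}
\amsmathbb{E}\big[|f(X^{1,n})f(X^{2,n})|^{1+\varepsilon/2}\big] \le \amsmathbb{E}\big[|f(X^{1,n})|^{2+\varepsilon}\big]^{1/2}\amsmathbb{E}\big[|f(X^{2,n})|^{2+\varepsilon}\big]^{1/2},
\end{align*}
which is uniformly bounded in $n$, so the family is bounded in $L^{1+\varepsilon/2}$ and hence uniformly integrable. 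The same argument that drives Proposition~\ref{prop:Chaos-wk}, applied to the two-marginal, then gives $\amsmathbb{E}[f(X^{1,n})f(X^{2,n})] \to m^2$. Combined with $m_n^2 \to m^2$, this yields $\mathrm{Cov}(f(X^{1,n}),f(X^{2,n})) \to 0$, and inserting both bounds into the decomposition completes the argument.
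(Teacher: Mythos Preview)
Your proof is correct and follows essentially the same approach as the paper: expand the square via exchangeability into a diagonal variance term and an off-diagonal covariance term, then control each using the moment bound and chaosticity combined with uniform integrability. The only cosmetic differences are that the paper centers directly at $\mu=\amsmathbb{E}[f(X)]$ (rather than first at $m_n$ with a separate $m_n\to m$ step) and verifies uniform integrability of the product via the power-mean inequality of Lemma~\ref{A:lem:mod_minkowski} rather than Cauchy--Schwarz.
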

\begin{proof}
       The proof is based on part of the proof of Theorem 3.2 in~\cite{Gottlieb1998}. Set  $\mu:=\amsmathbb{E}[f(X)]$.
       \begin{align*}
              \amsmathbb{E}\bigg[\bigg(\frac{1}{n}\sum_{\ell=1}^n f(X^{\ell,n})-\mu\bigg)^2\bigg]&=\frac{1}{n^2}\sum_{i,j=1}^n\amsmathbb{E}[(f(X^{i,n})-\mu)(f(X^{j,n})-\mu)]\\
              &=\frac{1}{n}\amsmathbb{E}[(f(X^{1,n})-\mu)^2]\\
              &+\frac{n-1}{n}\amsmathbb{E}[(f(X^{1,n})-\mu)(f(X^{2,n})-\mu)],
       \end{align*}
       The last equality is due the fact that all individuals are identically distributed. Our assumptions, Lemma~\ref{A:lem:mod_minkowski} and (3.18) on p. 31 of~\cite{Billingsley1999} imply that $(f(X^{1,n})-\mu)^2$ and $(f(X^{1,n})-\mu)(f(X^{2,n})-\mu)$ are uniformly integrable sequences. Thus by Definition~\ref{def:chaos} and Proposition~\ref{prop:Chaos-wk}, it holds that 
       \begin{align*}
              \lim_{n\rightarrow\infty}\frac{1}{n}\amsmathbb{E}[(f(X^{1,n})-\mu)^2]&=\bigg(\lim_{n\rightarrow\infty}\frac{1}{n}\bigg)\bigg(\lim_{n\rightarrow\infty}\amsmathbb{E}[(f(X^{1,n})-\mu)^2]\\
              &=0\cdot\amsmathbb{E}[(f(X)-\mu)^2]=0
       \end{align*}
       and
       \begin{align*}
              \lim_{n\rightarrow\infty}\amsmathbb{E}[(f(X^{1,n})-\mu)(f(X^{2,n})-\mu)]=2(\amsmathbb{E}[f(X)]-\mu)=0.
       \end{align*}
       The result follows.
\end{proof}

Now set $\mu_n:=\amsmathbb{E}[f(X^{1,n}]$ and $\sigma^2_n:=\amsmathbb{E}[(f(X^{1,n})-\mu)^2]$ and similarly set $\mu:=\amsmathbb{E}[f(X)]$ and $\sigma^2:=\amsmathbb{E}[(f(X)-\mu)^2]$. It is also possible (under additional assumptions) to derive a central limit theorem. 

\begin{proposition}\label{prop:Chaos-clt}
Assume that $(\amsmathbb{Q}^n)_{n\in\amsmathbb{N}}$ is $\amsmathbb{Q}$-chaotic in total variation and that 
\begin{align*}
       \lim_{n\rightarrow\infty}n\amsmathbb{E}[(f(X^{1,n})-\mu_n)(f(X^{2,n})-\mu_n)]=0.
\end{align*}
Let $f:S\rightarrow \amsmathbb{R}$ be measurable with
\begin{align*}
       \sup_{n\in\amsmathbb{N}}\amsmathbb{E}[|f(X^{1,n})|^{4+\varepsilon}]<\infty,\quad \text{for some }\varepsilon>0
\end{align*}
Then
\begin{align*}
       \frac{1}{\sqrt{n}}\sum_{i=1}^n \frac{f(X^{i,n})-\mu_n}{\sigma_n}\stackrel{D}{\rightarrow}N\big(0,1\big).
\end{align*}
If furthermore $\lim_{n\rightarrow\infty}\sqrt{n}(\mu_n-\mu)=0$, then
\begin{align*}
       \frac{1}{\sqrt{n}}\sum_{i=1}^{n}\frac{f(X^{i,n})-\mu}{\sigma}\stackrel{D}{\rightarrow}N(0,1).
\end{align*}
\end{proposition}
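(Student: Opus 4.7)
The plan is to establish the first limit via the method of moments and then deduce the second limit from the first by Slutsky's theorem. First I would set up and reduce. By chaoticity, $f(X^{1,n}) \stackrel{D}{\rightarrow} f(X)$ (using $\amsmathbb{Q}$-a.s.\,continuity of $f$ and the continuous mapping theorem of~\cite{Billingsley1999}), and the $(4{+}\varepsilon)$-moment bound gives uniform integrability of $f(X^{1,n})$ and $f(X^{1,n})^2$; hence $\mu_n \rightarrow \mu$ and $\sigma_n^2 \rightarrow \sigma^2$. Under the extra hypothesis $\sqrt{n}(\mu_n - \mu) \rightarrow 0$, the difference of the two normalised sums in the statement is a deterministic shift $\sqrt{n}(\mu - \mu_n)/\sigma \rightarrow 0$ combined with the multiplicative factor $\sigma_n/\sigma \rightarrow 1$, so Slutsky's theorem indeed reduces the second claim to the first.

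Set $Y_{\ell,n} := (f(X^{\ell,n}) - \mu_n)/\sigma_n$ and $S_n := n^{-1/2} \sum_{\ell=1}^n Y_{\ell,n}$. Exchangeability immediately gives
\begin{align*}
       \amsmathbb{E}[S_n^2] = 1 + \frac{n-1}{n\sigma_n^2}\cdot n\,\mathrm{Cov}(f(X^{1,n}), f(X^{2,n})) \longrightarrow 1
\end{align*}
by the covariance decay assumption and $\sigma_n \rightarrow \sigma > 0$. For the full CLT I would establish $\amsmathbb{E}[S_n^k] \rightarrow \amsmathbb{E}[Z^k]$ for every $k \in \amsmathbb{N}$, where $Z \sim N(0,1)$; convergence of all moments together with the classical Carleman uniqueness of the Gaussian moment problem then yields the weak limit.

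The moment computation proceeds by decomposing $\amsmathbb{E}[S_n^k]$ as $n^{-k/2} \sum_{\pi} n(n-1)\cdots(n-|\pi|+1)\, E_\pi(n)$, where the sum ranges over set partitions $\pi$ of $\{1,\dots,k\}$, $|\pi|$ is the number of blocks, and $E_\pi(n) := \amsmathbb{E}\bigl[\prod_{b \in \pi} Y_{\ell_b,n}^{|b|}\bigr]$ for any choice of distinct representatives $\ell_b$ (well-defined by exchangeability). The combinatorial prefactor is of order $n^{|\pi|-k/2}$, so the three relevant regimes are as follows. Pair partitions ($|\pi|=k/2$, $k$ even): chaoticity combined with uniform integrability of the involved products (provided directly by the $(4{+}\varepsilon)$-moment bound for $k \leq 4$ and by a truncation step otherwise) forces $E_\pi(n) \rightarrow 1$, and the $(k-1)!!$ such partitions sum exactly to $\amsmathbb{E}[Z^k]$. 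Partitions with $|\pi| < k/2$ contain a block of size at least three; the coefficient $n^{|\pi|-k/2}$ vanishes while $E_\pi(n)$ stays bounded, killing the contribution. Partitions with $|\pi| > k/2$ necessarily contain at least one singleton, which is where the hypothesis $n\,\mathrm{Cov} \rightarrow 0$ enters decisively. For odd $k$ all surviving partitions fall into the last class, giving limit zero and matching the odd moments of $Z$.

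The hard part will be the singleton partitions: chaoticity alone yields $E_\pi(n) \rightarrow 0$ without a rate, whereas the combinatorial weight may grow like $n^{|\pi|-k/2}$, so one genuinely needs to translate the pairwise decay hypothesis into a rate for these more complex expectations. The plan is to iteratively condition on the non-singleton variables and exploit exchangeability to reduce each singleton expectation to a sum of terms each of which is, up to bounded factors, a multiple of $\mathrm{Cov}(f(X^{1,n}), f(X^{2,n}))$, whose rate is precisely controlled by $n\,\mathrm{Cov} \rightarrow 0$. A secondary technical obstacle is that the $(4{+}\varepsilon)$-moment bound does not directly deliver uniform integrability of products of more than four factors; this is handled by the standard truncation argument, passing to bounded versions of $Y_{\ell,n}$ (for which all moments are available) and controlling the truncation error uniformly in $n$ by the $(4{+}\varepsilon)$-th moment.
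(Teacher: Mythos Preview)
Your Slutsky reduction for the second assertion is correct and matches the paper. The difficulty is in the first assertion, where your method-of-moments plan has a genuine gap.

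The problem is the singleton partitions. Already at $k=4$ you need $n\,\amsmathbb{E}[Y_{1,n}^2 Y_{2,n} Y_{3,n}]\to 0$ and $n^2\,\amsmathbb{E}[Y_{1,n} Y_{2,n} Y_{3,n} Y_{4,n}]\to 0$, and at $k=3$ you need $n^{1/2}\,\amsmathbb{E}[Y_{1,n}^2 Y_{2,n}]\to 0$ and $n^{3/2}\,\amsmathbb{E}[Y_{1,n} Y_{2,n} Y_{3,n}]\to 0$. None of these rates follows from the single hypothesis $n\,\amsmathbb{E}[Y_{1,n}Y_{2,n}]\to 0$. Your proposed reduction (``condition on the non-singleton variables and reduce to a multiple of $\mathrm{Cov}$'') does not go through: writing $\amsmathbb{E}[Y_{1,n}^2 Y_{2,n}]=\amsmathbb{E}\bigl[Y_{1,n}^2\,\amsmathbb{E}[Y_{2,n}\mid Y_{1,n}]\bigr]$ produces an expression involving the \emph{regression} of $Y_{2,n}$ on $Y_{1,n}$, not the covariance, and there is no algebraic identity linking the two. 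In a de Finetti picture with mixing variable $\theta$ one has $\amsmathbb{E}[Y_1Y_2]=\amsmathbb{E}[m_1(\theta)^2]$ but $\amsmathbb{E}[Y_1Y_2Y_3Y_4]=\amsmathbb{E}[m_1(\theta)^4]$, and control of the second moment of $m_1$ at rate $o(n^{-1})$ says nothing about its fourth moment at rate $o(n^{-2})$. So the moment method, as you outline it, cannot close even for $k=4$; the truncation device you mention for $k>4$ does not help, since the same singleton rates are missing for the truncated variables.

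The paper avoids all of this by invoking Theorem~2 of \cite{BlumEtAl1958}, a CLT for exchangeable arrays whose proof is via characteristic functions rather than moments. That theorem needs exactly three inputs: the covariance rate $n\,\amsmathbb{E}[Y_{1,n}Y_{2,n}]\to 0$, the cross second moment $\amsmathbb{E}[Y_{1,n}^2 Y_{2,n}^2]\to 1$, and boundedness of $\amsmathbb{E}[|Y_{1,n}|^3]$. The last two follow from chaoticity plus the $(4{+}\varepsilon)$-moment bound (via uniform integrability), and the first is assumed. This is why the paper's hypotheses are precisely what they are; they are tailored to the Blum--Chernoff--Rosenblatt--Teicher conditions, not to a moment computation.
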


\begin{proof}
       Since $\{X^{\ell,n};\ell=1,\ldots,n\}_{n\in\amsmathbb{N}}$ are exchangeable and so are $\{f(X^{\ell,n});\ell=1,\ldots,n\}_{n\in\amsmathbb{N}}$. Therefore we would like to apply a CLT for exchangeable processes (see Theorem 2 in~\cite{BlumEtAl1958}).

       The first step is to prove the following three equalities:
       \begin{align*}
              \lim_{n\rightarrow\infty}\mu_n&=\lim_{n\rightarrow\infty}\amsmathbb{E}[f(X^{1,n})]=\amsmathbb{E}[f(X)]=:\mu\\
              \lim_{n\rightarrow\infty}\sigma_n&=\lim_{n\rightarrow\infty}\amsmathbb{E}[(f(X^{1,n})-\mu_n)^2]=\amsmathbb{E}[(f(X)-\mu)^2]=:\sigma\\
              \lim_{n\rightarrow\infty}\rho_n&=\lim_{n\rightarrow\infty}\amsmathbb{E}[f(X^{1,n})f(X^{2,n})]=\amsmathbb{E}[f(X)]^2=\mu^2.
       \end{align*}
       Due to our Assumptions, Lemma~\ref{A:lem:mod_minkowski} and (3.18) on p.\,31 of~\cite{Billingsley1999} all involved sequences are uniformly integrable. Thus all three identities are implied by chaosticity and Proposition~\ref{prop:Chaos-wk}.
       Using the same argument and the three identities it follows that
       \begin{align*}
              \lim_{n\rightarrow\infty} \amsmathbb{E}\bigg[\bigg(\frac{f(X^{1,n})-\mu_n}{\sigma_n}\bigg)^2\bigg(\frac{f(X^{2,n})-\mu_n}{\sigma_n}\bigg)^2\bigg]=\frac{1}{\sigma^4}\amsmathbb{E}[(f(X)-\mu)^2]^2=1.
       \end{align*}
       So condition (2) of Theorem 2 in~\cite{BlumEtAl1958} is satisfied. 
       
       A similar argument yields
       \begin{align*}
              \lim_{n\rightarrow\infty} \amsmathbb{E}\bigg[\bigg|\frac{f(X^{1,n})-\mu_n}{\sigma_n}\bigg|^3\bigg]&=\frac{1}{\sigma^3}\amsmathbb{E}[(f(X)-\mu)^3]<\infty,
       \end{align*}
       and therefore condition (3) of Theorem 2 in~\cite{BlumEtAl1958} is satisfied. Finally we have that 
       \begin{align*}
              \lim_{n\rightarrow\infty} \amsmathbb{E}\bigg[\bigg(\frac{f(X^{1,n})-\mu_n}{\sigma_n}\bigg)\bigg(\frac{f(X^{2,n})-\mu_n}{\sigma_n}\bigg)\bigg]=\frac{1}{\sigma^2}\big(\amsmathbb{E}[(f(X)-\mu)] \big)^2=0.
       \end{align*}
       This shows that we have the right convergence, which is required for condition (1) of Theorem 2 in~\cite{BlumEtAl1958}, but not necessarily the required convergence speed. Therefore we had to assume this in addition to the chaosticity assumption. Now all conditions of Theorem 2 in~\cite{BlumEtAl1958} are satisfied and thus we obtain
       \begin{align*}
              \frac{1}{\sqrt{n}}\sum_{\ell=1}^n \frac{f(X^{\ell,n})-\mu_n}{\sigma_n}\stackrel{D}{\rightarrow}N\big(0,1\big).
       \end{align*}
       Furthermore as $\lim_{n\rightarrow\infty}\mu_n=\mu$ and $\lim_{n\rightarrow\infty}\sigma_n=\sigma$ and since we have assumed that $\lim_{n\rightarrow\infty}\sqrt{n}(\mu_n-\mu)=0$, an application of Slutsky's Lemma yields
       \begin{align*}
              \frac{1}{\sqrt{n}}\sum_{\ell=1}^{n}\frac{f(X^{\ell,n})-\mu}{\sigma}=\frac{\sigma_n}{\sigma}\bigg(\sum_{\ell=1}^n \frac{f(X^{\ell,n})-\mu_n}{\sigma_n}\bigg)+\frac{\sqrt{n}(\mu_n-\mu)}{\sigma}\stackrel{D}{\rightarrow}N(0,1).
       \end{align*}
\end{proof}

\section{Auxiliary results}\label{sec:A_E}

\begin{lemma}[Gronwall's inequality]
       Let $u$, $v$ be non-negative real functions on the interval $[\tau,T]$, with $v$ integrable and $u$ bounded and measurable. Let $c\geq 0$ and assume that 
       \begin{align*}
              u(t)\leq c + \int_{\tau}^t v(s)u(s)\mathrm{d}s.
       \end{align*}
       Then it holds that 
       \begin{align*}
              u(t)\leq ce^{\int_{\tau}^t v(s)\mathrm{d}s}.
       \end{align*}
\end{lemma}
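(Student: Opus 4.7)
The plan is to use the standard Bihari--LaSalle trick: introduce the upper envelope
\[
w(t) := c + \int_{\tau}^{t} v(s)\,u(s)\,\mathrm{d}s,
\]
so that by hypothesis $u(t) \le w(t)$ for every $t \in [\tau,T]$. Since $v$ is integrable and $u$ is bounded and measurable, the product $vu$ is integrable, hence $w$ is absolutely continuous on $[\tau,T]$ with $w'(t) = v(t)\,u(t)$ almost everywhere, and $w(\tau) = c$.

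Next I would compare $w$ to its exponential upper bound. Substituting $u \le w$ into the derivative gives the differential inequality $w'(t) \le v(t)\,w(t)$ a.e. Because $v$ is integrable, the function $\Phi(t) := \exp\!\bigl(-\int_{\tau}^{t} v(s)\,\mathrm{d}s\bigr)$ is absolutely continuous and strictly positive with $\Phi'(t) = -v(t)\Phi(t)$ a.e. Applying the product rule for absolutely continuous functions to $w\Phi$ yields
\[
\frac{\mathrm{d}}{\mathrm{d}t}\bigl(w(t)\Phi(t)\bigr) = \bigl(w'(t) - v(t)w(t)\bigr)\Phi(t) \le 0 \quad\text{a.e.}
\]
Integrating from $\tau$ to $t$ (using absolute continuity) gives $w(t)\Phi(t) \le w(\tau)\Phi(\tau) = c$, that is, $w(t) \le c\,\exp\!\bigl(\int_{\tau}^{t} v(s)\,\mathrm{d}s\bigr)$. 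Combining with $u(t)\le w(t)$ delivers the claim.

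There is no real obstacle here; the only subtle point is ensuring the calculus steps are valid under the stated regularity. Because $v$ is merely integrable (not continuous), one should invoke the fundamental theorem of calculus for absolutely continuous functions rather than classical differentiation, and use the product rule in its absolutely continuous form. An alternative, which avoids any appeal to absolute continuity, is an iteration argument: substitute the inequality into itself $n$ times to obtain
\[
u(t) \le c\sum_{k=0}^{n-1} \frac{1}{k!}\Bigl(\int_{\tau}^{t} v(s)\,\mathrm{d}s\Bigr)^{k} + \frac{\|u\|_{\infty}}{n!}\Bigl(\int_{\tau}^{t} v(s)\,\mathrm{d}s\Bigr)^{n},
\]
and let $n\to\infty$, using boundedness of $u$ to kill the remainder and recognising the series as $c\exp\!\bigl(\int_{\tau}^{t} v(s)\,\mathrm{d}s\bigr)$. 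Either route is short and elementary.
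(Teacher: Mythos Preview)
Your proof is correct; both the integrating-factor argument and the iteration alternative you sketch are standard, complete proofs of Gr\"onwall's inequality under the stated hypotheses. The paper itself does not supply a proof of this lemma at all---it is listed as an auxiliary result without argument---so there is nothing to compare against.
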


\begin{lemma}\label{A:lem:mod_minkowski}
       Let $(x_i)_{i=1,\ldots,n}\subset\amsmathbb{R}$ and let $p>1$. Then 
       \begin{align*}
              \bigg|\sum_{i=1}^n x_i\bigg|^p\leq n^{p-1}\sum_{i=1}^{n}|x_i|^p
       \end{align*}
\end{lemma}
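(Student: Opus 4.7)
The plan is to invoke the convexity of the function $\phi(x)=|x|^p$ on $\amsmathbb{R}$, which holds for $p>1$, and apply Jensen's inequality to the empirical distribution assigning mass $1/n$ to each point $x_i$. Jensen's inequality in this setting reads $\phi\bigl(\sum_i w_i x_i\bigr)\leq \sum_i w_i \phi(x_i)$ with $w_i=1/n$, giving
\begin{align*}
\bigg|\frac{1}{n}\sum_{i=1}^n x_i\bigg|^p \leq \frac{1}{n}\sum_{i=1}^n |x_i|^p.
\end{align*}
Multiplying both sides by $n^p$ and using $n^p\cdot\frac{1}{n}=n^{p-1}$ yields exactly the claimed bound.

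An equivalent, slightly more direct route is via Hölder's inequality with conjugate exponents $p$ and $q=p/(p-1)$. Starting from the triangle inequality $|\sum_i x_i|\leq \sum_i |x_i|$ and applying Hölder to the right-hand side with the factorization $|x_i|=1\cdot|x_i|$ gives
\begin{align*}
\sum_{i=1}^n |x_i| \leq \bigg(\sum_{i=1}^n 1^q\bigg)^{1/q}\bigg(\sum_{i=1}^n |x_i|^p\bigg)^{1/p} = n^{(p-1)/p}\bigg(\sum_{i=1}^n |x_i|^p\bigg)^{1/p}.
\end{align*}
Raising both sides to the $p$-th power produces the inequality.

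There is no substantive obstacle here; the result is the classical power-mean / $\ell^p$-versus-$\ell^1$ comparison on $n$ points, and either derivation above is a short routine calculation. I would present the Jensen version since it is the shortest and makes the role of convexity most transparent.
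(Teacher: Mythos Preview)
Your proof is correct; both the Jensen argument and the H\"older argument are standard and valid derivations of this classical power-mean inequality. The paper itself states this lemma without proof, so there is nothing to compare against—your write-up would in fact fill a gap the paper leaves open.
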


\end{document}